\newcommand{\lv}[1]{#1}
\newcommand{\sv}[1]{}
\newcommand{\lsv}[2]{\lv{#1}\sv{#2}}
\newcommand{\Nat}{\mathbb{N}}
\newcommand{\hy}{\hbox{-}\nobreak\hskip0pt}
\newcommand{\bigoh}{\mathcal{O}}
\def\hy{\hbox{-}\nobreak\hskip0pt} 
\newcommand{\SB}{\{\,} \newcommand{\SM}{\;{|}\;} \newcommand{\SE}{\,\}}
\newcommand{\Card}[1]{|#1|}
\newcommand{\CC}{\mathcal{C}} 
\newcommand{\FF}{\mathcal{F}} 
\newcommand{\RR}{\mathcal{R}} 
\def\mx#1{\mbox{\boldmath$#1$}}
\def\MS#1{\mbox{MSO}}
\newcommand{\MSO}{\mbox{MSO}\xspace}
\def\HH{{\mathcal H}}
\def\PP{{\mathcal P}}
\spnewtheorem{ourfact}{Fact}{\bfseries}{\itshape}
\spnewtheorem{observation}{Observation}{\bfseries}{\itshape}
\def\sm{split-module}
\def\GLT{graph-labeled tree}
\def\wsm{well-structured modulator}
\newcommand{\MSOMC}[1]{\textsc{MSO-MC${}_{#1}$}}
\newcommand{\MSOOPT}[2]{\textsc{MSO-Opt${}^{#1}_{#2}$}}
\newcommand{\wsn}{\text{\normalfont\slshape wsn}}
\newcommand{\rw}{\text{\normalfont\slshape rw}}
\newcommand{\md}{\text{\normalfont\slshape mod}}
\begin{document}

\title{Solving Problems on Graphs of High Rank-Width\thanks{Supported by the Austrian Science Fund (FWF), project P26696.}}

\author{Eduard Eiben \and Robert Ganian \and Stefan Szeider}

\institute{Algorithms and Complexity Group, Institute of Computer Graphics and Algorithms\\ TU Wien,
  Vienna, Austria}

\maketitle

\begin{abstract}
\noindent A modulator of a graph $G$ to a specified graph class $\HH$ is a set of vertices whose deletion puts $G$ into $\HH$. The cardinality of a modulator to various graph classes has long been used as a structural parameter which can be exploited to obtain FPT algorithms for a range of hard problems.
  Here we investigate what happens when a graph contains a modulator which is large but ``well-structured'' (in the sense of having bounded rank-width). Can such modulators still be exploited to obtain efficient algorithms? And is it even possible to find such modulators efficiently?
  
  We first show that the parameters derived from such well-structured modulators are strictly more general than the cardinality of modulators and rank-width itself. Then, we develop an FPT algorithm for finding such well-structured modulators to any graph class which can be characterized by a finite set of forbidden induced subgraphs. We proceed by showing how well-structured modulators can be used to obtain efficient parameterized algorithms for \textsc{Minimum Vertex Cover} and \textsc{Maximum Clique}. Finally, we use the concept of well-structured modulators to develop an algorithmic meta-theorem for efficiently deciding problems expressible in Monadic Second Order (MSO) logic, and prove that this result is tight in the sense that it cannot be generalized to LinEMSO problems.
\end{abstract}


\section{Introduction}

Many important graph problems are known to be NP-hard, and yet admit efficient solutions in practice due to the inherent structure of instances. The parameterized complexity paradigm~\cite{DowneyFellows13,Niedermeier06} allows a more refined analysis of the complexity of various problems and hence enables the design of more efficient algorithms. In particular, given an instance of size $n$ and a numerical parameter $k$ which captures some property of the instance, one asks whether the instance can be solved in time $f(k)\cdot n^{\bigoh(1)}$. Parameterized problems which admit such an algorithm are called \emph{fixed parameter tractable} (FPT), and the algorithms themselves are often called FPT \emph{algorithms}.

Given the above, it is natural to ask what kind of structure can be exploited to obtain FPT algorithms for a wide range of natural graph problems. There are two very successful, mutually incomparable approaches which tackle this question.

\begin{enumerate}[leftmargin=*]
\item[A.] \emph{Width measures.} Treewidth has become an extremely successful structural parameter with a wide range of applications in many fields of computer science. However, treewidth is not suitable for use in dense graphs. This led to the development of algorithms that use the parameter clique-width~\cite{CourcelleMakowskyRotics00}, which can be viewed as a relaxation of treewidth towards dense graphs. However, while there are efficient theoretical algorithms for computing tree-decompositions, this is not the case for decompositions for clique-width. This shortcoming has later been overcome by the notion of rank-width~\cite{OumSeymour06}, which improves upon clique-width by allowing the efficient computation of rank-decompositions while retaining all of the positive algorithmic results previously obtained for clique-width.
\item[B.] \sloppypar \emph{Modulators.} A modulator is a vertex set whose deletion places the considered graph into some specified graph class. A substantial amount of research has been placed into finding as well as exploiting small modulators to various graph classes~\cite{GajarskyHlinenyObdrzalek13,BodlaenderJansenKratsch13}. Popular notions such as vertex cover and feedback vertex set are also special cases of modulators (to the classes of edgeless graphs and forests, respectively). One advantage of parameterizing by the size of modulators is that it allows us to build on the vast array of research of polynomial-time algorithms on specific graph classes (see, for instance,~\cite{CorneilLerchsBurlingham81,LokshtanovVatshelleVillanger14}). In other fields of computer science, modulators are often called \emph{backdoors} and have been successfully used to obtain efficient algorithms for, e.g., Satisfiability 
and Constraint Satisfaction~\cite{GaspersMisraOSZ14}.
\end{enumerate}
Our primary goal in this paper is to push the boundaries of tractability for a wide range of problems above the state of the art for both of these approaches. We summarize our contributions below.\smallskip

\begin{enumerate}[leftmargin=*,nosep]
\item We introduce a family of ``hybrid'' parameters that combine approaches A and B. 
\end{enumerate}
\smallskip 
Given a graph $G$ and a fixed graph class $\HH$, the new parameters capture (roughly speaking) the minimum rank-width of any modulator of $G$ into $\HH$. We call this the \emph{well-structure number} of $G$ or $\wsn^\HH(G)$. The formal definition of the parameter also relies on the notion of \emph{split decompositions}~\cite{Cunningham82} and is provided in Section~\ref{sec:wsm}, where we also prove that for any graph class $\HH$ of unbounded rank-width, $\wsn^\HH$ is not larger and in many cases much smaller than both rank-width and the size of a modulator to $\HH$. 
\smallskip

\begin{enumerate}[leftmargin=*,nosep]
\item[2.] We develop an FPT algorithm for computing $\wsn^\HH$.
\end{enumerate}
\smallskip
As with most structural parameters, virtually all algorithmic applications of the well-structure number rely on having access to an appropriate decomposition. In Section~\ref{sec:finding} we provide an FPT algorithm for computing $\wsn^\HH$ along with the corresponding decomposition for any graph class $\HH$ which can be characterized by a finite set of forbidden induced subgraphs (\emph{obstructions}). This is achieved by building on the polynomial algorithm for computing split-decompositions~\cite{GioanPaulTedderCorneil14} in combination with the FPT algorithm for computing rank-width~\cite{HlinenyOum08}.
\smallskip

\begin{enumerate}[leftmargin=*,nosep]
\item[3.] We design FPT algorithms for Minimum Vertex Cover (\textsc{MinVC}) and Maximum Clique (\textsc{MaxClq}) parameterized by $\wsn^\HH$.
\end{enumerate}
\smallskip
Specifically, in Section~\ref{sec:examples} we show that for any graph class $\HH$ (which can be characterized by a finite set of obstructions) such that the problem is polynomial-time tractable on $\HH$, the problem becomes fixed parameter tractable when parameterized by $\wsn^\HH$. We also give an overview of possible choices of $\HH$ for \textsc{MinVC} and \textsc{MaxClq}.
\smallskip

\begin{enumerate}[leftmargin=*,nosep]
\item[4.] We develop a \emph{meta-theorem} to obtain FPT algorithms for problems definable in Monadic Second Order (MSO) logic~\cite{CourcelleMakowskyRotics00} parameterized by $\wsn^\HH$.
\end{enumerate}
\smallskip
The meta-theorem requires that the problem is FPT when parameterized by the cardinality of a modulator to $\HH$. We prove that this condition is not only sufficient but also necessary, in the sense that the weaker condition of polynomial-time tractability on $\HH$ used for \textsc{MinVC} and \textsc{MaxClq} is not sufficient for FPT-time MSO model checking. Formal statements and proofs can be found in Section~\ref{sec:mso}.
\smallskip

\begin{enumerate}[leftmargin=*,nosep]
\item[5.] We show that, in general, solving LinEMSO problems~\cite{CourcelleMakowskyRotics00,GanianHlineny10} is not FPT when parameterized by $\wsn^\HH$.
\end{enumerate}
\smallskip
In particular, in the concluding Section~\ref{sec:hardness} we give a proof that these problems are in general paraNP-hard when parameterized by $\wsn^\HH$ under the same conditions as those used for MSO model checking. 



\section{Preliminaries}\label{sec:prel}
The set of natural numbers (that is, positive integers) will be denoted by
$\Nat$. For $i \in \Nat$ we write $[i]$ to denote the set $\{1,
\dots, i \}$. If $\sim$ is an equivalence relation over a set $A$, then for $a\in A$ we use $[a]_{\sim}$ to denote the equivalence class containing $a$.

\paragraph{Graphs} We will use standard graph theoretic terminology and notation
(cf. \cite{Diestel00}). All graphs considered in this document are simple and undirected. \lv{The non-leaf vertices of a tree are called its \emph{internal nodes}. 
If $S$ is a set of leaves of $T$ , then $T (S)$ denotes
the smallest connected subtree spanning $S$. }

Given a graph $G=(V(G),E(G))$ and $A\subseteq V(G)$, we denote by $N(A)$ the set of neighbors of $A$ in $V(G)\setminus A$; if $A$ contains a single vertex $v$, we use $N(v)$ instead of $N(\{v\})$. We use $V$ and $E$ as shorthand for $V(G)$ and $E(G)$, respectively, when the graph is clear from context. Two vertex sets $A,B$ are \emph{overlapping} if $A\cap B, A\setminus B, B\setminus A$ are all nonempty. $G-A$ denotes the subgraph of $G$ obtained by deleting $A$.

Given a graph $G=(V,E)$ and a graph class $\HH$, a set $X\subseteq V$ is called a \emph{modulator} to $\HH$ if $G-X\in \HH$. A graph class is called \emph{hereditary} if it is closed under vertex deletion. A graph $H$ is an \emph{induced subgraph} of $G$ if $H$ can be obtained by deleting vertices (along with all of their incident edges) from $G$. For $A\subseteq V(G)$ we use $G[A]$ to denote the subgraph of $G$ obtained by deleting $V(G)\setminus A$. Let $\FF$ be a finite set of graphs; then the class of $\FF$-\emph{free} graphs is the class of all graphs which do not contain any graph in $\FF$ as an induced subgraph. We will often refer to elements of $\FF$ as \emph{obstructions}, and we say that the class of $\FF$-free graphs is \emph{characterized by $\FF$}.

\paragraph{Fixed-Parameter Tractability.}

We refer the reader to~\cite{DowneyFellows13,Niedermeier06} for an introduction to parameterized complexity.
A \emph{parameterized problem} $\mathcal{P}$ is a subset of $\Sigma^* \times
\Nat$ for some finite alphabet $\Sigma$. For a problem instance $(x,k)
\in \Sigma^* \times \Nat$ we call $x$ the main part and $k$ the
parameter.  A parameterized problem $\mathcal{P}$ is \emph{fixed-parameter
  tractable} (FPT in short) if a given instance $(x, k)$ can be solved in time
$O(f(k) \cdot p(\Card{x}))$ where $f$ is an arbitrary computable
function of $k$ and $p$ is a polynomial function.

\lv{\paragraph{Splits and Graph Labeled Trees}}
\sv{\paragraph{Splits.}}
  A \emph{split} of a connected graph $G=(V,E)$ is a vertex bipartition $\{A,B\}$ of $V$
 such that every vertex of $A' = N(B)$ has the same neighborhood in $B'=N(A)$. The sets $A'$ and $B'$ are called \emph{frontiers} of the split. 
 \newcommand{\prelimsplitsa}[0]{
 A split is said to be \emph{non-trivial} if both sides have at least two vertices. A connected graph which does not contain a non-trivial split is called \emph{prime}. A bipartition is \emph{trivial} if one of its parts is the empty set or a singleton. Cliques and stars are called \emph{degenerate} graphs; notice that every non-trivial bipartition of their vertices is a split.}
 \lv{\prelimsplitsa}

Let $G=(V,E)$ be a graph. To simplify our exposition, we will use the notion of \emph{\sm s} instead of splits where suitable. A set $A\subseteq V$ is called a \emph{\sm} of $G$ if there exists a connected component $G'=(V',E')$ of $G$ such that $\{A,V'\setminus A\}$ forms a split of $G'$. Notice that if $A$ is a {\sm} then $A$ can be partitioned into $A_1$ and $A_2$ such that $N(A_2)\subseteq A$ and for each $v_1, v_2\in A_1$ it holds that $N(v_1)\cap (V'\setminus A)=N(v_2)\cap (V'\setminus A)$. For technical reasons, $V$ and $\emptyset$ are also considered \sm s. We say that two disjoint \sm s~$X, Y \subseteq V$ are \emph{adjacent} if there exist
$x \in X$ and $y \in Y$ such that $x$ and $y$ are adjacent.

 \newcommand{\prelimsplitsb}[0]{
A \emph{graph-labeled tree} is a pair $(T, \FF)$, where $T$ is a tree and $\FF$ is a set of graphs such that each internal node $u$ of $T$ is \emph{labeled} by a graph $G(u)\in \FF$ and there is a bijection between the edges of $T$ incident to $u$ and vertices of $G(u)$. When clear from the context, we may use $u$ as a shorthard for $G(u) \in \FF$; for instance, 
we use $V(u)$ to denote $V(G(u))$ and we say that an edge of $T$ incident to $u$ is \emph{incident} to the vertex of $G(u)$ mapped to it. Graph-labeled trees were introduced in \cite{GioanPaul07,GioanPaul12} and in the following paragraphs we recall some useful definitions and theorems that appear in \cite{GioanPaulTedderCorneil14}.

For an internal node $u$ of $T$, the vertices of $V(u)$ are called \emph{marker} vertices and the edges of $E(u)$ are called \emph{label-edges}. Edges of $T$ incident to two internal nodes are called \emph{tree-edges}. Marker vertices incident to a tree-edge $e$ are called the \emph{extremities} of $e$, and each leaf $v$ is \emph{associated with} the unique marker vertex $q$ (in the neighbor of $v$) mapped to the edge incident to $v$.
Perhaps the most important notion for graph-labeled trees with respect to split decomposition is that of \emph{accessibility}.

\begin{definition}
Let $(T, \FF)$ be a graph-labeled tree. The marker vertices $q$ and $q'$ are accessible from one another if there is a sequence $\Pi$ of marker vertices $q,\dots,q'$ such that the two following conditions holds. 
\begin{enumerate}
\item Every two consecutive elements of~$\Pi$ are either the vertices of a label-edge or the
extremities of a tree-edge;
\item the sequence of edges obtained above alternates between tree-edges and label-edges.
\end{enumerate}
\end{definition} 

\begin{figure}
\begin{center}
\scalebox{0.8}{
\begin{tikzpicture}[every node/.style={circle, inner sep=0.04cm, fill=black, draw, scale=0.9}, scale=1.0, rotate = 180, xscale = -1]

\node[label=below:{1}] (1) at ( 1, 4) {};
\node[label=right:{2}] (2) at ( 1, 3) {};
\node[label=above:{3}] (3) at ( 3, 2) {};
\node[label=above:{4}] (4) at ( 5., 3) {};
\node[label=above:{5}] (5) at ( 6, 3) {};
\node[label=below:{6}] (6) at ( 6, 4) {};
\node[label=below:{7}] (7) at ( 5, 4) {};
\node[label=below:{8}] (14) at ( 2.5, 5.3) {};
\node[label=below:{9}] (15) at ( 3.5, 5.3) {};

\draw (1) -- (2);
\draw (1) -- (3);
\draw (4) -- (3);
\draw (5) -- (4);
\draw (6) -- (5);
\draw (7) -- (6);
\draw (15) -- (7);
\draw (15) -- (4);
\draw (14) -- (7);
\draw (14) -- (4);
\draw (15) -- (3);
\draw (14) -- (3);
\draw (14) -- (1);
\draw (15) -- (1);
\draw (7) -- (1);
\draw (4) -- (1);
\draw (14) -- (15);
    \tikzstyle{empty}=[draw, shape=circle, minimum size=3pt,inner sep=0pt, fill=white,color=white]
\node[empty] () [below = 1.8cm] at (1) {};
\end{tikzpicture}
}
\quad
\scalebox{0.5}{
\begin{tikzpicture}[remember picture, 
outer/.style={ scale=1.5},
inner/.style= {circle, fill=black, draw, scale=.5},
scale=1.0, rotate = 180, xscale = -1]

		\node[inner] (a1) {};
		\node[inner, above right=of a1] (a2) {};
		\node[inner, below right=of a2] (a3) {};
        \node[below left=of a3] (a4) {};

		\draw (a1) -- (a2);
		\draw (a1) -- (a3);

		\node[inner, right=2cm of a3] (b1) {};
		\node[inner, above right=of b1] (b2) {};
		\node[inner, below right=of b2] (b3) {};
        \node[inner, below left=of b3] (b4) {};

		\draw (b1) -- (b2);
		\draw (b1) -- (b3);
		\draw (b1) -- (b4);
		\draw (b2) -- (b3);
		\draw (b2) -- (b4);
		\draw (b3) -- (b4);

		\node[inner, right= 2cm of b3] (c1) {};
		\node[inner, above right=of c1] (c2) {};
		\node[inner, below right=of c1] (c3) {};
        \node[inner, right=of c2] (c4) {};
        \node[inner, right=of c3] (c5) {};
        
        \draw (c1) -- (c2) -- (c4) -- (c5) -- (c3) -- (c1);

		\node[inner, below= 2cm of b4] (d1) {};
		\node[inner, below left=of d1] (d2) {};
		\node[inner, below right=of d1] (d3) {};

		\draw (d1) -- (d2);
		\draw (d1) -- (d3);
        \draw (d2) -- (d3);

\node[outer, left=1cm of a1] (1) {1};
\node[outer, above=0.7cm of a2] (2) {2};
\node[outer, above=0.7cm of b2] (3) {3};
\node[outer, above=1cm of c2] (4) {4};
\node[outer, above right=0.7cm of c4] (5) {5};
\node[outer, below=1cm of c3] (6) {6};
\node[outer, below right=0.7cm of c5] (7) {7};
\node[outer, below left=0.4cm and 0.8cm of d2] (8) {8};
\node[outer, below right=0.4cm and 0.8cm of d3] (9) {9};

\draw (a3) -- (b1);
\draw (b3) -- (c1);
\draw (b4) -- (d1);

 \draw (a1) -- (1);
 \draw (a2) -- (2);
 \draw (b2) -- (3);
 \draw (c2) -- (4);
 \draw (c4) -- (5);
 \draw (c3) -- (6);
 \draw (c5) -- (7);
 \draw (d2) -- (8);
 \draw (d3) -- (9);

\draw (1.13,0) circle (1.5cm);
\draw (5.53,0) circle (1.5cm);
\draw (10.23,0) circle (1.7cm);
\draw (5.53,4.23) circle (1.5cm);

\end{tikzpicture}
}
\end{center}
\caption{A graph-labeled tree (right) and its accessibility graph (left).}
\end{figure}
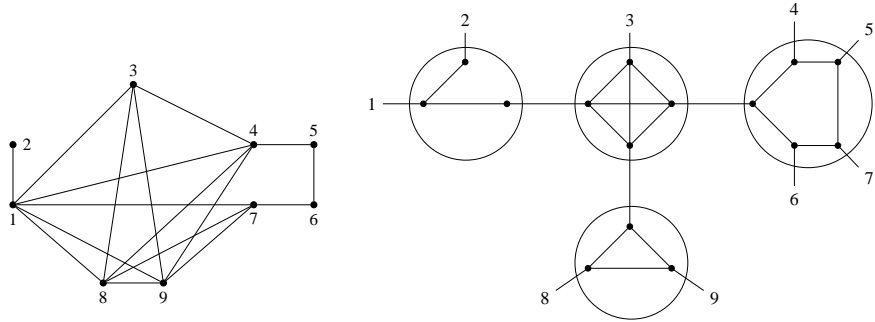

Two leaves are accessible if their associated marker vertices are accessible. The \emph{accessibility graph} of graph-labeled tree $(T,\FF)$, denoted $Gr(T,\FF)$, is the graph whose vertices are leaves of $T$ and which has an edge between two distinct leaves $l$ and $l'$ if and only if they are accessible from one another. Conversely, we may say that $(T,\FF)$ is the graph-labeled tree of $Gr(T,\FF)$.

\begin{definition}[\cite{GioanPaulTedderCorneil14}]
Let $e$ be a tree-edge incident to internal nodes $u$ and $u'$ in a \GLT, and let
$q\in V (u)$ and $q' \in V (u')$ be the extremities of $e$. The \emph{node-join} of $u$, $u'$ replaces $u$
and $u'$ with a new internal node $v$ labeled by the graph formed from the disjoint union of $G(u)$ and $G(u')$ as
follows: all possible label-edges are added between $N (q)$ and $N (q' )$, and then $q$ and
$q'$ are deleted. The new node $v$ is made adjacent to all neighbors of $u$ and $u'$ in $T$.
The \emph{node-split} is then the inverse of the node-join.
\end{definition}

Notice that the node-join operation and the node-split operation preserve the
accessibility graph of the GLT. A graph-labeled tree is \emph{reduced} if all its labels are either prime or degenerate, and no node-join of two cliques or two stars is possible.

\begin{theorem}[\cite{Cunningham82,GioanPaul07,GioanPaul12,GioanPaulTedderCorneil14}]\label{thm:uniquereducedGLT}
For any connected graph $G$, there exists a unique, reduced graph-labeled tree $(T , \FF)$ such that $G = Gr(T , \FF)$.
\end{theorem}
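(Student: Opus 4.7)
The plan is to prove existence by an explicit construction and uniqueness by an analysis of the split structure of $G$; both halves rely on the fact that node-splits and node-joins preserve the accessibility graph, which was already noted in the excerpt.

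For existence, I would start from the trivial \GLT{} consisting of a single internal node labeled by $G$, with one leaf pendant on each vertex of $G$ (via the natural bijection). As long as some label $G(u)$ admits a non-trivial split, apply the corresponding node-split. This terminates after at most $|V(G)|-1$ steps because each non-trivial split produces two strictly smaller labels, so eventually every label is either prime or degenerate. Next, greedily perform every node-join that combines two adjacent cliques or two adjacent stars whose extremities are the center of one and a leaf of the other; since each such operation strictly decreases the number of internal nodes, the procedure terminates. The resulting \GLT{} is reduced by construction, and because both kinds of operations preserve accessibility we still have $Gr(T,\FF)=G$.

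For uniqueness, the natural tool is the classical notion of a \emph{strong split}, i.e., a non-trivial split $\{A,B\}$ of $G$ that overlaps no other non-trivial split. I would prove that in any reduced \GLT{} $(T,\FF)$ with $Gr(T,\FF)=G$, the strong splits of $G$ are exactly the leaf-bipartitions induced by removing a single tree-edge of $T$: deleting a tree-edge clearly yields a split, and a split that is internal to a single prime or degenerate label either does not exist (prime case) or overlaps another split inside that label (degenerate case). This fixes the unrooted tree $T$. Each internal node then corresponds to a maximal family of pairwise overlapping splits, and inspection of the adjacency pattern on the incident leaf-bipartitions forces its label to be either the unique prime graph on these marker vertices, a clique, or a star with a forced choice of center. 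The ``no node-join of two cliques or two stars'' clause of the reduced condition eliminates the remaining freedom of subdividing a clique into two adjacent cliques or a star into two adjacent stars of matching orientation, yielding uniqueness.

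The main obstacle will be the uniqueness half, particularly the careful bookkeeping needed to argue that strong splits are in perfect bijection with tree-edges and that the two kinds of degenerate labels behave symmetrically. In essence, this amounts to reproving the Cunningham--Gioan--Paul normal-form theorem in the present language, and the technical subtlety is verifying that the reduced condition is strong enough to force a canonical form yet weak enough that the construction from the existence half always lands inside it.
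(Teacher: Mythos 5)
The paper does not prove this statement: it is imported as a black box from Cunningham and Gioan--Paul--Tedder--Corneil, so there is no internal proof to compare yours against. Judged on its own terms, your existence half is essentially the standard argument and is sound: the potential $\sum_u\bigl(|V(u)|-3\bigr)$ strictly decreases under a node-split along a non-trivial split of a label, a node-join of two cliques (resp.\ two stars joined center-to-leaf) again yields a clique (resp.\ a star) so the ``prime or degenerate'' invariant is preserved, each join decreases the number of internal nodes, and both operations preserve the accessibility graph.

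The uniqueness half, as you yourself acknowledge, is a plan rather than a proof, and its pivotal claim is stated too loosely to carry the argument. You assert that the strong splits of $G$ are exactly the bipartitions induced by tree-edges of $T$. First, ``overlaps no other non-trivial split'' must mean the four-quadrant crossing relation on \emph{bipartitions}, not the three-set overlap relation the paper defines for vertex sets: for a tree-edge incident to a degenerate node with markers $a_1,a_2,a_3,e$, the induced side $A=\{a_1,a_2,a_3\}$ is overlapped as a set by the split side $\{a_3\}\cup B$ obtained by node-splitting that clique, even though the two bipartitions are nested; under the literal set-overlap reading your characterization fails. Second, to verify that tree-edge bipartitions are crossed by no split while every split obtained by node-splitting a degenerate node is crossed by another, you need a complete description of \emph{all} splits of $G$ in terms of $(T,\FF)$ --- which is exactly the content of Theorem~\ref{thm:splitinGLT}, itself only cited. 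The plan is therefore circular unless that characterization is proved from scratch; together with the confluence of the join phase and the forced choice of star centers, this is the Cunningham--Edmonds bookkeeping you correctly identify as the real work, and it remains undone.
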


The unique \GLT~guaranteed by the previous theorem is the \emph{split-tree}, and is denoted $ST(G)$.

\begin{theorem}[\cite{Cunningham82,GioanPaul07,GioanPaul12,GioanPaulTedderCorneil14}]\label{thm:splitinGLT}
Let   $(T , \FF)$ be the split-tree of a connected graph $G$. Any split of $G$ is the bipartition
(of leaves) induced by removing an internal tree-edge from $T'$, where $T' = T$ or $T'$ is
obtained from $T$ by exactly one node-split of a degenerate node.
\end{theorem}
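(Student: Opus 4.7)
The plan is to start with an arbitrary split $\{A,B\}$ of $G$ and locate it inside the reduced split-tree $(T,\FF)$ guaranteed by Theorem~\ref{thm:uniquereducedGLT}. Let $T_A$ and $T_B$ be the minimal subtrees of $T$ spanning the leaves corresponding to $A$ and to $B$, respectively. I would show that either $T_A$ and $T_B$ are separated by a single tree-edge of $T$, or they share exactly one internal node which is necessarily degenerate and whose node-split yields the desired edge.

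The easy case is when $T_A$ and $T_B$ share no internal node. Then, by connectivity of $T$, there is a unique tree-edge $e$ lying on every path between an $A$-leaf and a $B$-leaf. Using the alternation condition in the definition of accessibility, removing $e$ splits the accessibility graph $Gr(T,\FF)=G$ exactly along $\{A,B\}$, so we are done with $T'=T$.

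The harder case is when $T_A\cap T_B$ contains an internal node $u$. I would first argue that there is only one such node: if $u,u'$ were both shared, the unique $u$-$u'$ path in $T$ would contribute internal nodes to both $T_A$ and $T_B$, and minimality of these subtrees would force leaves on both sides along the path, contradicting the assumption that $\{A,B\}$ is a partition. Having localized the conflict to $u$, I would consider the bipartition $\{V_A(u),V_B(u)\}$ of the marker vertices of $G(u)$, where $V_A(u)$ collects those marker vertices whose incident tree-edge leads (through $T_A$) to leaves in $A$, and analogously for $V_B(u)$. Using again that accessibility must alternate between label- and tree-edges, every vertex in $V_A(u)$ that has any neighbor in $V_B(u)$ must have exactly the same neighborhood in $V_B(u)$, so $\{V_A(u),V_B(u)\}$ is a split of $G(u)$. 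Since $(T,\FF)$ is reduced, $G(u)$ is either prime or degenerate; the prime case is ruled out because prime graphs have no non-trivial splits, leaving $G(u)$ as a clique or a star. Node-splitting $u$ along $\{V_A(u),V_B(u)\}$ produces $T'$ in which a new tree-edge separates exactly the leaves $A$ from the leaves $B$, as required.

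The main obstacle I expect is the bookkeeping in the shared-node case: one has to translate the global split property of $\{A,B\}$ in $G$ into the local split property of $\{V_A(u),V_B(u)\}$ in $G(u)$, and this relies on carefully tracing accessibility paths through $u$ and through the subtrees rooted at each marker of $u$. A secondary point is excluding trivial situations (singleton marker sides, or a shared node lying outside $T_A\cup T_B$) so that the resulting tree-edge in $T'$ is genuinely internal and induces the non-trivial bipartition $\{A,B\}$; both follow from the assumption that $\{A,B\}$ is a split of $G$ together with the uniqueness of the shared node argued above.
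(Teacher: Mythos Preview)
The paper does not prove this theorem at all: it is stated with a citation to \cite{Cunningham82,GioanPaul07,GioanPaul12,GioanPaulTedderCorneil14} and used as a black box, so there is no proof in the paper to compare against. Your sketch is essentially the standard argument one finds in those references (locate the split in the split-tree, show the overlap of the two Steiner subtrees is confined to a single node, and observe that a non-trivial split inside a label graph forces that label to be degenerate), so in spirit you are reproducing the cited proof rather than diverging from anything the paper does.

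That said, two steps in your outline are genuinely incomplete as written. First, your uniqueness argument for the shared internal node is not correct as stated: minimality of $T_A$ and $T_B$ alone does not prevent them from overlapping on a path; what actually forces the overlap to collapse to a single node is the split property of $\{A,B\}$ in $G$, which you have to invoke at this point (via accessibility) rather than afterwards. Second, the passage from ``$\{A,B\}$ is a split of $G$'' to ``$\{V_A(u),V_B(u)\}$ is a split of $G(u)$'' needs more than alternation of label- and tree-edges: you must argue that for any two markers $p,p'$ on the same side, say $V_A(u)$, their label-neighbourhoods inside $V_B(u)$ coincide, and this requires exhibiting leaves in $A$ reachable through $p$ and through $p'$ and leaves in $B$ reachable through each marker of $V_B(u)$, then using the frontier condition of the global split. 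You flag this as ``bookkeeping'', but it is the crux of the proof and deserves an explicit argument rather than a pointer.
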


\begin{theorem}[\cite{GioanPaulTedderCorneil14}]\label{thm:computingGLT}
The split-tree $ST(G)$ of a connected graph $G = (V , E)$ with n vertices and
m edges can be built incrementally in time $O(n +m)\alpha(n + m)$, where $\alpha$ is the inverse Ackermann function.
\end{theorem}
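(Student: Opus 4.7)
The plan is to proceed by induction on the number of vertices, building the split-tree incrementally. I would fix an arbitrary ordering $v_1,\dots,v_n$ of $V(G)$, start with the trivial one-leaf \GLT{} for $G[\{v_1\}]$, and show how to transform $ST(G[\{v_1,\dots,v_{i-1}\}])$ into $ST(G[\{v_1,\dots,v_i\}])$ using only information about $N_i := N(v_i)\cap\{v_1,\dots,v_{i-1}\}$. The correctness target at each step is Theorem~\ref{thm:uniquereducedGLT}: the resulting tree must be reduced and have the correct accessibility graph.

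The structural backbone comes from Theorem~\ref{thm:splitinGLT}: every split of $G[\{v_1,\dots,v_i\}]$ either already corresponds to a tree-edge of the previous split-tree or arises from at most one node-split of a degenerate node. Consequently, inserting $v_i$ can be handled locally. Concretely, I would consider the minimal subtree $T(N_i)$ spanning the leaves associated with $N_i$ and, walking along it, classify each encountered internal node $u$ according to how $N_i$ is distributed over the subtrees incident to the marker vertices of $G(u)$. Each node is either \emph{fully accessible from $v_i$}, \emph{partially accessible via a proper subset of its markers}, or \emph{lies on a path but contributes no marker}. Degenerate nodes in the partial case require a single node-split to isolate the markers through which $v_i$ sees the rest; prime nodes either admit a direct attachment or force the creation of a new prime label absorbing $v_i$. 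After attaching the new leaf $v_i$, a bottom-up cleanup merges adjacent cliques or stars to restore the reduced form.

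The main obstacle is not correctness but amortized efficiency: a naive implementation could revisit large portions of the tree while classifying internal nodes, and each insertion might look at far more markers than $|N_i|+1$. The trick, and the source of the $\alpha(n+m)$ factor, is to maintain a Union-Find structure over marker vertices that groups together markers known to share the same restricted neighborhood with respect to newly inserted vertices. When $v_i$ is processed, the membership of each classified marker is updated through \textsc{Find} operations, and node-splits and reductions translate to \textsc{Union} operations. The key accounting step I would carry out is to charge every \textsc{Union} to an edge or marker of the \emph{final} split-tree, and every \textsc{Find} either to an element of $N_i$ or to a tree-edge traversed on the path between two such leaves, so that the total number of operations is $O(n+m)$.

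Putting the pieces together yields total running time $O((n+m)\alpha(n+m))$ by the standard Union-Find bound. The delicate part — and what I would spend most of the proof on — is the case analysis underlying the local update, specifically verifying that the classification of nodes on $T(N_i)$ can always be read off from the Union-Find representatives of the markers adjacent to $T(N_i)$ without inspecting untouched regions of the tree, and that the subsequent reduction step never cascades beyond a constant-radius neighborhood of the inserted leaf (except through Union-Find amortization).
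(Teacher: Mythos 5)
This statement is not proved in the paper at all: it is Theorem~\ref{thm:computingGLT}, imported verbatim from \cite{GioanPaulTedderCorneil14}, so there is no in-paper proof to compare against. Judged on its own merits, your text is a plan for a proof rather than a proof. It correctly identifies the architecture of the known algorithm (incremental insertion, a local update confined near the subtree $T(N_i)$ spanning the already-inserted neighbors, node-splits of degenerate labels, a reduction pass, and Union-Find to obtain the $\alpha(n+m)$ factor), but every load-bearing step is deferred: the classification of the nodes of $T(N_i)$, the correctness of the local modification, and the charging scheme are all announced (``I would carry out'', ``what I would spend most of the proof on'') and never executed. For a result of this kind those are not routine details --- they constitute essentially the entire content of the cited paper.

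Two points are genuinely problematic beyond incompleteness. First, you invoke Theorem~\ref{thm:splitinGLT} as the ``structural backbone'' for the update step, but that theorem only characterizes the splits of a fixed graph $G$ in terms of its own split-tree; it says nothing about how $ST(G[\{v_1,\dots,v_i\}])$ relates to $ST(G[\{v_1,\dots,v_{i-1}\}])$. The characterization of how the split-tree evolves under vertex insertion is a separate, nontrivial theorem (the vertex-insertion lemma of Gioan--Paul--Tedder--Corneil, phrased via node states such as fully-mixed and perfect), and without it the case analysis you sketch has no foundation. Second, the claim that the post-insertion reduction ``never cascades beyond a constant-radius neighborhood'' is false as stated: when the new vertex is mixed on many nodes, the update contracts an entire (possibly large) subtree into a single new prime or degenerate node, and the total cost of such contractions over all insertions is controlled only by a global amortization charging work to edges of $G$, not by any per-step locality bound. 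So the proposal is a reasonable reconstruction of the shape of the known argument, but it does not establish the theorem.
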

}

 \lv{\prelimsplitsb}

\paragraph{Rank-width}

For a graph $G$ and $U,W\subseteq V(G)$, let $\mx A_G[U,W]$ denote the
$U\times W$-submatrix of the adjacency matrix over the two-element
field $\mathrm{GF}(2)$, i.e., the entry $a_{u,w}$, $u\in U$ and $w\in
W$, of $\mx A_G[U,W]$ is $1$ if and only if $\{u,w\}$ is an edge
of~$G$.  The {\em cut-rank} function $\rho_G$ of a graph $G$ is
defined as follows: For a bipartition $(U,W)$ of the vertex
set~$V(G)$, $\rho_G(U)=\rho_G(W)$ equals the rank of $\mx A_G[U,W]$
over $\mathrm{GF}(2)$. 

A \emph{rank-decomposition} of a graph $G$ is a pair $(T,\mu)$
where $T$ is a tree of maximum degree 3 and $\mu:V(G)\rightarrow \{t:
\text{$t$ is a leaf of $T$}\}$ is a bijective function. For an edge~$e$ of~$T$, the connected components of $T - e$ induce a
bipartition $(X,Y)$ of the set of leaves of~$T$.  The \emph{width} of
an edge $e$ of a rank-decomposition $(T,\mu)$ is $\rho_G(\mu^{-1} (X))$.
The \emph{width} of $(T,\mu)$ is the maximum width over all edges of~$T$.  The \emph{rank-width} of $G$, $\rw(G)$ in short, is the minimum width over all
rank-decompositions of $G$. We denote by $\RR_i$ the class of all graphs of rank-width at most $i$, and say that a graph class $\HH$ is of \emph{unbounded rank-width} if $\HH\not \subseteq \RR_i$ for any $i\in \Nat$.


\newenvironment{psmallmatrix}
  {\left(\begin{smallmatrix}}
  {\end{smallmatrix}\right)}

\begin{figure}[ht]
\vspace{-0.6cm}
  \centering
  \tikzstyle{every circle node}=[circle,draw,inner sep=1.0pt, fill=black]
  \begin{tikzpicture}
    \begin{scope}[xshift=-7cm]
      \draw
      (90+4*72:1) node[circle] (d)  {} node[right=2pt] {$\strut d$} 
      (90+0*72:1) node[circle] (c)  {} node[above] {$\strut c$} 
      (90+1*72:1) node[circle] (b)  {} node[left=2pt] {$\strut b$} 
      (90+2*72:1) node[circle] (a)  {} node[below] {$\strut a$} 
      (90+3*72:1) node[circle] (e)  {} node[below] {$\strut e$} 
      (a)--(b)--(c)--(d)--(e)--(a)
      ;
    \end{scope}

 \draw 
 (-3,1)   node[circle] (d) {} node[above] {$\strut d$} 
 (-3,-1)  node[circle] (e) {} node[below] {$\strut e$} 
 (-2,0)   node[circle] (U) {} 
 (0,0)    node[circle] (V) {} 
 (0,-1.2)    node[circle] (X) {} node[below] {$\strut a$} 
 (2,0)    node[circle] (W) {}
 (3,1)   node[circle] (c) {} node[above] {$\strut c$} 
 (3,-1)  node[circle] (b) {} node[below] {$\strut b$} 

 (d) to node [auto,near start, swap] {$\begin{psmallmatrix}
                              0&0&1&1
                            \end{psmallmatrix}$} (U)
 (e) to node [auto, near start] {$\begin{psmallmatrix}
                              1&0&0&1
                            \end{psmallmatrix}$} (U)
 (U) to node [auto] {$\begin{psmallmatrix}
                              0&0&1\\
                              1&0&0  
                            \end{psmallmatrix}$} (V)
 (V) to node [auto] {$\begin{psmallmatrix}
                              1&0\\
                              0&1\\
                              0&0
                            \end{psmallmatrix}$} (W)
 (V)  to node [auto] {$\begin{psmallmatrix}
                              1\\
                              0\\
                              0\\
                              1
                            \end{psmallmatrix}$} (X)
 (c)--(W)--(b)
(3.4,0.5) node {$\begin{psmallmatrix}
                              0\\
                              1\\
                              1\\
                              0
                            \end{psmallmatrix}$} 
(3.4,-0.5) node {$\begin{psmallmatrix}
                              1\\
                              1\\
                              0\\
                              0
                            \end{psmallmatrix}$} 
;
\end{tikzpicture}
\vspace{-0.8cm}
\caption{A rank-decomposition of the cycle $C_5$.}
\label{fig:rdecC5}
\vspace{-0.6cm}
\end{figure}
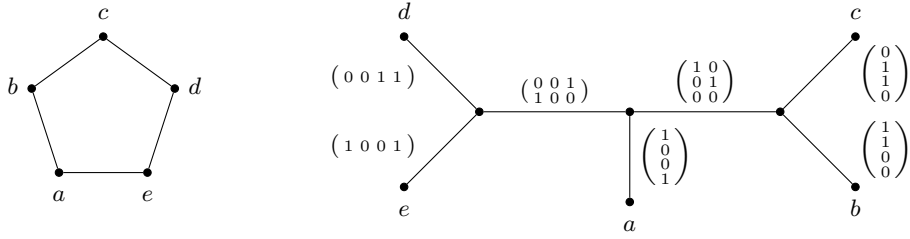

\begin{theorem}[\cite{HlinenyOum08}]\label{thm:rankdecomp} Let $k \in \Nat$ be a constant and
  $n \geq 2$. For an $n$-vertex graph $G$, we can output a
  rank-decomposition of width at most $k$ or confirm that the
  rank-width of $G$ is larger than $k$ in time $f(k)\cdot n^3$, where $f$ is a computable function.
\end{theorem}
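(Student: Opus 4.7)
The plan is to reduce the problem to computing branch-decompositions of an associated binary matroid, and then use a two-stage scheme: first obtain an $f(k)$-approximation in FPT time, and then refine it to an exact optimum by dynamic programming along the approximate decomposition. As a preliminary, I would exploit the fact that the cut-rank function $\rho_G$ coincides with the connectivity function of a suitable binary matroid $M_G$ representable over $\mathrm{GF}(2)$, whose representation is built from $\mx{A}_G$. Under this correspondence, rank-decompositions of $G$ and branch-decompositions of $M_G$ have the same width, so it suffices to compute an optimal branch-decomposition of $M_G$ whenever its branch-width is at most $k$.

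For the first stage I would invoke an Oum--Seymour style approximation, which uses submodularity of $\rho_G$ to branch on candidate bipartitions and either outputs a rank-decomposition of width at most $3k+1$ or certifies $\rw(G) > k$ in time $2^{O(k)}\cdot n^{O(1)}$. This reduces the remaining task to the case where an $O(k)$-width rank-decomposition is already at hand, so subsequent steps may assume all relevant cuts have cut-rank bounded by a function of $k$.

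The second stage is a bottom-up dynamic programming on the approximate decomposition tree $T$. For each subtree with leaf set $X$, I would maintain a table of \emph{signatures} summarising the equivalence class of the partial decomposition of $X$ under the relation ``interchangeable with respect to any completion into a rank-decomposition of width at most $k$''. The internal-node step merges two child signatures into a parent signature by considering all ways to combine their bounded-rank cuts; acceptance at the root corresponds to any signature compatible with global width at most $k$. A matching rank-decomposition can then be reconstructed by standard backtracking through the DP.

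The main obstacle I anticipate is establishing a bound $g(k)$ on the number of distinct signatures per node, together with an $n^{O(1)}$-time update, which is needed both for correctness and to reach the promised cubic factor $n^3$. Proving this requires showing that two partial decompositions can be distinguished by local data of size depending only on $k$, exploiting that all relevant cuts have cut-rank at most $k$, and invoking the submodular-partition and tangle machinery developed for matroid branch-width. Once these pieces are in place, composing the two stages and converting a branch-decomposition of $M_G$ back to a rank-decomposition of $G$ in polynomial time yields the claimed $f(k)\cdot n^3$ bound.
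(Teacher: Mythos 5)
The paper does not prove this theorem; it imports it as a black-box citation to Hlin\v{e}n\'y and Oum, so there is no internal proof to compare against. Judged on its own, your sketch correctly reproduces the architecture of the cited proof --- pass to branch-width of a represented binary matroid, run an Oum--Seymour-type approximation to obtain a width-$O(k)$ decomposition or a certificate that $\rw(G)>k$, then refine by dynamic programming over the approximate tree --- but it is an outline, not a proof. The step you yourself flag as ``the main obstacle,'' namely that partial decompositions admit signatures of size bounded by $g(k)$ together with a correct and efficient merge at internal nodes, is not a loose end to be tidied up later: it is the entire technical content of the Hlin\v{e}n\'y--Oum result (their machinery of parse trees and ``characteristics'' of partial branch-decompositions of matroids representable over a fixed finite field). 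Nothing in your write-up establishes that two partial decompositions agreeing on bounded local data are interchangeable in every completion, so as it stands the proposal has a genuine gap at its core.

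Two further inaccuracies. The cut-rank function of a general (non-bipartite) graph is not literally the connectivity function of a binary matroid on $V(G)$ with equal widths: the correct translation takes the matroid on $2n$ elements represented by $(I_n \mid A_G + I_n)$, pairs the two columns belonging to each vertex, and yields $\lambda(\tilde X)=2\rho_G(X)+1$ for paired sets $\tilde X$; one therefore needs the \emph{partitioned}-matroid variant of the branch-width algorithm and a factor-two conversion of widths, not the claimed width equality. Finally, ``composing the two stages'' does not by itself deliver the $f(k)\cdot n^3$ bound; in the published argument the cubic running time comes from a more careful incremental scheme that inserts elements one at a time and re-optimizes, so the time analysis also remains to be done.
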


\paragraph{Monadic Second Order Logic on Graphs}
We assume that we have an infinite supply of individual variables,
denoted by lowercase letters $x,y,z$, and an infinite supply of set
variables, denoted by uppercase letters $X,Y,Z$. \emph{Formulas} of
\emph{monadic second-order logic} (MSO) are constructed from atomic
formulas $E(x,y)$, $X(x)$, and $x = y$ using the connectives $\neg$
(negation), $\wedge$ (conjunction) and existential quantification
$\exists x$ over individual variables as well as existential
quantification $\exists X$ over set variables. Individual variables
range over vertices, and set variables range over sets of
vertices. The atomic formula $E(x,y)$ expresses adjacency, $x = y$
expresses equality, and $X(x)$ expresses that vertex $x$ in the set
$X$. From this, we define the semantics of monadic second-order logic
in the standard way (this logic is sometimes called $\MSO_1$).

\emph{Free and bound variables} of a formula are defined in the usual way. A
\emph{sentence} is a formula without free variables. We write $\varphi(X_1,
\dots, X_n)$ to indicate that the set of free variables of formula $\varphi$
is $\{X_1, \dots, X_n\}$. If $G = (V,E)$ is a graph and $S_1, \dots, S_n
\subseteq V$ we write $G \models \varphi(S_1, \dots, S_n)$ to denote that
$\varphi$ holds in $G$ if the variables $X_i$ are interpreted by the sets
$S_i$, for $i \in [n]$. For a fixed \MSO sentence $\varphi$, the MSO Model Checking problem ($\MSOMC{\varphi}$) asks whether an input graph $G$ satisfies $G\models \varphi$.

It is known that MSO formulas can be checked efficiently as long as the graph has bounded rank-width.

\begin{theorem}[\cite{GanianHlineny10}]\label{thm:msorankwidth}
  Let $\varphi$ and $\psi=\psi(X)$ be fixed
  \MSO formulas. Given an $n$-vertex graph $G$ and a set $S\subseteq V(G)$, there exists a computable function $f$ such that we can decide whether $G \models \varphi$ and whether $G\models \psi(S)$ in time $f(\rw(G))\cdot n^3$.
\end{theorem}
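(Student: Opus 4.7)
My plan is to reduce the statement to an application of Courcelle's theorem on a suitably labelled parse tree derived from a rank-decomposition of $G$. The overall strategy mirrors the clique-width approach: first obtain a hierarchical decomposition of bounded width, then compile the fixed formula into a finite tree automaton and evaluate it bottom-up on the decomposition.

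Concretely, I would proceed in four steps. First, invoke Theorem~\ref{thm:rankdecomp} with $k=\rw(G)$ to compute in time $f_1(k)\cdot n^3$ a rank-decomposition $(T,\mu)$ of width at most $k$. Second, convert $(T,\mu)$ into a rooted parse tree whose operations act on graphs whose vertices carry labels from a fixed finite alphabet of size bounded by a function of $k$; the idea is that along each edge of $T$ the cut-rank is at most $k$, so the adjacency between the two sides is captured by a $\mathrm{GF}(2)$-matrix of rank $\le k$, and hence by at most $2^k$ distinct neighbourhood types on each side. One can therefore relabel vertices of the two subgraphs by their type with respect to the cut, and realize the join across the cut as a bilinear operation determined by a $k\times k$ matrix over $\mathrm{GF}(2)$. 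This yields, after rooting $T$ arbitrarily, a parse tree in which every internal node is labelled by one of finitely many operations (relabel, disjoint union, matrix-join), and whose size is $O(n)$.

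Third, I would compile the fixed MSO sentence $\varphi$ (or the formula $\psi(X)$) into a finite deterministic bottom-up tree automaton $\AA_\varphi$ over this operation alphabet, following the standard Shelah/Courcelle construction: states are equivalence classes of labelled graphs under the MSO theory of quantifier rank $\mathrm{qr}(\varphi)$, and the transitions are well-defined because each operation is MSO-definable. The number of states depends only on $\varphi$ and $k$, so the resulting function $f$ is computable. Running $\AA_\varphi$ on the parse tree takes linear time in the tree size, yielding the overall $f(k)\cdot n^3$ bound (the cubic factor being absorbed by the decomposition step). For the formula $\psi(X)$ with a free set variable, I would enrich the leaf alphabet by a bit indicating membership in $S$ and compile an automaton for $\psi$ over this enriched alphabet in exactly the same manner.

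The main obstacle is the second step: while for clique-width the parse tree and its operations are given directly by the $k$-expression, for rank-width one must show that a rank-decomposition of width $k$ can be turned into a parse tree over a finite operation set whose operations preserve the MSO type and whose width is bounded by a function of $k$. The key tool is to fix, at every tree-edge of width $\le k$, a basis of the row space of the cut matrix and use it to assign each vertex one of at most $2^k$ ``types'', so that the join across the cut is encoded by a constant-size bilinear form; closure of MSO under such operations is then standard. Once this compilation is in place, the automaton construction and its evaluation are routine, and both statements about $\varphi$ and $\psi(S)$ follow at once.
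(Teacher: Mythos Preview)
The paper does not prove this statement at all: Theorem~\ref{thm:msorankwidth} is quoted from \cite{GanianHlineny10} as a black box, with no proof or proof sketch provided. So there is nothing in the paper to compare your attempt against.

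That said, your outline is essentially the standard route taken in the cited literature (and in the earlier Courcelle--Oum and Courcelle--Makowsky--Rotics line): compute a rank-decomposition via Theorem~\ref{thm:rankdecomp}, turn it into a parse tree over a finite signature whose size depends only on the width (using the $\le 2^k$ neighbourhood types across each cut of rank $\le k$), compile the fixed formula into a tree automaton over that signature, and run it bottom-up. Handling the free set variable by a one-bit leaf annotation is also the usual trick. One small wrinkle: you write ``invoke Theorem~\ref{thm:rankdecomp} with $k=\rw(G)$'', but $\rw(G)$ is not known in advance; you should iterate $k=1,2,\ldots$ until the decomposition algorithm succeeds, which still gives an $f(\rw(G))\cdot n^3$ bound. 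With that adjustment your sketch is sound and matches what the cited reference does.
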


We review \MSO \emph{types} roughly following the presentation in
\cite{Libkin04}. The \emph{quantifier rank} of an \MSO formula $\varphi$ is
defined as the nesting depth of quantifiers in $\varphi$. For non\hy negative
integers $q$ and $l$, let $\MSO_{q,l}$ consist of all \MSO formulas of
quantifier rank at most $q$ with free set variables in $\{X_1, \dots, X_l\}$.

Let $\varphi = \varphi(X_1,\dots,X_l)$ and $\psi = \psi(X_1,\dots,X_l)$ be
\MSO formulas. We say $\varphi$ and $\psi$ are \emph{equivalent}, written $\varphi
\equiv \psi$, if for all graphs $G$ and $U_1, \dots, U_l \subseteq V(G)$, $G
\models \varphi(U_1,\dots, U_l)$ if and only if $G \models \psi(U_1,\dots,
U_l)$.  Given a set $F$ of formulas, let ${F/\mathord\equiv}$ denote the set
of equivalence classes of $F$ with respect to $\equiv$. A system of
representatives of $F/\mathord\equiv$ is a set $R \subseteq F$ such that $R
\cap C \neq \emptyset$ for each equivalence class $C \in F/\mathord\equiv$.
The following statement has a straightforward proof using normal forms (see
\cite[Proposition~7.5]{Libkin04} for details).
\begin{ourfact}\label{fact:representatives}
  Let $q$ and $l$ be fixed non\hy negative integers. The set
  $\MSO_{q,l}/\mathord\equiv$ is finite, and one can compute a system of
  representatives of $\MSO_{q,l}/\mathord\equiv$.
\end{ourfact}
We will assume that for any pair of non\hy negative integers $q$ and $l$ the
system of representatives of $\MSO_{q,l}/\mathord\equiv$ given by
Fact~\ref{fact:representatives} is fixed.
\begin{definition}[\MSO Type]
  Let $q,l$ be non\hy negative integers. For a graph $G$ and an $l$\hy
  tuple $\vec{U}$ of sets of vertices of $G$, we define
  $\mathit{type}_q(G,\vec{U})$ as the set of formulas $\varphi \in
  \MSO_{q,l}$ such that $G \models \varphi(\vec{U})$. We call
  $\mathit{type}_q(G,\vec{U})$ the \MSO \emph{$q$-type of
    $\vec{U}$ in $G$}. 
\end{definition}
It follows from Fact~\ref{fact:representatives} that up to logical
equivalence, every type contains only finitely many formulas. \lv{This
allows us to represent types using \MSO formulas as follows.}

\newcommand{\lemtypeformula}[0]{
\begin{lemma}[\cite{GanianSlivovskySzeider13}]
\label{lem:typeformula}
  Let $q$ and $l$ be non\hy negative integer constants, let $G$ be a graph,
  and let $\vec{U}$ be an $l$\hy tuple of sets of vertices of $G$. One can
  compute a formula $\Phi \in \MSO_{q,l}$ such that for any graph
  $G'$ and any $l$\hy tuple $\vec{U}'$ of sets of vertices of $G'$ we have $G'
  \models \Phi(\vec{U}')$ if and only if $\mathit{type}_q(G,\vec{U}) =
  \mathit{type}_q(G',\vec{U}')$. Moreover, $\Phi$ can be computed in time $\bigoh(f(\rw(G))\cdot |V|^{\bigoh(1)})$.
\end{lemma}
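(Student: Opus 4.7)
}
The plan is to directly construct $\Phi$ as a Boolean combination of a fixed finite system of representatives of $\MSO_{q,l}/\mathord\equiv$, and then to use Theorem~\ref{thm:msorankwidth} to decide, in the allotted time, which of these representatives hold on $(G,\vec U)$.

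First, using Fact~\ref{fact:representatives}, I would fix (once and for all, depending only on the constants $q$ and $l$) a system of representatives $R = \{\psi_1, \dots, \psi_m\} \subseteq \MSO_{q,l}$ of $\MSO_{q,l}/\mathord\equiv$. Since representatives are pairwise inequivalent, for any graph $H$ and any $l$-tuple $\vec W$ of vertex sets, the $q$-type $\mathit{type}_q(H,\vec W)$ is completely determined by the subset
\[
  S(H,\vec W) \;=\; \{\,\psi_i \in R \;:\; H \models \psi_i(\vec W)\,\},
\]
because every formula in $\MSO_{q,l}$ is equivalent to some $\psi_i$, and equivalent formulas are satisfied on exactly the same $(H,\vec W)$. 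Hence two pairs $(G,\vec U)$ and $(G',\vec U')$ have equal $q$-types iff $S(G,\vec U) = S(G',\vec U')$.

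Given this, I would compute $S(G,\vec U)$ by evaluating each of the $m$ (constantly many, depending only on $q$ and $l$) formulas $\psi_i(\vec U)$ on $G$. Each such evaluation fits the setting of Theorem~\ref{thm:msorankwidth} (with $\vec U$ substituted for free set variables), and therefore runs in time $f(\rw(G))\cdot n^{\bigoh(1)}$. Summing over $i \in [m]$ still fits into the claimed bound. Once $S = S(G,\vec U)$ has been identified, define
\[
  \Phi(X_1,\dots,X_l) \;=\; \bigwedge_{\psi_i \in S} \psi_i(X_1,\dots,X_l) \;\wedge\; \bigwedge_{\psi_i \in R\setminus S} \neg\,\psi_i(X_1,\dots,X_l).
\]
Since every $\psi_i$ has quantifier rank at most $q$ and negations/conjunctions do not raise quantifier rank, $\Phi \in \MSO_{q,l}$.

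For correctness, observe that for an arbitrary graph $G'$ and tuple $\vec U'$, the condition $G' \models \Phi(\vec U')$ is by definition equivalent to $S(G',\vec U') = S$, which, as noted above, is equivalent to $\mathit{type}_q(G',\vec U') = \mathit{type}_q(G,\vec U)$. The main (and essentially the only) technical step is invoking Theorem~\ref{thm:msorankwidth} for each representative; the rest is bookkeeping made available by Fact~\ref{fact:representatives}. Because $|R|$ depends only on the constants $q$ and $l$, the construction of $\Phi$ itself costs only constant overhead beyond the $m$ model-checking calls, giving the claimed running time $\bigoh(f(\rw(G))\cdot |V|^{\bigoh(1)})$.
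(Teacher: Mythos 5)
Your construction of $\Phi$ as $\bigwedge_{\psi_i \in S}\psi_i \wedge \bigwedge_{\psi_i \in R\setminus S}\neg\psi_i$, determined by model-checking each representative on $(G,\vec U)$ via Theorem~\ref{thm:msorankwidth}, is exactly the paper's proof; your correctness argument (equality of types iff equality of the satisfied-representative sets) is a slightly more direct phrasing of the paper's contradiction argument but is the same idea. Correct and essentially identical to the paper's approach.
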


\begin{proof}
  Let $R$ be a system of representatives of $\MSO_{q,l}/\mathord\equiv$ given
  by Fact~\ref{fact:representatives}. Because $q$ and $l$ are constant, we can
  consider both the cardinality of $R$ and the time required to compute it as
  constants. Let $\Phi \in \MSO_{q,l}$ be the formula defined as $\Phi =
  \bigwedge_{\varphi \in S} \varphi \wedge \bigwedge_{\varphi \in R \setminus
    S} \neg \varphi$, where $S = \SB \varphi \in R \SM G \models
  \varphi(\vec{U}) \SE$. We can compute $\Phi$ by deciding $G \models
  \varphi(\vec{U})$ for each $\varphi \in R$. Since the number of formulas in
  $R$ is a constant, this can be done in time $\bigoh(f(\rw(G))\cdot |V|^{\bigoh(1)})$ if $G \models \varphi(\vec{U})$ can
  be decided in time $f(\rw(G))\cdot |V|^{\bigoh(1)}$.

  Let $G'$ be an arbitrary graph and let $\vec{U}'$ be an $l$\hy tuple of subsets of
  $V(G')$. We claim that $\mathit{type}_q(G, \vec{U}) = \mathit{type}_q(G',
  \vec{U'})$ if and only if $G' \models \Phi(\vec{U}')$. Since $\Phi \in
  \MSO_{q,l}$ the forward direction is trivial. For the converse, assume
  $\mathit{type}_q(G, \vec{U}) \neq \mathit{type}_q(G', \vec{U'})$. First
  suppose $\varphi \in \mathit{type}_q(G, \vec{U}) \setminus
  \mathit{type}_q(G', \vec{U'})$. The set $R$ is a system of representatives
  of $\MSO_{q,l}/\mathord\equiv$ , so there has to be a $\psi \in R$ such that
  $\psi \equiv \varphi$. But $G' \models \Phi(\vec{U}')$ implies $G' \models
  \psi(\vec{U}')$ by construction of $\Phi$ and thus $G' \models
  \varphi(\vec{U}')$, a contradiction. Now suppose $\varphi \in
  \mathit{type}_q(G', \vec{U}') \setminus \mathit{type}_q(G, \vec{U})$. An
  analogous argument proves that there has to be a $\psi \in R$ such that
  $\psi \equiv \varphi$ and $G' \models \neg \psi(\vec{U}')$. It follows that
  $G' \not \models \varphi(\vec{U}')$, which again yields a contradiction.
  \qed
\end{proof}}
\lv{\lemtypeformula}

\newcommand{\MSOgames}[0]{
\begin{definition}[Partial isomorphism]\label{def:partialisomorphism}
  Let $G, G'$ be graphs, and let $\vec{V} = (V_1, \dots, V_l)$ and
  $\vec{U} = (U_1, \dots, U_l)$ be tuples of sets of vertices with
  $V_i \subseteq V(G)$ and $U_i \subseteq V(G')$ for each $i \in
  [l]$. Let $\vec{v} = (v_1, \dots, v_m)$ and $\vec{u} = (u_1, \dots,
  u_m)$ be tuples of vertices with $v_i \in V(G)$ and $u_i \in V(G')$
  for each $i \in [m]$. Then $(\vec{v}, \vec{u})$ defines a
  \emph{partial isomorphism between $(G, \vec{V})$ and $(G',
  \vec{U})$} if the following conditions hold:
  \begin{itemize}
    \item For every $i,j \in [m]$,
    \begin{align*}
      v_i = v_j \: \Leftrightarrow \: u_i = u_j \text{ and }
      v_iv_j \in E(G)\: \Leftrightarrow \: u_iu_j \in E(G').
    \end{align*}
    \item For every $i \in [m]$ and $j \in [l]$,
      \begin{align*}
        v_i \in V_j \: \Leftrightarrow u_i \in U_j.
      \end{align*}
    \end{itemize}
\end{definition}

\begin{definition}
  Let $G$ and $G'$ be graphs, and let $\vec{V_0}$ be a $k$\hy tuple of subsets
  of $V(G)$ and let $\vec{U_0}$ be a $k$\hy tuple of subsets of $V(G')$. Let
  $q$ be a non\hy negative integer. The \emph{$q$\hy round \MSO game on $G$
    and $G'$ starting from $(\vec{V_0}, \vec{U_0})$} is played as follows.
  The game proceeds in rounds, and each round consists of one of the following
  kinds of moves.
\begin{itemize}
  \item \textbf{Point move} The Spoiler picks a vertex in either $G$ or $G'$; the Duplicator responds by picking a vertex in the other graph.
  \item \textbf{Set move} The Spoiler picks a subset of $V(G)$ or a
    subset of $V(G')$; the Duplicator responds by picking a subset of the
    vertex set of the other graph.
  \end{itemize}
  Let $\vec{v}=(v_1,\dots,v_m), v_i \in V(G)$ and $
\vec{u}=(u_1,\dots,u_m), u_i \in V(G')$ be the point
  moves played in the $q$-round game, and let $\vec{V}=(V_1, \dots, V_l), V_i\subseteq V(G)$ and $\vec{U}=(U_1,\dots, U_l), U_i \subseteq V(G')$ be the set moves played in the
  $q$\hy round game, so that $l + m = q$ and moves belonging to same round
  have the same index. Then the Duplicator wins the game if $(\vec{v},
  \vec{u})$ is a partial isomorphism of $(G, \vec{V_0}\cup\vec{V})$ and $(G',
  \vec{U_0}\cup \vec{U})$. If the Duplicator has a winning strategy, we write $(G,
  \vec{V_0}) \equiv^{\MSO}_q (G', \vec{U_0})$.
\end{definition}

\begin{theorem}[\cite{Libkin04}, Theorem 7.7]\label{thm:msogames} Given two graphs $G$ and $G'$ and two $l$\hy tuples $\vec{V_0}, \vec{U_0}$ of sets of vertices of $G$ and $G'$, we have \begin{align*}
    \mathit{type}_q(G, \vec{V_0}) = \mathit{type}_q(G, \vec{U_0}) \:
    \Leftrightarrow \: (G, \vec{V_0}) \equiv^{\MSO}_q (G', \vec{U_0}).
\end{align*}
\end{theorem}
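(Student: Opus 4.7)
The plan is induction on the quantifier rank $q$, mirroring the classical Ehrenfeucht--Fra\"{i}ss\'{e} correspondence for first-order logic and lifted to the monadic second-order setting by allowing set moves to witness set quantifiers. The base case $q=0$ is immediate: with no moves available in a $0$-round game, both sides of the equivalence reduce to the assertion that the atomic situation of $\vec V_0$ in $G$ matches that of $\vec U_0$ in $G'$. For the inductive step I assume the theorem for $q$ and prove it for $q+1$ by turning equality of types into a Duplicator strategy in one direction and turning a winning strategy into agreement on all formulas in the other.

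($\Rightarrow$) Suppose $\mathit{type}_{q+1}(G,\vec V_0)=\mathit{type}_{q+1}(G',\vec U_0)$. I build Duplicator's strategy for the first round. If Spoiler plays a point move $v\in V(G)$, I invoke Lemma~\ref{lem:typeformula} to obtain a formula $\Phi\in\MSO_{q,l+1}$ that characterises having the same $q$-type as $(G,\vec V_0,v)$. Then $\exists x\,\Phi(\vec X,x)$ lies in $\MSO_{q+1,l}$ and holds in $G$ at $\vec V_0$; by hypothesis it also holds in $G'$ at $\vec U_0$, so there exists $u\in V(G')$ with $\mathit{type}_q(G,\vec V_0,v)=\mathit{type}_q(G',\vec U_0,u)$. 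Duplicator responds with this $u$, and by the induction hypothesis she wins the remaining $q$-round game from the extended position. Set moves are handled identically with $\exists X\,\Phi$ replacing $\exists x\,\Phi$, and initial moves played on $G'$ are symmetric.

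($\Leftarrow$) Conversely, assume Duplicator has a winning strategy in the $(q+1)$-round game. Up to $\equiv$, every $\varphi\in\MSO_{q+1,l}$ is a Boolean combination of atomic formulas together with formulas of the shapes $\exists x\,\psi$ and $\exists X\,\psi$ with $\psi$ of rank $q$, so by structural induction on $\varphi$ it suffices to treat these cases. If $\exists x\,\psi$ holds in $G$ at $\vec V_0$ with witness $v$, Spoiler's move to $v$ is countered by some $u\in V(G')$ from which Duplicator still wins the remaining $q$-round game; the induction hypothesis yields $\mathit{type}_q(G,\vec V_0,v)=\mathit{type}_q(G',\vec U_0,u)$, hence $\psi$ holds at $(\vec U_0,u)$ and therefore $\exists x\,\psi$ holds in $G'$. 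The set-quantifier case and the opposite membership direction are analogous.

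The only non-routine ingredient is the translation between Spoiler's quantifier-style move and a single $\MSO$ formula used in the $(\Rightarrow)$ direction: Fact~\ref{fact:representatives} together with Lemma~\ref{lem:typeformula} guarantees that the $q$-type of an $(l+1)$-tuple is finitely axiomatisable by one formula of quantifier rank $q$, and without this collapse one could not turn the induction hypothesis into a single witness for the existential quantifier in $\MSO_{q+1,l}$. Everything else is routine bookkeeping, and modulo this observation the argument is precisely the standard proof of Theorem~7.7 in Libkin~\cite{Libkin04}.
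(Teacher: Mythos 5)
Your proof is correct and is precisely the standard Ehrenfeucht--Fra\"{i}ss\'{e}/Hintikka-formula argument; the paper does not prove this statement itself but imports it verbatim from Libkin (Theorem~7.7), whose proof is the one you reproduce (forward direction via characteristic formulas of $q$-types prefixed by an existential quantifier, backward direction by structural induction on rank-$(q+1)$ formulas). The only cosmetic caveat is that Lemma~\ref{lem:typeformula} and the paper's $\mathit{type}_q$ are stated only for tuples of \emph{set} variables, so for Spoiler's point moves you need the routine extension of the characteristic formula to mixed tuples (or an encoding of points as singleton sets), exactly as in Libkin's treatment.
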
}

\lv{\MSOgames}

\section{Well-Structured Modulators}\label{sec:wsm}

\begin{definition}
\label{def:wsm}
Let $\HH$ be a hereditary graph class and let $G$ be a graph. A set $\vec{X}$ of pairwise-disjoint split-modules of $G$ is called a $k$-\emph{\wsm}{} to $\HH$ if
\begin{enumerate}
\item $|\vec{X}|\leq k$, and
\item $\bigcup_{X_i\in \vec{X}} X_i$ is a modulator to $\HH$, and
\item $\rw(G[X_i])\leq k$ for each $X_i\in \vec{X}$.
\end{enumerate}
\end{definition}

\begin{figure}
\centering
\begin{tikzpicture}[every node/.style={circle, fill=black, draw, scale=.3}, scale=0.5, rotate = 180, xscale = -0.8, yscale = 0.5]

\filldraw[fill opacity=0.7,fill=gray!20] (0,0) ellipse (3.7cm and 2.8cm);
\filldraw[fill opacity=0.7,fill=gray!20] (1.6,5) ellipse (2.2cm and 1.7cm);

 \foreach \x in {1,...,8}{%
   \pgfmathparse{(\x-1)*45+floor(\x/9)*45}
    \node (N-\x) at (\pgfmathresult:2.4cm) [thick] {};
 } 
 \foreach \x [count=\xi from 1] in {2,...,8}{%
    \foreach \y in {\x,...,8}{%
        \path (N-\xi) edge[-] (N-\y);
  }
 }

 \newlength{\gridsize}
\setlength{\gridsize}{0.3cm}

\node[below right=0.1cm and 1cm of N-1] (1)  {};
\node[right=\gridsize of 1] (2) {};
\node[right=\gridsize of 2] (3) {};
\node[right=\gridsize of 3] (4) {};
\node[below=\gridsize of 1] (5) {};
\node[below=\gridsize of 2] (6) {};
\node[below=\gridsize of 3] (7) {};
\node[below=\gridsize of 4] (8) {};
\node[below=\gridsize of 5] (9) {};
\node[below=\gridsize of 6] (10) {};
\node[below=\gridsize of 7] (11) {};
\node[below=\gridsize of 8] (12) {};
\node[below=\gridsize of 9] (13) {};
\node[below=\gridsize of 10] (14) {};
\node[below=\gridsize of 11] (15) {};
\node[below=\gridsize of 12] (16) {};

\node[below left=0.05cm and 0.7cm of 13] (a1) {};
\node[below left=0.05cm and 0.7cm of 9] (a2) {};
\node[left=\gridsize of a1] (a3) {};
\node[left=\gridsize of a3] (a4) {};
\node[left=\gridsize of a2] (a5) {};
\node[left=\gridsize of a4] (a6) {};
\node[above=\gridsize of a4] (a7) {};

\draw (N-1)--(1)--(N-2);
\draw (N-2)--(5)--(N-1);
\draw (a3)--(a2)--(a1)--(a3)--(a4)--(a5)--(a3);
\draw (a4)--(a6)--(a7)--(a4);
\draw (a1)--(9)--(a2);
\draw (a1)--(13)--(a2);

\draw (1)--(2)--(3)--(4);
\draw (5)--(6)--(7)--(8);
\draw (9)--(10)--(11)--(12);
\draw (13)--(14)--(15)--(16);
\draw (1)--(5)--(9)--(13);
\draw (2)--(6)--(10)--(14);
\draw (3)--(7)--(11)--(15);
\draw (4)--(8)--(12)--(16);

\end{tikzpicture}
\caption{A graph with a $2$-\wsm{} to $K_3$-free graphs (in the two shaded areas)}
\vspace{-0.5cm}
\label{fig:wsm}
\end{figure}
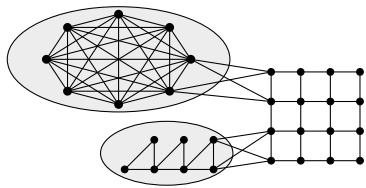

For the sake of brevity and when clear from context, we will sometimes identify $\vec{X}$ with $\bigcup_{X_i\in\vec{X}} X_i$ (for instance $G-\vec{X}$ is shorthand for $G-\bigcup_{X_i\in\vec{X}} X_i$). 
To allow a concise description of our parameters, for any hereditary graph class $\HH$ we let the \emph{well-structure number} ($\wsn^{\HH}$ in short) denote the minimum $k$ such that $G$ has a $k$-{\wsm} to $\HH$. Similarly, we let $\md^{\HH}(G)$ denote the minimum $k$ such that~$G$ has a modulator of cardinality $k$ to $\HH$. 

\sv{\begin{proposition}}
\lv{\begin{proposition}}
\label{prop:better}
Let $\HH$ be \emph{any} hereditary graph class of unbounded rank-width.
\begin{enumerate}
\item $\rw(G)\geq \wsn^\HH(G)$ for any graph $G$. Furthermore, for every $i\in \Nat$ there exists a graph $G_i$ such that $\rw(G_i)\geq \wsn^\HH(G_i)+i$, and
\item $\md^\HH(G)\geq \wsn^\HH(G)$ for any graph $G$. Furthermore, for every $i\in \Nat$ there exists a graph $G_i$ such that $\md^\HH(G_i)\geq \wsn^\HH(G_i)+i$.
\end{enumerate}
\end{proposition}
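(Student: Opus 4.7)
The plan is to derive each inequality and each separator via a direct construction of a witness family, using two ubiquitous split-modules: $V(G)$ itself (a split-module by the technical convention in Section~\ref{sec:prel}) and every singleton $\{v\}$ (a split-module by a vacuous verification of the split condition).

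\emph{Inequalities.} For $\rw(G) \geq \wsn^{\HH}(G)$: if $G \in \HH$ then $\vec{X} = \emptyset$ is already a valid $0$-well-structured modulator. Otherwise I would take $\vec{X} = \{V(G)\}$, which immediately verifies the three conditions of Definition~\ref{def:wsm} with $k = \rw(G)$, since the sole member is a split-module with $\rw(G[V(G)]) = \rw(G)$ and its ``complement'' in $G$ is empty. For $\md^{\HH}(G) \geq \wsn^{\HH}(G)$: starting from any minimum modulator $Y$, the family $\vec{X} = \{\{y\} : y \in Y\}$ is a $|Y|$-well-structured modulator, since its members are pairwise disjoint singleton split-modules, each of rank-width $0$, whose union is $Y$.

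\emph{Separators.} For part (1), the unbounded rank-width of $\HH$ guarantees that for each $i$ there exists a graph $G_i \in \HH$ with $\rw(G_i) \geq i$; the empty family witnesses $\wsn^{\HH}(G_i) = 0$, giving $\rw(G_i) \geq \wsn^{\HH}(G_i) + i$. For part (2), I would fix some $H^* \notin \HH$ (which must exist unless $\HH$ is the class of all graphs, in which case both parameters vanish and the statement is vacuous) and let $c = \max(1,\rw(H^*))$. Taking $G_i$ to be the disjoint union of $i + c$ copies of $H^*$, the single-element family $\vec{X} = \{V(G_i)\}$ witnesses $\wsn^{\HH}(G_i) \leq c$, because rank-width is invariant under disjoint union. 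On the other hand, hereditariness of $\HH$ forces any modulator of $G_i$ to include at least one vertex from each copy of $H^*$, yielding $\md^{\HH}(G_i) \geq i + c \geq \wsn^{\HH}(G_i) + i$.

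\emph{Main obstacle.} The proof is essentially a matter of choosing the right witness family; no real combinatorial machinery is needed. The main points to verify are that the two coarse witnesses ($V(G)$ and the singletons $\{y\}$) really qualify as split-modules in the sense of Section~\ref{sec:prel}, and that the rank-width of the disjoint-union construction for separator~(2) remains bounded by a constant independent of~$i$. Some pedantic care is needed around rank-width~$0$ inputs, but there $G$ is edgeless and either lies in $\HH$ (so $\wsn^{\HH}(G) = 0$) or else forces $\HH$ to exclude a small independent set, in which case one obtains $G \in \HH$ by invoking the fact that any hereditary class of unbounded rank-width contains $K_1$ together with a graph having an induced $2K_1$, and proceeds as above.
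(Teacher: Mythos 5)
Your argument for both inequalities and for the first separator coincides with the paper's: $\{V(G)\}$ (via the paper's explicit convention that $V$ is a split-module) gives $\rw(G)\geq\wsn^{\HH}(G)$, singletons give $\md^{\HH}(G)\geq\wsn^{\HH}(G)$, and a graph of $\HH$ with large rank-width separates $\rw$ from $\wsn^{\HH}$. For the second separator you deviate slightly: you take a plain disjoint union of $i+c$ copies of $H^*\notin\HH$ and again use the whole vertex set as the single split-module, whereas the paper attaches a universal apex vertex $q$ and uses $V(G)\setminus\{q\}$ as the split-module, so that the split is witnessed inside a connected graph and the leftover graph is $K_1$ rather than the empty graph. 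Both work given the paper's conventions; yours is marginally cleaner but leans harder on the ``$V$ is a split-module'' convention, while the paper's apex construction would survive even without it.

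One point deserves care. Your closing remark about rank-width-$0$ inputs identifies a real boundary case, but your proposed patch does not close it: from unbounded rank-width and heredity one only gets $K_1,2K_1\in\HH$, not $nK_1\in\HH$ for all $n$ (take $\HH$ to be the hereditary closure of complements of grids: unbounded rank-width, yet $3K_1\notin\HH$). For such an $\HH$ and $G=3K_1$ one has $\rw(G)=0$ but $\wsn^{\HH}(G)=1$, so the witness $\{V(G)\}$ violates the cardinality condition $|\vec{X}|\leq k$ when $k=0$. Note, however, that the paper's own proof passes over exactly the same degenerate case without comment, so this is a defect of the statement's formulation rather than of your argument specifically; for every graph with $\rw(G)\geq 1$ (and in both separator constructions) your proof is sound.
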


\newcommand{\pfpropbetter}[0]{
\begin{proof}
\begin{enumerate}
\item For $\rw(G)\geq \wsn^\HH(G)$ notice that for any graph $G$ of rank-width $k$, the set $\{V(G)\}$ is a $k$-{\wsm} to the empty graph. For the second claim, since $\HH$ has unbounded rank-width, for every $i\in \Nat$ it contains some graph $G_i$ such that $\rw(G_i)>i$; by definition, $\wsn^\HH(G_i)=0$.
\item For $\md^\HH(G)\geq \wsn^\HH(G)$, let $G$ be a graph containing a modulator $X=\{v_1,\dots,$ $v_k\}$ to $\HH$. It is easy to check that $\vec{X}=\{\{v_1\},\dots,\{v_k\}\}$ is a $k$-{\wsm} to $\HH$. For the second claim, let $G'\not \in \HH$ and let $k=\rw(G')$. Consider the graph $G_i$ consisting of $i+1+k$ disjoint copies of $G'$ and a vertex $q$ which is adjacent to every other vertex of $G$. Since $\HH$ is hereditary, we may assume without loss of generality that it contains the single-vertex graph. It is then easy to check that $\{V(G)\setminus \{q\}\}$ forms a $k$-{\wsm} in $G$ to $\HH$. Now consider any set $X\subseteq V(G)$ of cardinality at most $i+k$. Clearly, there must exist some copy of $G'$, say $G'_j$, such that $X\cap V(G'_j)=\emptyset$. Since $G'_j\not \in \HH$, it follows from the hereditarity of $\HH$ that $G-X\not \in \HH$ and hence $X$ cannot be a modulator to $\HH$. We conclude $\md^\HH(G_i)>i+k=i+\wsn^\HH(G_i)$.
\qed
\end{enumerate}
\end{proof}}

\lv{\pfpropbetter}

\section{Finding Well-Structured Modulators}
\label{sec:finding}

The objective of this subsection is to prove the following theorem. Interestingly, our approach only allows us to find \wsm s if the rank-width of the graph is sufficiently large. This never becomes a problem though, since on graphs with small rank-width we can always directly use rank-width as our parameter.\sv{\enlargethispage*{8mm}}

\begin{theorem}
\label{thm:main-find}
Let $\HH$ be a graph class characterized by a finite obstruction set. There exists an \emph{FPT} algorithm parameterized by $k$ which for any graph $G$ of rank-width at least $k+2$ either finds a $k$-{\wsm} to $\HH$ or correctly detects that it does not exist. 
\end{theorem}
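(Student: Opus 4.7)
My plan is to reduce the task to a structured selection problem over the split tree of $G$. First, I would invoke Theorem~\ref{thm:computingGLT} to compute $ST(G)$ in polynomial time. By Theorem~\ref{thm:splitinGLT}, every split-module of $G$ is encoded as one side of the bipartition of leaves produced by removing a single internal tree-edge of $T$, or of a tree obtained from $T$ by one node-split of a degenerate node. Consequently a $k$-\wsm{} to $\HH$ corresponds to selecting at most $k$ pairwise leaf-disjoint pendant subtrees of $ST(G)$ (allowing at most $k$ node-splits in total), each of whose accessibility graphs has rank-width at most $k$---this is checkable in FPT time by Theorem~\ref{thm:rankdecomp}---and whose leaves jointly hit every induced copy of every obstruction from the finite set $\FF$ characterising $\HH$.

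The second ingredient is a bounded-depth branching procedure driven by obstructions. Since every $H\in\FF$ has constant size and can be located in $G$ in polynomial time, whenever the currently committed modulator fails to cover some induced $H$, at least one of its constantly many vertices must lie in one of the $k$ slots of the WSM. I would branch on which vertex of $H$ to commit and on which of the at most $k$ slots it joins. After every commitment, the set of vertices placed in slot $i$ must extend to a pendant subtree of $ST(G)$ whose accessibility graph has rank-width at most $k$; if no such extension exists, the branch is discarded. The hypothesis $\rw(G)\geq k+2$ is used to exclude the degenerate case in which the trivial WSM $\{V(G)\}$ would qualify, and to ensure that every surviving branch must genuinely carve up $ST(G)$ into smaller pieces.

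The principal obstacle I anticipate is twofold. First, node-splits at degenerate nodes of arbitrarily large degree admit, a priori, exponentially many partitions into two sides; I expect to overcome this by exploiting the fact that a degenerate label is a clique or a star (both of rank-width one), so the partition relevant to a valid WSM is essentially forced by the committed vertices of the adjacent slots together with the structure of the neighbouring prime nodes, leaving only $f(k)$ relevant alternatives per node-split. Second, the branching depth must be bounded purely in terms of $k$; here I would argue that along any branch only $f'(k)$ truly independent commitments are necessary, because once the combinatorial shapes of the $k$ pendant subtrees are fixed in $ST(G)$ the remaining obstructions are either automatically covered or force infeasibility. Combining these structural arguments with the $f(k)\cdot n^{\bigoh(1)}$ rank-width test of Theorem~\ref{thm:rankdecomp} then yields the desired FPT running time.
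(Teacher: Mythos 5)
Your high-level skeleton (compute the split tree, use it to enumerate candidate split-modules, branch on obstructions) points in the right direction, but the two ``obstacles'' you flag at the end are precisely where the proof lives, and your proposed resolutions for them are assertions rather than arguments. The missing idea is the following canonicalization, which is the heart of the paper's proof: define $v\sim_k w$ iff some split-module of rank-width at most $k$ contains both. The paper proves that when $\rw(G)\ge k+2$, the union of two \emph{intersecting} split-modules of rank-width at most $k$ is again a split-module of rank-width at most $k$ (this requires a sequence of lemmas gluing rank-decompositions along split frontiers, building on Cunningham's theorem on intersecting splits). Hence $\sim_k$ is an equivalence relation and its classes are exactly the inclusion-maximal split-modules of rank-width at most $k$, which \emph{partition} $V(G)$. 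Any $k$-\wsm{} can then be replaced by the at most $k$ equivalence classes containing its parts (the class of $\FF$-free graphs is hereditary, so enlarging the modulator is harmless). This is what makes the branching work: for each induced obstruction $D'$ one branches only over the constantly many equivalence classes meeting $V(D')$, removes one, and recurses with $k-1$; depth $k$, branching factor at most $\max_{H\in\FF}|V(H)|$. The relation $\sim_k$ itself is decidable in FPT time because only the \emph{minimal} split-modules containing a given pair $v,w$ need to be inspected, and these are read off the path between $v$ and $w$ in $ST(G)$ --- which is also how the paper disposes of your worry about exponentially many node-splits at large degenerate nodes.

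Without this equivalence-class structure, your branching procedure has no bounded depth. Committing a vertex to a ``slot'' does not determine which split-module that slot denotes: a vertex can sit inside a long nested chain of split-modules of rank-width at most $k$, and successive obstructions may keep forcing the same slot to grow, so the number of commitments per slot is not bounded by any function of $k$. Your claim that ``once the combinatorial shapes of the $k$ pendant subtrees are fixed the remaining obstructions are either automatically covered or force infeasibility'' is exactly the statement that needs proof, and it is false without the maximality argument. You have also misidentified the role of the hypothesis $\rw(G)\ge k+2$: it is not there to rule out the trivial candidate $\{V(G)\}$ (which would anyway require $\rw(G)\le k$ to be a valid $k$-\wsm), but to guarantee that the union of two overlapping split-modules of rank-width at most $k$ is a proper subset of $V(G)$ and still has rank-width at most $k$ --- i.e., to make $\sim_k$ transitive. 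As written, the proposal has a genuine gap.
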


\lsv{
We begin by stating several useful properties of splits in graphs. We remark that for most of this section we will restrict ourselves to connected graphs, and show how to deal with general graphs later on; this allows us to use the following result by Cunningham.}
{Our starting point on the path to a proof of Theorem~\ref{thm:main-find} is a theorem by Cunningham.}

\begin{theorem}[\cite{Cunningham82}]\label{thm:splitintersection}
Let $\{ A, C\}$, $\{ B, D\}$ be splits of a connected graph $G$ such that $|A\cap B|\ge 2$  and
$A \cup B\neq V(G)$. Then $\{A\cap B, C\cup D\}$ is a split of $G$.
\end{theorem}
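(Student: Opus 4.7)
The plan is to prove the statement via submodularity of the cut-rank function $\rho_G$. The bridge I would use is the following rank characterization: a bipartition $\{X,V\setminus X\}$ of a connected graph is a split if and only if $\rho_G(X)\leq 1$. Over $\mathrm{GF}(2)$, a $0/1$ matrix of rank at most $1$ is either the zero matrix or an outer product $uv^\top$ of $0/1$ vectors; in the latter case its $1$-entries form exactly a complete bipartite pattern between the supports of $u$ and $v$, which matches the ``complete bipartite between frontiers'' condition from the definition of a split. So the first step is to verify this equivalence in both directions.

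Given that translation, the argument is short. The function $X\mapsto \mathrm{rank}(\mx{A}_G[X,V\setminus X])$ is a standard connectivity function, hence submodular:
\[
\rho_G(X)+\rho_G(Y)\;\geq\;\rho_G(X\cup Y)+\rho_G(X\cap Y).
\]
Applying this to $X=A$ and $Y=B$, and using that both $\{A,C\}$ and $\{B,D\}$ are splits (so $\rho_G(A)\leq 1$ and $\rho_G(B)\leq 1$), I get $\rho_G(A\cap B)+\rho_G(A\cup B)\leq 2$. Since $A\cup B\neq V(G)$, the complement $V\setminus(A\cup B)=C\cap D$ is nonempty, and because $G$ is connected there must be at least one edge between $A\cup B$ and $C\cap D$, forcing $\rho_G(A\cup B)\geq 1$. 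Combining these gives $\rho_G(A\cap B)\leq 1$. Together with $|A\cap B|\geq 2$ and $|C\cup D|\geq |C\cap D|\geq 1$, the bipartition $\{A\cap B,C\cup D\}$ is a genuine (and nonempty) bipartition, and by the rank characterization it is a split.

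The main obstacle I anticipate is the rank characterization of splits itself: being careful that GF(2)-rank exactly $1$ coincides with the frontier condition (in particular, handling the degenerate rank-$0$ case where no edges cross), and that the definition of split used in the paper permits empty frontiers. The submodularity of $\rho_G$ is a textbook fact and the connectivity argument is one line, so once the rank/split dictionary is nailed down the rest of the proof is essentially bookkeeping.
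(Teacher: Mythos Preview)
The paper does not prove this theorem; it is quoted from Cunningham's 1982 paper as a known result and used as a black box. Your proof is correct. The dictionary ``$\{X,V\setminus X\}$ is a split of a connected graph iff $\rho_G(X)\le 1$'' is exactly right (a $0/1$ matrix of $\mathrm{GF}(2)$-rank at most $1$ has either all rows zero or all nonzero rows equal, which is precisely the complete-bipartite frontier condition), submodularity of the cut-rank is standard, and connectivity together with $A\cup B\neq V(G)$ indeed forces $\rho_G(A\cup B)\ge 1$, yielding $\rho_G(A\cap B)\le 1$.

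Two remarks on how your route compares. First, Cunningham's original 1982 argument is necessarily different from yours: it predates the cut-rank function (Oum--Seymour, 2006) and proceeds by a direct combinatorial analysis of how the frontiers of the two splits interact. Your submodularity proof is shorter and integrates seamlessly with the rank-width machinery the present paper already uses. Second, your argument actually works under the weaker assumption $A\cap B\neq\emptyset$: once both $A\cap B$ and $C\cup D$ are nonempty, connectivity rules out the rank-$0$ case on both cuts, so the hypothesis $|A\cap B|\ge 2$ is never invoked in your computation. That stronger hypothesis is a remnant of Cunningham's original proof and of his interest in non-trivial splits; you may note this as a mild strengthening.
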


\newcommand{\lemunion}[0]{
\begin{lemma}\label{lem:smintersection}
If $A$ and $B$ are overlapping \sm\-s of a connected graph $G=(V,E)$, then $A\cup B$ is also a \sm. Moreover, if $A\cup B\neq V$, then also $A\cap B$ is a \sm.
\end{lemma}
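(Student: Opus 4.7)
The plan is to reduce both claims to Cunningham's intersection theorem (Theorem~\ref{thm:splitintersection}), applied either directly or to the complementary splits, with a few singleton edge cases handled by inspection. Since $A$ and $B$ overlap, each of $A\cap B$, $A\setminus B$, $B\setminus A$ is nonempty; in particular $A$ and $B$ are both proper nonempty subsets of $V$. Because $G$ is connected, the definition of split-module then says that $\{A, V\setminus A\}$ and $\{B, V\setminus B\}$ are both splits of~$G$.

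For the first claim I would split on $R := V\setminus(A\cup B)$. If $R = \emptyset$, then $A\cup B = V$ is a split-module by convention. If $|R|\geq 2$, I would apply Theorem~\ref{thm:splitintersection} to the complementary pair $\{V\setminus A, A\}$ and $\{V\setminus B, B\}$: its hypotheses become $|(V\setminus A)\cap(V\setminus B)| = |R|\geq 2$ and $(V\setminus A)\cup(V\setminus B) = V\setminus(A\cap B) \neq V$ (using $A\cap B\neq\emptyset$), and its conclusion is that $\{R, A\cup B\}$ is a split, so $A\cup B$ is a split-module. The remaining case $|R|=1$ is not covered by the theorem and has to be done by hand: if $R=\{v\}$, then in the bipartition $\{A\cup B, \{v\}\}$ the frontier on the singleton side contains at most one vertex, so the ``same neighborhood in the other frontier'' condition is vacuously satisfied, and $\{A\cup B, \{v\}\}$ is a split.

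For the second claim, assuming $A\cup B\neq V$, I would again split on the size of $A\cap B$. If $|A\cap B|\geq 2$, then Theorem~\ref{thm:splitintersection} applied directly to $\{A, V\setminus A\}$ and $\{B, V\setminus B\}$, together with the hypothesis $A\cup B\neq V$, shows that $\{A\cap B, V\setminus(A\cap B)\}$ is a split. If instead $|A\cap B|=1$, the same vacuity argument as in the previous paragraph shows that this bipartition is still a split. In either subcase $A\cap B$ is a split-module. The only real obstacle is the mismatch between the hypothesis $|A\cap B|\geq 2$ of Cunningham's theorem and the singleton edge cases that arise when applying it (directly, or to the complements), both of which are resolved uniformly by the elementary observation that any singleton together with its complement automatically forms a split of~$G$.
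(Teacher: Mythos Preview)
Your proposal is correct and follows essentially the same approach as the paper: apply Cunningham's Theorem~\ref{thm:splitintersection} to $\{A,V\setminus A\}$ and $\{B,V\setminus B\}$ (or to their complements) for the generic cases, and handle the singleton edge cases by the observation that any bipartition with a singleton side is trivially a split. The paper organizes the argument as a single four-way case distinction on the pair $(|A\cap B|,\,|V\setminus(A\cup B)|)$ after first disposing of $A\cup B=V$, whereas you treat the union and the intersection separately; the underlying applications of Theorem~\ref{thm:splitintersection} and the singleton observations are identical.
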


\begin{proof}
If $V=A\cup B$, then $A\cup B$ is clearly a \sm. 
So, assume $A\cup B\neq V$ and let $C=V\setminus A$ and $D=V\setminus B$; note that $C\cup D\neq V$ since $A,B$ are overlapping. We make the following exhaustive case distinction:
\begin{itemize}
\item if $|A\cap B|=1$ and $|C\cap D|=1$, then both $A\cap B$ and $A\cup B=V\setminus (C\cap D)$ are easily seen to be \sm s;
\item if $|A\cap B|\geq 2$ and $|C\cap D|=1$, then $A\cap B$ is a {\sm} by Theorem~\ref{thm:splitintersection} and $A\cup B$ is also a {\sm} because $C\cap D$ is a {\sm};
\item if $|A\cap B|=1$ and $|C\cap D|\geq 2$, then $A\cap B$ is a {\sm} and $A\cup B$ is also a {\sm} because $C,D$ satisfy the conditions of Theorem~\ref{thm:splitintersection} and hence $C\cap D = V\setminus (A\cup B)$ forms a \sm;
\item if $|A\cap B|\geq 2$ and $|C\cap D|\geq 2$, then $A\cap B$ is a {\sm} by Theorem~\ref{thm:splitintersection} and $A\cup B$ is also a {\sm} because $C,D$ satisfy the conditions of Theorem~\ref{thm:splitintersection}, as in the previous case.
\end{itemize}
\vspace{-0.7cm}
\qed
\end{proof}

\begin{lemma}\label{lem:diferrence}
Let $G=(V,E)$ be a connected graph and $A,B$ be overlapping \sm s. Then $A\setminus B$ is also a \sm. 
\end{lemma}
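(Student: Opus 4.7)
The plan is to reduce the statement to Lemma~\ref{lem:smintersection} by taking the complement of $B$. First I would record an auxiliary observation: since the defining condition of a split ``every vertex of $N(B)$ has the same neighborhood in $N(A)$'' is symmetric in the two sides of the bipartition, the family of split-modules of a connected graph is closed under complementation. In particular, $V\setminus B$ is itself a split-module of $G$, so I can feed it into Lemma~\ref{lem:smintersection}.

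Next I would check when Lemma~\ref{lem:smintersection} applies to the pair $A$ and $V\setminus B$. Since $A,B$ overlap, the intersections $A\cap(V\setminus B)=A\setminus B$ and $A\setminus(V\setminus B)=A\cap B$ are both nonempty; the third set $(V\setminus B)\setminus A = V\setminus(A\cup B)$ is nonempty iff $A\cup B\neq V$, and the sum $A\cup(V\setminus B)=V\setminus(B\setminus A)$ is different from $V$ iff $B\setminus A\neq\emptyset$, which again holds because $A,B$ overlap. So the lemma is applicable exactly in the subcase $A\cup B\neq V$, and it yields that $A\cap(V\setminus B)=A\setminus B$ is a split-module, as desired.

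It remains to dispose of the complementary case $A\cup B=V$. Here $V\setminus B\subseteq A$, so $A\setminus B=V\setminus B$, and by the complementation observation this set is already known to be a split-module. Combining the two cases completes the proof.

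The only minor obstacle is the boundary case $A\cup B=V$ where Lemma~\ref{lem:smintersection} does not directly apply; the complementation observation handles it in one line, so no real difficulty arises.
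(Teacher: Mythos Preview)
Your proof is correct and follows essentially the same route as the paper: both take $Z=V\setminus B$, observe that $Z$ is a split-module by complementation, and then obtain $A\setminus B=A\cap Z$ via an intersection result. The only cosmetic difference is the case split: the paper separates out $|A\setminus B|\le 1$ and then applies Cunningham's Theorem~\ref{thm:splitintersection} directly, whereas you separate out $A\cup B=V$ and route through Lemma~\ref{lem:smintersection} (which in turn wraps the same Cunningham theorem with its own edge-case handling).
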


\begin{proof}
The lemma clearly holds if $|A\setminus B|\leq 1$, so we may assume that $|A\setminus B|\ge 2$.
Let $Z=V\setminus B$; since $B$ is a split module, so is $Z$. Furthermore, since $A$ and $B$ are overlapping, it holds that $B\setminus A$ is nonempty and hence $V\neq Z\cup A$. Since $Z\cap A=A\setminus B$, we have $|Z\cap A|\ge 2$ and hence we conclude that $Z\cap A=A\setminus B$ is a split module by Theorem~\ref{thm:splitintersection}.
\qed
 \end{proof}

\begin{lemma}\label{lem:unionrw}
Let $k\in \Nat$ be a constant, $G=(V,E)$ a graph, and  $A$, $B$, $C$ be pairwise disjoint \sm s such that $A\cup B\cup C=V$. Let $a$, $b$, $c$ be arbitrary vertices such that $a\in N(A)$, $b\in N(B)$, and $c\in N(C)$. 
If $\max \big(\rw(G[A\cup \{a\}]),\rw(G[B\cup \{b\}]),\rw(G[C\cup \{c\}])\big)\le k$, then $\rw(G)\le k$.
\end{lemma}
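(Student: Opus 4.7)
The plan is to construct a rank-decomposition $(T,\mu)$ of $G$ of width at most $k$ by gluing together the three given rank-decompositions. Let $(T_A,\mu_A)$, $(T_B,\mu_B)$, $(T_C,\mu_C)$ be rank-decompositions of $G[A\cup\{a\}]$, $G[B\cup\{b\}]$, $G[C\cup\{c\}]$ of width at most $k$. I would delete from each tree the leaf corresponding to the ``external'' vertex (namely $a$, $b$, $c$), leaving the former neighbor of that leaf as a distinguished degree-two \emph{root}, and then introduce a single new internal node $r$ adjacent to all three roots. The resulting tree $T$ is subcubic, its leaves are in bijection with $V$ via the obvious map $\mu$, and it therefore forms a rank-decomposition of $G$.

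The heart of the proof is the following claim: for every $X\subseteq A$ we have $\rho_G(X)=\rho_{G[A\cup\{a\}]}(X)$, and symmetrically for $B$ and $C$. This is where the \sm{} hypothesis on $A$ enters. Every vertex of $A$ that has any neighbor in $V\setminus A$ is adjacent to exactly the same set $N(A)$, and in particular to $a$. Consequently, in the matrix $\mx A_G[X,V\setminus X]$, the column indexed by a vertex $u\in V\setminus(A\cup\{a\})$ is either entirely zero (if $u\notin N(A)$) or an exact copy of the column indexed by $a$ (if $u\in N(A)$). Since duplicating columns or appending zero columns does not affect the rank over $\mathrm{GF}(2)$, the claimed equality follows, and the restriction of $\mx A_G[X,V\setminus X]$ to the ``internal'' columns in $A\setminus X$ is literally the same matrix in both graphs.

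Equipped with the claim, the remaining task is just to bound the cut-rank of every edge of $T$. Each edge inherited from the $A$-subtree induces a bipartition $(X,V\setminus X)$ with $X\subseteq A$, and by the claim its cut-rank in $G$ equals that of the corresponding edge in $T_A$, which is at most $k$; the analogous statement holds for the $B$- and $C$-subtrees. The three edges incident to $r$ induce the cuts $(A,B\cup C)$, $(B,A\cup C)$, $(C,A\cup B)$, each of which is a split of $G$ (using that $A$, $B$, $C$ are \sm s with $A\cup B\cup C=V$ and $G$ is assumed connected), so each has cut-rank at most one. The main obstacle is genuinely just the column-duplication step above; minor degeneracies such as a singleton $A$ can be handled by attaching the lone leaf directly to $r$, and the case $k=0$ does not arise because $a\in N(A)$ already forces an edge inside $G[A\cup\{a\}]$ and thus $\rw(G[A\cup\{a\}])\ge 1$.
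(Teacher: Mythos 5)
Your proof is correct and follows essentially the same route as the paper's: both constructions glue the three rank-decompositions together at the points corresponding to $a$, $b$, $c$ (identifying the leaves $l_a,l_b,l_c$ versus deleting them and adding a central node $r$ yields the same tree), and both rest on the same key observation that, by the split-module property, every column of $\mx A_G[X,V\setminus X]$ indexed by a vertex outside $A\cup\{a\}$ is either zero or a copy of the $a$-column, so cut-ranks do not increase. The only cosmetic difference is in the three central edges: the paper bounds them by the same column-duplication argument (your claim with $X=A$ gives $\rho_G(A)=\rho_{G[A\cup\{a\}]}(\{a\})\le k$ directly), whereas you invoke the cut-rank-at-most-one property of split cuts together with the observation that $k\ge 1$; both are valid.
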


\begin{proof}
Let
  $\mathcal{T}_{A} = (T_{A}, \mu_{A})$, $\mathcal{T}_{B}
 = (T_{B},
  \mu_{B})$, and $\mathcal{T}_{C} = (T_{C}, \mu_{C})$
 be witnessing rank
  decompositions of $G[A], G[B]$, and $G[C]$,
 respectively.
 
 We construct a rank decomposition $\mathcal{T} = (T, \mu)$ of $G$ as follows.  
 
 Let $l_{a}$ be the leaf (note that $\mu_{A}$ is
 bijective) of $T_{A}$ such that $\mu_{A}(a) = l_{a}$. 
Similarly, let $l_b$ and $l_c$ be the leaves such that $\mu_B(b)=l_b$ and $\mu_C(c)=l_c$, respectively.  
 We obtain $T$ from $T_{A}$ by adding disjoint copies of
 $T_{B}$ and $T_{C}$ and then identifying $l_{a}$ with the copies of
 $l_{b}$ and $l_{b}$. Since $T_{A}, T_{B}$, and $T_{C}$ are subcubic, so
 is $T$.
 
 We define the mapping $\mu: V(G) \rightarrow \SB t
  \SM $ t
 is a leaf of $T \SE$ by
  \begin{align*}
    \mu(v) = \begin{cases}
      \mu_{a}(v) &\text{if $v \in A$,} \\
      c(\mu_{b}(v)) &\text{if $v \in B$,} \\
      c(\mu_{c}(v)) &\text{otherwise,}
    \end{cases}
  \end{align*}
  where $c$ maps internal nodes in $T_{B} \cup T_{C}$ to their copies in
  $T$. The mappings $\mu_{A}, \mu_{B}$, and $\mu_{C}$ are
  bijections and $c$ is injective, so $\mu$ is injective. By
  construction, the image of $V(G)$ under $\mu$ is the set of
  leaves of $T$, so $\mu$ is a bijection. Thus $\mathcal{T} = (T,
  \mu)$ is a rank decomposition of $G$.

We prove that the width of $\mathcal{T}$ is at most $k$. Given a rank
  decomposition $\mathcal{T}^* = (T^*, \mu^*)$ and an edge $e$ of $T^*$, the
  connected components of $T^* - e$ induce a bipartition $(X, Y)$
  of the leaves of $T^*$. We set $f: (\mathcal{T}^*, e) \mapsto
  ({\mu^*}^{-1}(X), {\mu^*}^{-1}(Y))$. Take any edge $e$ of $T$. There is a
  natural bijection $\beta$ from the edges in $T$ to the edges of $T_{A} \cup
  T_{B} \cup T_{C}$. Accordingly, we distinguish three cases for $e ' =
  \beta(e)$:
 
  \begin{enumerate}
  \item $e' \in T_{A}$. Let $(U, W) = f(\mathcal{T}_{A},
    e')$. Without loss of generality assume that $a \in
    W$.  Then by construction of $\mathcal{T}$ , we have $f(\mathcal{T},
    e) = (U, W \cup B\cup C)$. Let $u\in A$ and $v\in B\cup C$. Since $A$ is \sm~either $v\notin N(A)$ and $\mathbf{A}_G(u, v) =0$ for all $u\in A$, or $v\in N(A)$ in which case $\mathbf{A}_G(u, v) =
    \mathbf{A}_G(u, a)$ for all $u\in A$. Therefore, to obtain $\mathbf{A}_G(U, W\cup B\cup C) $ one can simply copy the column corresponding to $a$ in $\mathbf{A}_G(U, W)$ or add some empty columns. This does not increase the rank of the matrix. \label{rwmodule:case1}
  \item $e' \in T_{B}$. This case is symmetric to case~\ref{rwmodule:case1}, with
    $A$ and $B$ switching their roles and  $b$ taking the role of $a$.
  \item $e' \in T_{C}$. This case is symmetric to case~\ref{rwmodule:case1}, with
    $A$ and $C$ switching their roles and  $c$ taking the role of $a$.
  \end{enumerate}
  Since $\beta$ is bijective, this proves that the rank of any bipartite
  adjacency matrix induced by removing an edge $e \in T$ is bounded by $k$. We
  conclude that the width of $\mathcal{T}$ is at most $k$ and thus
  \mbox{$\rw(G) \leq k$}.
  \qed
\end{proof}

By repeating the proof technique of Lemma~\ref{lem:unionrw} without the set $C$, we obtain the following corollary.
\begin{corollary}
\label{cor:justtwosets}
Let $k\in \Nat$ be a constant, $G=(V,E)$ a graph, and  $A$, $B$ pairwise disjoint \sm s such that $A\cup B=V$. Let $a,b\in V$ be such that $a\in N(A)$ and $b\in N(B)$. 
If $\max \big(\rw(G[A\cup \{a\}]),\rw(G[B\cup \{b\}])\big)\le k$, then $\rw(G)\le k$.
\end{corollary}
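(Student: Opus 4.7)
The plan is to mirror the construction in the proof of Lemma~\ref{lem:unionrw}, but with two decomposition pieces instead of three. Let $\mathcal{T}_A=(T_A,\mu_A)$ and $\mathcal{T}_B=(T_B,\mu_B)$ be rank-decompositions of width at most~$k$ of $G[A\cup\{a\}]$ and $G[B\cup\{b\}]$, respectively. Since $A$ and $B$ are disjoint with $A\cup B=V$, the assumption $a\in N(A)$ gives $a\in B$ and likewise $b\in A$; thus the leaves $l_a=\mu_A^{-1}(a)$ and $l_b=\mu_B^{-1}(b)$ are the exact ``contact points'' of the two decompositions.

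Let $u_a$ be the unique neighbor of $l_a$ in $T_A$, and define $u_b$ analogously. I would form $T$ from the disjoint union of $T_A$ and $T_B$ by deleting the two leaves $l_a,l_b$ together with their incident edges and adding a new edge between $u_a$ and $u_b$, and set $\mu(v)=\mu_A(v)$ for $v\in A$ and $\mu(v)=\mu_B(v)$ for $v\in B$. Deleting $l_a$ drops the degree of $u_a$ from $3$ to $2$ and the new edge restores it to $3$ (and symmetrically for $u_b$), so $T$ is subcubic; $\mu$ is a bijection onto the leaves of $T$ by construction.

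To bound the width of $(T,\mu)$ I would analyze three kinds of edges. If $e$ is the new edge $u_au_b$, the induced bipartition of $V(G)$ is exactly $(A,B)$, and because $\{A,B\}$ is a split, the matrix $\mathbf{A}_G[A,B]$ has rank at most $1\leq k$. If $e$ is inherited from $T_A$ and induces the bipartition $(X,Y)$ of the leaves of $T_A$ with $l_a\in Y$, then in $T$ the corresponding bipartition of $V(G)$ is $(\mu_A^{-1}(X),\,(\mu_A^{-1}(Y)\setminus\{a\})\cup B)$. The key observation, lifted from the proof of Lemma~\ref{lem:unionrw}, is that since $A$ is a split-module, the rows of $\mathbf{A}_G[\mu_A^{-1}(X),V\setminus \mu_A^{-1}(X)]$ restricted to the $B$-columns are either all-zero (for rows $u\notin N(B)$) or exact copies of the $a$-column of $\mathbf{A}_{G[A\cup\{a\}]}[X,Y]$ (for $u\in N(B)$, since then $N_G(u)\cap B=N_G(a)\cap B$); meanwhile the $(A\setminus \mu_A^{-1}(X))$-columns agree with those of $\mathbf{A}_{G[A\cup\{a\}]}[X,Y]$. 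Hence the former matrix is obtained from the latter by duplicating and zeroing the single $a$-column, an operation that does not change the rank, so this cut has rank at most $k$. The case of an edge inherited from $T_B$ is symmetric.

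The main obstacle is not really new: essentially all of the work is done by the column-duplication argument taken from Lemma~\ref{lem:unionrw}. The only additional care needed is to verify that collapsing away the two extension leaves $l_a,l_b$ and splicing the two subcubic trees together via the single edge $u_au_b$ genuinely yields a rank-decomposition of $G$, which is straightforward once one unpacks how each edge of $T$ corresponds to an edge of $T_A$ or $T_B$ and which side of the resulting bipartition each of $A$ and $B$ lands on.
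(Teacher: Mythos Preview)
Your proof is correct and follows the same approach as the paper, which simply states that Corollary~\ref{cor:justtwosets} is obtained by repeating the proof of Lemma~\ref{lem:unionrw} without the set~$C$. The only cosmetic difference is that you delete the two extension leaves $l_a,l_b$ and splice in a new edge $u_au_b$, whereas the paper's construction identifies $l_a$ with $l_b$ into a single degree-$2$ internal node; both yield the same bipartitions (the $(A,B)$ cut and the ``old'' $T_A$/$T_B$ cuts), and your column-duplication argument is exactly the one used in Lemma~\ref{lem:unionrw}.
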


%
\begin{lemma}\label{lem:smallrw}
 Let $k \in \Nat$ be a constant. Let $G=(V,E)$ be a connected graph and let $M_1, M_2$
  be \sm s of $G$ such that $M_1 \cup M_2 = V$ and $\max(\rw(G[M_1]),
  \rw(G[M_2])) \leq k$. Then $\rw(G) \leq k+1$.
\end{lemma}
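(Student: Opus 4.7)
The plan is to exploit the split $\{M_1, V \setminus M_1\}$ of $G$ in order to stitch together rank-decompositions of the two sides. First, dispose of the trivial situation $M_1 = V$, in which $\rw(G) = \rw(G[M_1]) \leq k$; hence assume $W := V \setminus M_1$ is nonempty. Since $M_1 \cup M_2 = V$ we have $W \subseteq M_2$, and because rank-width is monotone under induced subgraphs, $\rw(G[W]) \leq \rw(G[M_2]) \leq k$. Moreover, as $M_1$ is a split module of the connected graph $G$ and $W \neq \emptyset$, the pair $\{M_1, W\}$ is an actual split of $G$, which forces the bipartite adjacency matrix $\mx A_G[M_1, W]$ to have rank at most one over $\mathrm{GF}(2)$.

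Next I would combine rank-decompositions of the two sides. Take witnessing decompositions $(T_1, \mu_1)$ of $G[M_1]$ and $(T_W, \mu_W)$ of $G[W]$, both of width at most $k$. Assuming $|M_1|, |W| \geq 2$, pick an arbitrary edge in each and subdivide it by a new degree-$2$ node, call them $v_1$ and $v_W$; then add a new edge $v_1 v_W$. The resulting tree $T$, together with the bijection $\mu$ inherited from $\mu_1$ and $\mu_W$, is a rank-decomposition of $G$ of maximum degree three. The boundary cases $|M_1| \leq 1$ or $|W| \leq 1$ are handled by directly attaching the lone vertex (if any) to a subdivided edge on the other side, and if $|V| \leq 2$ the claim is immediate.

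Finally I would verify the width bound, which is the only non-routine step. For the new edge $v_1 v_W$, the induced bipartition is precisely $(M_1, W)$, with cut-rank at most one. For any other edge, inducing a bipartition with $X \subseteq M_1$ on one side (the case $X \subseteq W$ being symmetric), subadditivity of matrix rank under column concatenation gives
\[ \rho_G(X) = \mathrm{rank}(\mx A_G[X, V \setminus X]) \leq \mathrm{rank}(\mx A_G[X, M_1 \setminus X]) + \mathrm{rank}(\mx A_G[X, W]), \]
where the first summand is at most $k$ because $(T_1, \mu_1)$ witnesses width $k$ on $G[M_1]$, and the second is at most $\mathrm{rank}(\mx A_G[M_1, W]) \leq 1$ by the split property. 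Hence every edge of $T$ has cut-rank at most $k+1$ in $G$, so $\rw(G) \leq k+1$. The main obstacle is mostly bookkeeping around the degenerate small cases; conceptually, the extra $+1$ in the bound is exactly the price of the single split separating $M_1$ from $W$.
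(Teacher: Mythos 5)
Your proof is correct, and it reaches the bound by a somewhat different route than the paper. Both arguments start from the same observation — since $M_1$ is a split-module of the connected graph $G$, the pair $\{M_1, V\setminus M_1\}$ is a split, and $V\setminus M_1\subseteq M_2$ has rank-width at most $k$ — and both then combine rank-decompositions of the two sides. The difference is where the $+1$ is paid. The paper augments each side by one vertex from the opposite frontier (forming $G[M_1\cup\{v_2\}]$ and $G[M_{22}\cup\{v_1\}]$, each of rank-width at most $k+1$ since adding a vertex raises rank-width by at most one) and then invokes its gluing statement (Corollary on two split-modules covering $V$, derived from the three-part Lemma~\ref{lem:unionrw}), which joins the decompositions at the leaves of the added vertices without any further loss. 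You instead glue the raw decompositions of $G[M_1]$ and $G[W]$ via two subdivision nodes and absorb the $+1$ directly in the width analysis: every cut on the $M_1$ side has the form $(X,(M_1\setminus X)\cup W)$, and subadditivity of $\mathrm{GF}(2)$-rank under column concatenation together with $\mathrm{rank}(\mx A_G[M_1,W])\le 1$ (the split property) gives $k+1$. Your version is more self-contained — it does not route through the paper's Lemma~\ref{lem:unionrw} machinery and makes the source of the extra unit of rank completely explicit — at the cost of a little bookkeeping for the degenerate cases ($M_1\in\{\emptyset,V\}$, singleton sides), which you correctly flag but leave as routine. Both arguments are sound; the underlying reason for the bound (the cross-interaction of a split has rank at most one) is the same in each.
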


\begin{proof}
Let $M_{22} = M_2\setminus M_1$. 
Clearly, $\{M_1, M_{22}\}$ is a split. Since rank-width is preserved by taking induced subgraphs, the graph $G[M_{22}]$ has rank-width at most $k$. Let $v_1\in N(M_{22})$ and $v_2\in N(M_1)$. It is easy to see that 
graphs  $G_{1} =
  G[M_{1} \cup \{v_{2}\}]$ and $G_{2} = G[M_{22} \cup \{v_{1}\}]$ have rank-width at most
 $k+1$. We finish the proof by applying Corollary~\ref{cor:justtwosets}, with $M_1$, $M_{22}$
 in roles of $A$, $B$ and $v_1$, $v_2$ in roles of $a$, $b$, respectively.
 \qed
\end{proof}}
\lv{\lemunion}

The following lemma in essence shows that the relation of being in a split-module of small rank-width is transitive (assuming sufficiently high rank-width). The significance of this will become clear later on.
\sv{\begin{lemma}}
\lv{\begin{lemma}}
\label{lem:union}
  Let $k \in \Nat$ be a constant. Let $G=(V,E)$ be a connected graph with rank-width at least $k+2$ and let $M_1, M_2$
  be {\sm s} of $G$ such that $M_1 \cap M_2 \neq \emptyset$ and $\max(\rw(G[M_1]),
  \rw(G[M_2])) \leq k$. Then $M_1 \cup M_2$ is a {\sm} of $G$ and $\rw(G[M_1\cup M_2]) \leq k$.
\end{lemma}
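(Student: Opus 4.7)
The plan is to treat the split-module assertion and the rank-width bound as two separate tasks. If $M_1 \subseteq M_2$ or vice versa the conclusion is trivial, so assume $M_1$ and $M_2$ are overlapping; then Lemma~\ref{lem:smintersection} immediately gives that $M_1 \cup M_2$ is a split-module of $G$. Along the way it will be useful to rule out $M_1 \cup M_2 = V$: if this equality held, then together with the split-module $M_2 \setminus M_1$ (a split-module by Lemma~\ref{lem:diferrence}) one could apply Lemma~\ref{lem:smallrw} directly to $G$ and obtain $\rw(G) \leq k+1$, contradicting the hypothesis $\rw(G) \geq k+2$. Hence $M_1 \cup M_2 \neq V$, and the moreover-part of Lemma~\ref{lem:smintersection} also yields that $C := M_1 \cap M_2$ is a split-module of $G$.

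For the rank-width bound the plan is to apply the three-set version Lemma~\ref{lem:unionrw} to $G[M_1 \cup M_2]$ with the tripartition $A := M_1 \setminus M_2$, $B := M_2 \setminus M_1$, $C := M_1 \cap M_2$. All three are split-modules of $G$ (by Lemma~\ref{lem:diferrence} for $A$ and $B$, and by the observation above for $C$), and I would first verify that each of them remains a split-module when restricted to $G[M_1 \cup M_2]$; this follows from the frontier characterization of splits, since deleting vertices outside a split-module preserves the uniform external neighborhood of its frontier vertices. The witnesses $a \in N(A)$, $b \in N(B)$, $c \in N(C)$ should then be chosen so that $A \cup \{a\}$ and $C \cup \{c\}$ lie inside $M_1$ while $B \cup \{b\}$ lies inside $M_2$; under such a choice each of the three augmented induced subgraphs has rank-width at most $k$ by monotonicity, so Lemma~\ref{lem:unionrw} concludes that $\rw(G[M_1 \cup M_2]) \leq k$.

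The main obstacle is ensuring that these witnesses actually exist, since picking $a \in C$ adjacent to $A$ requires an edge of $G[M_1]$ between $A$ and $C$, and symmetrically for $b \in C$ adjacent to $B$ in $G[M_2]$. If such an edge is missing in $G[M_1]$, then $A$ and $C$ sit in different connected components of $G[M_1]$, and the plan is to use the split-module property of $M_1$ together with the connectedness of $G$ to argue that the offending part must then split off as a union of connected components of $G[M_1 \cup M_2]$ as well: either $A \cap M_1' = \emptyset$, in which case $A$ has no external neighbors at all, or $A \cap M_1' \neq \emptyset$ but then the frontier-adjacency class outside $M_1$ fails to meet $B$ (else transitivity of the uniform neighborhood would produce the forbidden $A$-to-$C$ edge through an external witness). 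In every such degenerate branch the contribution of the isolated part to $\rw(G[M_1 \cup M_2])$ is bounded by $\rw(G[M_1]) \leq k$, and the remaining piece is an induced subgraph of $G[M_2]$ or is handled by a reduced application of Corollary~\ref{cor:justtwosets}. Making this case distinction airtight is the principal technical step of the proof.
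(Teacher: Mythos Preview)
Your approach is essentially the paper's. Both arguments dispose of the nested case, use Lemma~\ref{lem:smallrw} to rule out $M_1\cup M_2=V$, invoke Lemma~\ref{lem:smintersection} and Lemma~\ref{lem:diferrence} to see that $M_{11}=M_1\setminus M_2$, $M_{22}=M_2\setminus M_1$, $M_{12}=M_1\cap M_2$ are split-modules, and then feed this tripartition into Lemma~\ref{lem:unionrw}.

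The one difference is in how the witnesses are chosen and how much care is taken over their existence. The paper does \emph{not} carry out the degenerate-case analysis you outline. It simply fixes $v_{12}\in N(V\setminus M_{12})$ (a frontier vertex of $M_{12}$ in $G$) and $v_{22}\in N(V\setminus M_{22})$, uses $v_{12}$ in the role of both $a$ and $b$ and $v_{22}$ in the role of $c$, observes that $G[M_{11}\cup\{v_{12}\}]$, $G[M_{22}\cup\{v_{12}\}]$, $G[M_{12}\cup\{v_{22}\}]$ are induced subgraphs of $G[M_1]$ or $G[M_2]$ (hence of rank-width $\le k$), and appeals to Lemma~\ref{lem:unionrw} directly. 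Your worry that a frontier vertex of $M_{12}$ might fail to be adjacent to $M_{11}$ (or to $M_{22}$) inside $G[M_1\cup M_2]$ is legitimate, but the paper does not engage with it; it treats the application of Lemma~\ref{lem:unionrw} as immediate. So the elaborate branch on ``missing edges between $A$ and $C$'' that you flag as the principal technical step is something you are adding on top of what the paper does, not something the paper's proof contains.
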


\newcommand{\pflemunion}[0]{
\begin{proof}
  If $M_1 \subseteq M_2$ or $M_2 \subseteq M_1$ the result is
  immediate, hence we may assume that they are overlapping.
  Lemma~\ref{lem:smallrw} and $\rw(G)\ge k+2$ together imply that $M_1\cup M_2 \neq V$.
Let
  $M_{11} = M_1 \setminus M_2, M_{22} = M_2 \setminus M_1$, and
 $M_{12} =
  M_1 \cap M_2$. It follows from Lemma~\ref{lem:smintersection} and Lemma~\ref{lem:diferrence} that these
  sets are {\sm s} of $G$. Let $v_{11} \in N(V\setminus M_{11}), v_{22} \in N(V\setminus M_{22})$, and
  $v_{12} \in N(V\setminus M_{12})$. We show that $\rw(G[M_1 \cup M_2]) \leq k$. By
  assumption, both
 $G[M_1]$ and $G[M_2]$ have rank-width at most $k$. Since
  rank-width
 is preserved by taking induced subgraphs, the graphs $G_{11} =
  G[M_{11} \cup \{v_{12}\}]$, $G_{12} = G[M_{12} \cup \{v_{22}\}]$,
 and
  $G_{22} = G[M_{22} \cup \{v_{12}\}]$ also have rank-width at most
 $k$. We finish the proof by applying Lemma~\ref{lem:unionrw}, with $M_{11}$, $M_{22}$, $M_{12}$
 taking the roles of $A$, $B$, and $C$ and $v_{12}$, $v_{12}$, and $v_{22}$ taking the roles of $a$, $b$, and $c$, respectively. 
 \qed
\end{proof}}

\lv{\pflemunion}

\sv{
\begin{proof}[Sketch]
The proof relies on a series of lemmas building on Theorem~\ref{thm:splitintersection}. 

  If $M_1 \subseteq M_2$ or $M_2 \subseteq M_1$ the result is
  immediate, hence we may assume that they are overlapping.
  $\rw(G)\ge k+2$ implies that $M_1\cup M_2 \neq V$. The fact that $M_1 \cup M_2$ is a {\sm} of $G$ then follows from Theorem~\ref{thm:splitintersection}. Let
  $M_{11} = M_1 \setminus M_2, M_{22} = M_2 \setminus M_1$, and
 $M_{12} =
  M_1 \cap M_2$. These sets can be shown to be {\sm s} of $G$. Let $v_{11} \in N(V\setminus M_{11}), v_{22} \in N(V\setminus M_{22})$, and
  $v_{12} \in N(V\setminus M_{12})$. We show that $\rw(G[M_1 \cup M_2]) \leq k$. By
  assumption, both
 $G[M_1]$ and $G[M_2]$ have rank-width at most $k$. Since
  rank-width
 is preserved by taking induced subgraphs, the graphs $G_{11} =
  G[M_{11} \cup \{v_{12}\}]$, $G_{12} = G[M_{12} \cup \{v_{22}\}]$,
 and
  $G_{22} = G[M_{22} \cup \{v_{12}\}]$ also have rank-width at most
 $k$. The proof can be completed by showing how the rank-decompositions of these three graphs can be combined into a rank-decomposition for $G[M_1\cup M_2]$.
\qed
\end{proof}
}
\begin{definition}
  Let $G$ be a graph and $k \in \Nat$.  We define a relation
  $\sim^G_k$ on $V(G)$ by letting $v \sim^G_k w$ if and only if there
  is a {\sm} $M$ of $G$ with $v,w \in M$ and $\rw(G[M]) \leq k$. We
  drop the superscript from $\sim^G_k$ if the graph $G$ is clear from
  context.
\end{definition}

Using Lemma~\ref{lem:union} to deal with transitivity, we prove the following.
\lv{\begin{proposition}}
\sv{\begin{proposition}}
\label{prop:eqrel}
  For every $k \in \Nat$ and graph $G=(V,E)$ with rank-width at least $k+2$, the relation $\sim_k$ is an
  equivalence relation, and each equivalence class $U$ of $\sim_k$ is a {\sm}
  of $G$ with $\rw(G[U]) \leq k$.
\end{proposition}

\newcommand{\pfpropeqrel}[0]{
\begin{proof} 
  Let $G$ be a graph and $k \in \Nat$. For every $v \in V$, the
  singleton $\{v\}$ is a {\sm} of $G$, so $\sim_k$ is reflexive. Symmetry of
  $\sim_k$ is trivial. For transitivity, let $u,v,w \in V$ be such that $u
  \sim_k v$ and $v \sim_k w$. Then there are {\sm s} $M_1, M_2$ of $G$ such
  that $u,v \in M_1$, $v, w \in M_2$, and $\rw(G[M_1]), \rw(G[M_2]) \leq k$; in particular, since $\rw(G)\geq k+2$ this implies that there exists a connected component $G'$ of $G$ containing $u,v,w$. By
  Lemma~\ref{lem:union}, $M_1 \cup M_2$ is a {\sm} of $G'$ (and hence also of $G$) such that $\rw(G[M_1 \cup
  M_2]) \leq k$. In combination with $u,w \in M_1 \cup M_2$ that implies $u
  \sim_k w$. This concludes the proof that $\sim_k$ is an equivalence
  relation.
  
  Now let $v \in V$, $G'$ be the connected component containing $v$, and let $U = [v]_{\sim_k}$. For each $u \in U$ there is
  a {\sm} $W_u$ of $G'$ (and of $G$) with $u,v \in W_u$ and $\rw(G[W_u]) \leq k$. By
  Lemma~\ref{lem:union}, $W = \bigcup_{u \in U} W_u$ is a {\sm} of $G'$ (and hence also of $G$) and $\rw(G[W]) \leq k$. Clearly, $[v]_{\sim_k} \subseteq W$. On the other hand,
  $u \in W$ implies $v \sim_k u$ by definition of $\sim_k$, so $W \subseteq
  [v]_{\sim_k}$. That is, $W = [v]_{\sim_k}$.
\end{proof}}
\lv{\pfpropeqrel}

\begin{corollary}
\label{cor:equiv}
Any graph $G$ of rank-width at least $k+2$ has its vertex set uniquely partitioned by the equivalence classes of $\sim_k$ into inclusion-maximal \sm s of rank-width at most $k$.
\end{corollary}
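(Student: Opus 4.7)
The plan is to derive the corollary almost directly from Proposition~\ref{prop:eqrel}, which already does nearly all the work. That proposition tells us two things under the hypothesis $\rw(G)\ge k+2$: first, the relation $\sim_k$ is an equivalence relation on $V(G)$; and second, every one of its equivalence classes is a split-module of $G$ whose induced subgraph has rank-width at most $k$. Since equivalence classes always form a uniquely determined partition of the underlying set, the existence and uniqueness of the partition, together with the split-module and rank-width properties of its blocks, are immediate from the proposition.

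The only remaining claim is inclusion-maximality among split-modules of rank-width at most $k$, and I would argue this by contradiction. Fix an equivalence class $U=[v]_{\sim_k}$ and suppose there were a split-module $M$ of $G$ with $U\subsetneq M$ and $\rw(G[M])\le k$. Pick any $w\in M\setminus U$. Since both $v$ and $w$ lie in the common split-module $M$ with $\rw(G[M])\le k$, the very definition of $\sim_k$ gives $v\sim_k w$, so $w\in [v]_{\sim_k}=U$, contradicting $w\notin U$. Hence no such $M$ exists, and $U$ is inclusion-maximal among split-modules of $G$ of rank-width at most $k$.

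I do not expect any real obstacle, as the substantive technical content has already been absorbed into Proposition~\ref{prop:eqrel} and, further upstream, into Lemma~\ref{lem:union}, which supplies the key closure property that two intersecting split-modules of small rank-width can be glued into a single split-module of the same rank-width bound. The assumption $\rw(G)\ge k+2$ is needed only to invoke Proposition~\ref{prop:eqrel}; it does not need to reappear explicitly in the maximality argument above. The corollary is thus a two-line deduction, and the only thing to be careful about is not inadvertently dropping the hypothesis on $\rw(G)$, since without it the transitivity of $\sim_k$ (and hence the very notion of ``equivalence class'') could break down.
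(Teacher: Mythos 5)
Your proposal is correct and follows the same route the paper intends: the partition, split-module, and rank-width properties come directly from Proposition~\ref{prop:eqrel}, and inclusion-maximality is exactly the short contradiction you give, since any strictly larger split-module $M$ with $\rw(G[M])\le k$ would force every $w\in M$ into $[v]_{\sim_k}$ by the definition of $\sim_k$. The paper states the corollary without a separate proof precisely because this deduction is immediate.
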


\lv{Next, we state a simple but useful observation.}

\newcommand{\obsdisconequiv}[0]{
\begin{observation}
\label{obs:disconequiv}
Let $k\in \Nat$, $G$ be a disconnected graph with rank-width at least $k+2$, and $\CC(G)$ be the set of connected components of $G$. Then $\sim^G_k=\bigcup_{G' \in \CC(G)}\sim^{G'}_k$.
\end{observation}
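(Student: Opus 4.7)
My plan is to prove the set equality by verifying both inclusions directly from the definition of $\sim_k$; the hypothesis $\rw(G)\geq k+2$ is used only in the forward direction to rule out the witness $M=V(G)$.

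For the inclusion $\sim^G_k\subseteq \bigcup_{G' \in \CC(G)}\sim^{G'}_k$, I would take $v\sim^G_k w$ witnessed by a split-module $M$ of $G$ with $v,w\in M$ and $\rw(G[M])\leq k$. Since $v\in M$ we have $M\neq \emptyset$, and $\rw(G[V(G)])=\rw(G)\geq k+2>\rw(G[M])$ forces $M\neq V(G)$. The definition of split-module of $G$ then yields a connected component $G'$ of $G$ with $M\subseteq V(G')$ and $\{M,V(G')\setminus M\}$ a split of $G'$. In particular $v,w\in V(G')$, the set $M$ is itself a split-module of $G'$, and the identity $G'[M]=G[M]$ transfers the rank-width bound, yielding $v\sim^{G'}_k w$.

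For the reverse inclusion I would take some component $G'\in \CC(G)$ together with a split-module $M$ of $G'$ witnessing $v\sim^{G'}_k w$ and satisfying $\rw(G'[M])\leq k$. Since $G'$ is a connected component of $G$, any split-module of $G'$ is immediately a split-module of $G$ by the definition (with the boundary cases $M\in\{\emptyset,V(G')\}$ covered by the ``technical reasons'' clause read componentwise), and $G[M]=G'[M]$ preserves the rank-width bound, so $v\sim^G_k w$. I do not expect a real obstacle in this proof: it is essentially bookkeeping with the split-module definition. The only mildly delicate step is the handling of the boundary witness $M=V(G')$, which can arise only when $\rw(G')\leq k$ and is taken care of by the same definitional convention.
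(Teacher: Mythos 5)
Your proof is correct and is exactly the definitional bookkeeping the paper has in mind: the Observation is stated without proof, being immediate from the fact that every split-module of $G$ other than $V(G)$ and $\emptyset$ lives inside a single connected component, and $\rw(G)\geq k+2$ rules out the witness $V(G)$. The only point worth noting is that your boundary case $M=V(G')$ is covered not by the ``technical reasons'' clause (which concerns $V(G)$ itself) but by the fact that the paper admits trivial bipartitions such as $\{V(G'),\emptyset\}$ as splits; this does not affect the validity of the argument.
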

}
\lv{\obsdisconequiv}

Now that we know $\sim_k$ is an equivalence, we show how to compute it in FPT time.

\lv{\begin{proposition}}
\sv{\begin{proposition}}
\label{prop:equdecision}
  Let $k \in \Nat$ be a constant. Given an $n$-vertex graph $G$ of rank-width at least $k+2$ and two vertices $v, w$, we can decide whether $v \sim_k w$ in time $\bigoh(n^3)$.
\end{proposition}

\newcommand{\pfpropequdecision}[0]{
\begin{proof}
From Observation~\ref{obs:disconequiv} it follows that if the proposition holds for connected graphs, then it holds for disconnected graphs as well; hence we may assume that $G$ is connected. By Theorem~\ref{thm:computingGLT} we can compute the unique split-tree $ST(G)=(T,\FF)$ in 
$O(m+n)\alpha(m+n)$ time. Due to Theorem~\ref{thm:splitinGLT}, every split in $G$ is the bipartition of leaves of $T$ induced either by removing an internal tree-edge of $T$ or an edge created by a node-split of a degenerate vertex of $T$. 

Vertices of $G$ are leaves of $T$ and we can find a path $P$ between $v$ and $w$ in $T$ in time linear in size of $T$. There are at most linearly many vertices on the path and we can split every degenerate vertex on $P$ in a way that every degenerate vertex on a new path $P'$ between $u$ and $v$ will have $3$ vertices. Denote the new tree by $T'$.

Now every edge between $P'$ and $T'\setminus P'$ corresponds to a minimal \sm~containing $v$ and $w$. Conversely, as a consequence of Theorem~\ref{thm:splitinGLT} every minimal {\sm} containing $v$ and $w$ is induced by removing an edge between $P'$ and $T'\setminus P'$, and let $M_{vw}$ be the set containing all of these at most $|T|$ minimal split modules. Hence, $v \sim_k w$ if and only if there is a {\sm} $X$ in $M_{vw}$ such that $\rw(G[X])\leq k$.
By Theorem~\ref{thm:rankdecomp} we can decide, for each such $X$, whether $\rw(G[X])\le k$ in time $f(k)\cdot n^3$, where $f$ is some computable function.
\qed
\end{proof}}

\lv{\pfpropequdecision}

\sv{\begin{proof}[Sketch]
The definition of \sm s allows us to consider each connected component of a graph separately. We then compute the so-called \emph{split-tree}~\cite{Cunningham82,GioanPaul07,GioanPaul12,GioanPaulTedderCorneil14} of $G$ and use it to list all minimal \sm s containing $v$ and $w$. Finally, we check whether any of these \sm s has rank-width at most $k$ by using Theorem~\ref{thm:rankdecomp}.
\qed
\end{proof}
}

\lv{In the rest of this section we show how to find a $k$-\wsm{} to any graph class $\HH$ characterized by a finite obstruction set $\FF$. We first present the algorithm and then show its running time and correctness.}
\sv{We are now ready to present an algorithm for finding a $k$-\wsm{} to any graph class $\HH$ characterized by a finite obstruction set $\FF$.}

\medskip
\RestyleAlgo{boxruled}
\begin{algorithm}[H]
\LinesNumbered

\SetKwInOut{Input}{Input}\SetKwInOut{Output}{Output}
\SetKwInOut{Parameter}{Parameter}

 \Input{$k\in \Nat_0$, $n$-vertex graph $G$, equivalence $\sim$ over a superset of $V(G)$}
 \Output{A $k$-cardinality set $\vec{X}$ of subsets of $V(G)$, or \emph{False}}


\BlankLine
  \uIf{$G$ does not contain any $D\in \FF$ as an induced subgraph}{\KwRet{$\emptyset$}}
  \Else{$D':=$ an induced subgraph of $G$ isomorphic to an arbitrary $D\in \FF$\;}
 \lIf{$k=0$}{\KwRet{False}}

\ForEach{$[a]_{\sim}$ of $G$ which intersects with $V(D')$}{
	$\vec{X}=\text{FindWSM}_\FF(k-1, G-[a]_{\sim}, \sim)$\;\label{algstep:recursion}
	\If{$\vec{X}\neq$ \emph{False}}{
	\KwRet{$\vec{X}\cup \{[a]_{\sim}\}$}	
	}
}

\KwRet{False}
 \caption{FindWSM$_\FF$} \label{alg:findingmodulator}
\end{algorithm}
 \medskip
 
We will use $\sim_k$ as the input for \emph{FindWSM}$_\FF$, however considering general equivalences as inputs is useful for proving correctness.
\lv{Recall that the equivalence $\sim_k$ (or, more precisely, the set of its equivalence classes) can be computed in time $n^2\cdot f(k)\cdot n^3$ for some function $f$ thanks to Proposition~\ref{prop:equdecision}, and this only needs to be done once before starting the algorithm.} \lv{The following two lemmas show that Algorithm~\ref{alg:findingmodulator} is correct and runs in FPT time.}

\sv{
\begin{lemma}
\label{lem:algo}
There exists a constant $c$ such that \emph{FindWSM}$_\FF$ runs in time $c^k\cdot n^{\bigoh(1)}$. Furthermore, if $G$ is a graph of rank-width at least $k+2$ and $\sim_k$ is the equivalence computed by Proposition~\ref{prop:equdecision}, then \emph{FindWSM}$_\FF(k,G,\sim_k)$ outputs a $k$-wsm{} to $\HH$ or correctly detects that no such $k$-wsm exists in $G$.
\end{lemma}
}

\newcommand{\algo}[0]{
\begin{lemma}
\label{lem:algoruntime}
There exists a constant $c$ such that \emph{FindWSM}$_\FF$ runs in time $c^k\cdot n^{\bigoh(1)}$.
\end{lemma}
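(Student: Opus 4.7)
The plan is to bound the branching factor of the recursion tree by a constant depending only on $\FF$, bound the depth by $k$, and verify that each recursive call performs only polynomial work.

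First I would let $d = \max_{D \in \FF} |V(D)|$, which is a constant since $\FF$ is finite. At every invocation of FindWSM$_\FF$, the algorithm either returns immediately (when $G$ is $\FF$-free or when $k=0$) or isolates a single induced copy $D'$ of some obstruction in $\FF$ and then branches. Since $|V(D')|\le d$, the number of distinct equivalence classes $[a]_\sim$ that intersect $V(D')$ is at most $d$; hence the branching factor is bounded by $d$. The parameter $k$ strictly decreases in every recursive call, and no recursion is made once $k=0$, so the depth of the recursion tree is at most $k$. Consequently the total number of nodes in the recursion tree is at most $d^k$.

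Next I would account for the work done at each node. The equivalence $\sim$ is passed unchanged into the recursive calls, so no recomputation is ever needed inside FindWSM$_\FF$. To find an induced copy of some $D\in\FF$ in $G$ (or to certify its absence), we enumerate all $|V(D)|$-element subsets of $V(G)$ and test whether any induces a copy of $D$; since $|V(D)|\le d$ is a constant, this runs in time $n^{\bigoh(1)}$. Enumerating the equivalence classes intersecting $V(D')$ and constructing the induced subgraph $G-[a]_\sim$ are also clearly polynomial in $n$. Hence each node of the recursion tree contributes only $n^{\bigoh(1)}$ work.

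Combining, the total running time is bounded by $d^k \cdot n^{\bigoh(1)}$, and setting $c:=d$ (a constant depending only on $\FF$) yields the claimed bound $c^k\cdot n^{\bigoh(1)}$. I do not expect any real obstacle here; the only point to be careful about is that the equivalence $\sim$ is defined over a superset of $V(G)$ and is not modified by the recursion, so its restriction to the vertex set of the recursive subinstance $G-[a]_\sim$ remains a valid equivalence without any recomputation, and the polynomial overhead per node therefore does not hide any hidden dependence on $k$.
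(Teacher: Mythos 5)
Your proposal is correct and follows essentially the same argument as the paper: both bound the branching factor by the order of the largest obstruction in $\FF$ (since only equivalence classes intersecting $V(D')$ are branched on), bound the recursion depth by $k$, and observe that the per-call work is polynomial because $\FF$ is finite. The paper phrases the recursion-tree count as an induction on $k$, but this is only a cosmetic difference from your direct node-counting.
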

\begin{proof}
The time required to perform the steps on rows {$\mathbf{2}$-$\mathbf{6}$} is $n^{O(1)}$ since $\FF$ is finite. For the same reason, it holds that $|V(D')|$ and hence also the number of times the procedure on rows {$\mathbf{8}$-$\mathbf{13}$} is called are bounded by a constant, say $c$ (to be precise, $c$ is bounded by the order of the largest graph in $\FF$). 

For the rest of the proof, we proceed by induction on $k$.
First, if $k=0$, then the algorithm is polynomial by the above. So assume that $k\ge 1$ and the algorithm for $k-1$ runs in time at most $c^{k-1}\cdot n^{O(1)}$. Then the algorithm for $k$ will run in polynomial time up to rows {$\mathbf{8}-\mathbf{13}$}, where it will make at most $c$ calls to the algorithm for $k-1$, which implies that the running time for $k$ is bounded by $c^{k}\cdot n^{O(1)}$. \qed
\end{proof}

\begin{lemma}
\label{lem:algocorrect}
Let $k\geq 0$, $G=(V,E)$ be a graph and $\sim$ an equivalence over a superset of $V$. Then
\emph{FindWSM}$_\FF(k,G,\sim)$ outputs a set $\vec{X}$ of at most $k$ equivalence classes of $\sim$ such that $G-\vec{X}$ is $\FF$-free. 
\end{lemma}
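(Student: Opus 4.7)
The plan is to prove the lemma by straightforward induction on the parameter $k$, mirroring the recursive structure of \emph{FindWSM}$_\FF$. The statement is really a soundness claim: whenever the algorithm returns a set (rather than \emph{False}), that set is a valid collection of at most $k$ equivalence classes of $\sim$ whose removal eliminates every induced copy of an obstruction.

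For the base case $k=0$, I would observe that the only way the algorithm reaches a return value is either at line~$\mathbf{2}$, where it outputs $\emptyset$ precisely when $G$ is already $\FF$-free (so $|\emptyset|\le 0$ and $G-\emptyset=G$ has the desired property), or at line~$\mathbf{6}$, where it returns \emph{False}; in the latter case the claim is vacuous.

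For the inductive step $k\ge 1$, I would assume the lemma for $k-1$. If $G$ is $\FF$-free, the algorithm again returns $\emptyset$ and we are done. Otherwise, an induced obstruction $D'$ is located, and the algorithm iterates over all equivalence classes $[a]_\sim$ that meet $V(D')$, issuing the recursive call on line~$\mathbf{9}$. If every recursive call returns \emph{False}, the outer call also returns \emph{False} and there is nothing to verify. If instead some recursive call on input $(k-1,\,G-[a]_\sim,\,\sim)$ returns a set $\vec{X}'$, the induction hypothesis yields $|\vec{X}'|\le k-1$, that every element of $\vec{X}'$ is an equivalence class of $\sim$, and that $(G-[a]_\sim)-\vec{X}'$ is $\FF$-free. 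The algorithm then returns $\vec{X}=\vec{X}'\cup\{[a]_\sim\}$. Since $[a]_\sim$ is disjoint from $V(G-[a]_\sim)$ and each class in $\vec{X}'$ was a class taken from that smaller graph, $[a]_\sim$ cannot coincide with any class in $\vec{X}'$, so $|\vec{X}|\le k$. Moreover, $G-\vec{X}$ equals $(G-[a]_\sim)-\vec{X}'$ and is therefore $\FF$-free, which is exactly what we needed.

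There is essentially no substantive obstacle here; the only point requiring a moment of care is the observation that $[a]_\sim$ is automatically distinct from the classes returned by the recursive call (because it has been deleted before that call), so that combining them really does add exactly one to the count. Everything else is a direct unfolding of the definitions and the recursion.
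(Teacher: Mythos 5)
Your proof establishes only the soundness half of the lemma: whenever the algorithm returns a set, that set consists of at most $k$ equivalence classes and its removal makes $G$ $\FF$-free. You explicitly wave away the case where all recursive calls return \emph{False} ("there is nothing to verify"), but that is precisely where the substance of the lemma lies. The lemma is used (via Corollary~\ref{cor:algocorrect} and Theorem~\ref{thm:main-find}) to conclude that the algorithm \emph{correctly detects that no solution exists} when it returns \emph{False}; equivalently, that whenever a valid set of at most $k$ equivalence classes exists, the algorithm actually outputs one. Your argument never shows this, so as written it does not support the way the lemma is applied.

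The missing idea is the standard obstruction-hitting argument that justifies restricting the branching to classes meeting $V(D')$. Suppose $G$ admits a set $\vec{X}$ of at most $k$ equivalence classes with $G-\vec{X}$ $\FF$-free. Since $D'$ is an induced obstruction in $G$, the union of $\vec{X}$ must intersect $V(D')$, so some class $X_i\in\vec{X}$ intersects $V(D')$ and is therefore among the classes the loop iterates over. Then $\vec{X}\setminus\{X_i\}$ is a valid set of at most $k-1$ classes for $G-X_i$, so by the induction hypothesis the recursive call $\text{FindWSM}_\FF(k-1,G-X_i,\sim)$ returns some solution rather than \emph{False}, and hence the outer call does not fall through to the final \emph{False}. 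Without this step there is no reason the algorithm could not miss every solution (all of whose classes avoid $V(D')$ is impossible, but you have to say why) and wrongly report failure. The paper's proof is essentially this completeness argument combined with the soundness observation you did make; adding the paragraph above would complete your proof.
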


\begin{proof}
If $G$ does not contain any $D$ as an induced subgraph, then we correctly return the empty set. So, assume there exists an induced subgraph $D'$ of $G$ isomorphic to $D$.  
We prove the lemma by induction on $k$. 

Clearly, if $k=0$ but there exists some obstruction, then the algorithm outputs \emph{False} and this is correct; if $k=0$ and no obstruction exists, then the algorithm correctly outputs $\emptyset$. 
Let $k\ge 1$ and assume that the algorithm is correct for $k-1$. If $G$ does not contain any such $\vec{X}$, then for any equivalence class $[a]_\sim$, FindWSM$_\FF(k-1, G-[a]_{\sim}, \sim)$ will correctly output \emph{False}. 

On the other hand, assume $G$ does contain some $\vec{X}$ with the desired properties. In particular, this implies that $\vec{X}$ must intersect $V(D')$. Let $X_i$ be an arbitrary equivalence class of $\vec{X}$ which intersects $V(D')$. Then $\vec{X}'\setminus \{X_i\}$ is a set of at most $k-1$ equivalence classes of $\sim$ in $G-X_i$, and hence FindWSM$_\FF(k-1, G-X'_i, \sim)$ will output some solution $\vec{X}''$ for $G-X'_i$ by our inductive assumption. Since any obstruction in $G$ intersecting $X'_i$ is removed by $X'_i$ and $G-X'_i$ is made $\FF$-free by $\vec{X}''$, we observe that $\vec{X}''\cup X'_i$ intersects every obstruction in $G$ and hence the proof is complete.
\qed
\end{proof}

From Lemma~\ref{lem:algocorrect} and Corollary~\ref{cor:equiv} we obtain the following.
 
\begin{corollary}
\label{cor:algocorrect}
Let $k\in \Nat$, $G$ be a graph of rank-width at least $k+2$ and $\sim_k$ be the equivalence computed by Proposition~\ref{prop:equdecision}. Then
\emph{FindWSM}$_\FF(k,G,\sim_k)$ outputs a $k$-wsm{} to $\HH$ or correctly detects that no such $k$-wsm exists in $G$.
\end{corollary}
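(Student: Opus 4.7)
The plan is to derive this corollary by combining Lemma~\ref{lem:algocorrect} with the structural information about $\sim_k$ provided by Corollary~\ref{cor:equiv}, and to argue the two directions (soundness and completeness) separately.

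For soundness, I would observe that Lemma~\ref{lem:algocorrect} applied with $\sim = \sim_k$ immediately guarantees that whenever \emph{FindWSM}$_\FF(k,G,\sim_k)$ returns a set $\vec{X}$, this $\vec{X}$ consists of at most $k$ equivalence classes of $\sim_k$ whose union is a modulator to $\HH$. To upgrade this to a $k$-wsm, I would invoke Corollary~\ref{cor:equiv}, which (since $\rw(G) \ge k+2$) tells us that every equivalence class of $\sim_k$ is a split-module of $G$ whose induced subgraph has rank-width at most $k$. Thus all three conditions of Definition~\ref{def:wsm} are met: cardinality at most $k$, the union is a modulator, and each element is a split-module of rank-width at most $k$.

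For completeness, the strategy is a ``rounding up'' argument. Assume a $k$-wsm $\vec{Y} = \{Y_1,\dots,Y_m\}$ with $m \le k$ exists. By the definition of $\sim_k$, any two vertices lying in a common split-module of rank-width at most $k$ are equivalent; hence every $Y_i$ is contained in a single equivalence class $C_i$ of $\sim_k$ (and by Proposition~\ref{prop:eqrel}, this $C_i$ is itself a split-module with $\rw(G[C_i]) \le k$). Setting $\vec{X}^* = \{C_1,\dots,C_m\}$ and removing duplicates, we obtain a collection of at most $k$ equivalence classes of $\sim_k$ whose union contains $\bigcup_i Y_i$. Since $\FF$-freeness is preserved under vertex deletion, $G - \bigcup \vec{X}^*$ remains in $\HH$. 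Therefore such a set exists, and Lemma~\ref{lem:algocorrect} (whose inductive proof actually produces a witness whenever one exists among equivalence classes of the given equivalence) guarantees the algorithm returns some valid $\vec{X}$ rather than False.

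The main potential obstacle is the completeness direction: one has to be careful that the algorithm, which only ever picks \emph{equivalence classes} of $\sim_k$, does not miss $k$-wsms whose split-modules are strictly smaller than the equivalence classes containing them. The rounding-up step resolves this, but it relies crucially on the hypothesis $\rw(G) \ge k+2$ (so that Lemma~\ref{lem:union} applies and $\sim_k$ is actually an equivalence with the maximality property of Corollary~\ref{cor:equiv}); without this hypothesis the argument would break down, which is precisely why the theorem restricts to graphs of sufficiently high rank-width.
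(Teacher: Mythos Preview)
Your proposal is correct and follows precisely the approach the paper takes: the paper's entire justification is the single sentence ``From Lemma~\ref{lem:algocorrect} and Corollary~\ref{cor:equiv} we obtain the following,'' and your write-up simply unpacks the two directions (soundness via Corollary~\ref{cor:equiv}, completeness via the rounding-up argument) that are left implicit there. Your observation that completeness requires rounding each split-module $Y_i$ up to its enclosing $\sim_k$-class, together with hereditarity of $\HH$, is exactly the missing detail the paper expects the reader to supply.
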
}

\lv{\algo}

\begin{proof}[of Theorem~\ref{thm:main-find}]
\lv{The theorem follows by using Proposition~\ref{prop:equdecision} and then Algorithm~\ref{alg:findingmodulator} in conjunction with Lemma~\ref{lem:algoruntime} and~\ref{lem:algocorrect}.}
\sv{The theorem follows by using Proposition~\ref{prop:equdecision} and then Algorithm~\ref{alg:findingmodulator} in conjunction with Lemma~\ref{lem:algo}.}
\qed
\end{proof}

\section{Examples of Algorithmic Applications}
\label{sec:examples}

In this section, we show how to use the notion of $k$-\wsm s to design efficient parameterized algorithms for two classical NP-hard graph problems, specifically \textsc{Minimum Vertex Cover (MinVC)} and \textsc{Maximum Clique (MaxClq)}. 
Given a graph $G$, we call a set $X\subseteq V(G)$ a \emph{vertex cover} if every edge is incident to at least one $v\in X$ and a \emph{clique} if $G[X]$ is a complete graph.

\vspace{-.5cm}
\begin{center}
  \begin{boxedminipage}[t]{0.99\textwidth}
\begin{quote}
\smallskip
  \textsc{MinVC, MaxClq}\\ \nopagebreak
  \emph{Instance}: A graph $G$ and an integer $m$.\\ \nopagebreak
  \emph{Task} (\textsc{MinVC}): Find a vertex cover in $G$ of cardinality at most $m$, or determine that it does not exist. \\
  \emph{Task} (\textsc{MaxClq}): Find a clique in $G$ of cardinality at least $m$, or determine that it does not exist. \smallskip
\end{quote}
\end{boxedminipage}
\end{center}

Establishing the following theorem is the main objective of this section.

\begin{theorem}
\label{thm:problems}
Let $\PP\in\{\textsc{MinVC},\textsc{MaxClq}\}$ and $\HH$ be a graph class characterized by a finite obstruction set. Then $\PP$ is \emph{FPT} parameterized by $\wsn^\HH$ if and only if $\PP$ is polynomial-time tractable on $\HH$.
\end{theorem}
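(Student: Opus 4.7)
The plan is as follows. The forward direction is immediate: if $\PP$ is FPT parameterized by $\wsn^\HH$, then any $G\in\HH$ has $\wsn^\HH(G)=0$ (witnessed by the empty family of \sm s), so the algorithm runs in polynomial time on inputs from $\HH$. For the backward direction, I assume $\PP$ is polynomial on $\HH$ and consider an instance $(G,m)$ with $\wsn^\HH(G)\leq k$. I first handle the case $\rw(G)<k+2$ directly via Theorem~\ref{thm:msorankwidth}, since both problems are MSO\hy expressible; otherwise, Theorem~\ref{thm:main-find} yields a $k$-\wsm{} $\vec X=\{X_1,\dots,X_k\}$. Set $G'=G-\vec X\in\HH$; for each $i$ let $A_i\subseteq X_i$ denote the frontier of the split induced by $X_i$ in its connected component and $N_i\subseteq V\setminus X_i$ the common external neighborhood of $A_i$.

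For \textsc{MaxClq}, I plan to branch over subsets $S\subseteq\{0,1,\dots,k\}$ recording which $X_i$'s (and, if $0\in S$, also $G'$) a maximum clique $K$ meets. Two structural observations drive the argument: (i) whenever $K$ contains a vertex outside $X_i$, one has $K\cap X_i\subseteq A_i$; and (ii) for $i\neq j$ the bipartite graph between $A_i$ and $A_j$ is either empty or complete, giving a polynomial-time pairwise compatibility test. In each compatible branch with $|S|\geq 2$, the optimum decomposes as a sum of a maximum clique in each $G[A_i]$ for $i\in S\setminus\{0\}$ (computable via Theorem~\ref{thm:msorankwidth} since $\rw(G[A_i])\leq\rw(G[X_i])\leq k$) plus, if $0\in S$, a maximum clique in $G[V(G')\cap\bigcap_{i\in S\setminus\{0\}}N_i]$, which lies in $\HH$ by hereditarity and is therefore polynomial by assumption. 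The singleton cases $S=\{i\}$ and $S=\{0\}$ are handled separately by a direct MSO call on $G[X_i]$ and a polynomial call on $G'$.

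For \textsc{MinVC}, the completeness of the bipartite graph between $A_i$ and $N_i$ forces any vertex cover $C$ to contain $A_i$ or $N_i$, so I branch over all $2^k$ partitions $(a,b)$ of $[k]$ encoding these choices. Letting $F=\bigcup_{i\in b}N_i$ be the forced set in a given branch, the structural fact $N_i\cap X_j\subseteq A_j$ for $i\neq j$ (an $X_j$-vertex with an external neighbor must lie in the frontier) ensures the problem decouples over $G[X_1],\dots,G[X_k]$ and $G'$: on each $X_i$ (resp.\ on $G'$) the subproblem asks for a minimum vertex cover of the induced subgraph containing a prescribed subset, which is MSO\hy definable and hence solved by Theorem~\ref{thm:msorankwidth} on the $X_i$'s, or by the polynomial-time algorithm on $\HH$ on $G'$ (the restricted graph remaining in $\HH$ by hereditarity). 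Summing within a branch and minimizing across the $2^k$ branches gives the algorithm, with total running time $2^{O(k)}\cdot f(k)\cdot n^{O(1)}$.

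The main obstacle will be carefully verifying the two structural claims---the all\hy or\hy nothing adjacency between $A_i$ and $A_j$, and the inclusion $N_i\cap X_j\subseteq A_j$---both of which are direct consequences of the split-module definition but require care because \sm s are defined inside individual connected components. A secondary bookkeeping step in the \textsc{MinVC} case is checking that every edge type (internal to some $X_i$, internal to $G'$, or crossing between parts) is covered in every branch, but this reduces to a short case analysis using the same structural facts.
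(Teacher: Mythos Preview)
Your proposal is correct and follows essentially the same approach as the paper. The \textsc{MaxClq} argument is virtually identical to the paper's Lemma~\ref{lem:clq}: branch over which of $X_0,X_1,\dots,X_k$ the clique meets, use the complete/empty bipartite structure between frontiers as a compatibility filter, and sum the local optima. For \textsc{MinVC} the paper (Lemma~\ref{lem:vc}) presents the decoupling a bit more cleanly than you do: after branching it simply \emph{deletes} the forced set $Z=\bigcup_i f(i)$ and observes that each connected component of $G-Z$ lies entirely inside some $X_i$ (hence has rank-width $\leq k$) or entirely inside $G-\vec X$ (hence is in $\HH$), so one solves plain \textsc{MinVC} on each component. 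Your formulation---keeping $G[X_i]$ and $G'$ intact and solving ``minimum vertex cover containing a prescribed subset'' on each---is equivalent once you note that a minimum vertex cover of $H$ containing $P$ is $P$ together with a minimum vertex cover of $H-P$, and $G'-P\in\HH$ by hereditarity. One small technical point: Theorem~\ref{thm:msorankwidth} as stated covers model checking, not optimization; where you invoke it to \emph{compute} a minimum vertex cover or maximum clique on a bounded-rank-width piece, you should appeal to the LinEMSO machinery of~\cite{GanianHlineny10} (the same reference the paper uses for this purpose).
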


Since $\wsn^\HH(G)=0$ for any $\FF$-free graph $G$, the ``only if'' direction is immediate; in other words, being polynomial-time tractable on $\HH$ is clearly a necessary condition for being fixed parameter tractable when parameterized by $\wsn^\HH(G)$. Below we prove that for the selected problems this condition is also sufficient.

\lv{\begin{lemma}}
\sv{\begin{lemma}}
\label{lem:vc}
If \textsc{MinVC} is polynomial-time tractable on a graph class $\HH$ characterized by a finite obstruction set, then $\textsc{MinVC}[\wsn^\HH]$ is \emph{FPT}.
\end{lemma}

\newcommand{\pflemvc}[0]{
\begin{proof}
Let $G=(V,E)$ be a graph and let $k=\wsn^\HH(G)$. If $\rw(G)\leq k+2$, then we simply use known algorithms to solve the problem in FPT time~\cite{GanianHlineny10}. Otherwise, we proceed by using Theorem~\ref{thm:main-find} to compute a $k$-{\wsm} $\vec{X}=\{X_1,\dots,X_k\}$ in FPT time. For each $i\in [k]$, we let $A_i$ be the frontier of $X_i$ and we let $B_i=N(A_i)$.

Since for each $i\in [k]$ the graph $G[A_i\cup B_i]$ contains a complete bipartite graph, any vertex cover of $G$ must be a superset of either $A_i$ or $B_i$. We can branch over these options for each $i$ in $2^k$ time; formally, we branch over all of the at most $2^k$ functions $f:[i]\rightarrow \{A,B\}$, and refer to these as \emph{signatures}. Each vertex cover $Y$ of $G$ can be associated with at least one signature $f$, constructed in the following way: for each $i\in [k]$ such that $A_i\subseteq Y$, we set $f(i)=A$, and otherwise we set $f(i)=B$.

Our algorithm then proceeds as follows. For a graph $G$ and a signature $f$, we construct a partial vertex cover $Z=\bigcup_{i\in[k]} f(i)$. We let $G'=G-Z$. Consider any connected component $C$ of $G'$. If $C$ intersects some $X_i$, then by the construction of $Z$ it must hold that $C\subseteq X_i$. Hence it follows that $C$ either has rank-width at most $k$ (in the case $C\subseteq X_i$ for some $i$), or $C$ is in $\HH$ (if $C$ does not intersect $\vec{X}$), or both. Then we find a minimum vertex cover for each connected component of $G'$ independently, by either calling the known FPT algorithm (if $C$ has bounded rank-width) or the polynomial algorithm (if $C$ is in $\HH$) at most $|C|$ times. Let $Z'$ be the union of the obtained minimum vertex covers over all the components of $G'$, and let $Y_f=Z\cup Z'$. After branching over all possible functions $f$, we compare the obtained cardinalities of $Y_f$ and choose any $Y_f$ of minimum cardinality. Finally, we compare $|Y_f|$ and the value of $m$ provided in the input.

We argue correctness in two steps. First, assume for a contradiction that $G$ contains an edge $e$ which is not covered by $Y_f$ for some $f$. Then $e$ cannot have both endpoints in $G'$, since $Y_f$ contains a (minimum) vertex cover for each connected component of $G'$, but $e$ cannot have an endpoint outside of $G'$, since $Z\subseteq Y_f$. Hence each $Y_f$ is a vertex cover of $G$.

Second, assume for a contradiction that there exists a vertex cover $Y'$ of $G$ which has a lower cardinality than the vertex cover found by the algorithm described above. Let $f$ be the signature of $Y'$. Then it follows that $Z\subseteq Y'$, and since $Z\subseteq Y_f$, there would exist a component $C$ of $G\setminus Z$ such that $|Y'\cap C|\leq |Y_f\cap C|$. However, this would contradict the minimality of $Z'\cap C=Y_f\cap C$. Hence we conclude that no such $Y'$ can exist, and the algorithm is correct.
\qed
\end{proof}}
\lv{\pflemvc}

\sv{\begin{proof}[Sketch]
We compute a $k$-\wsm{} $\vec{X}$ to $\HH$ in $G$ by Theorem~\ref{thm:main-find}. For each element $X_i\in \vec{X}$, it holds that either the frontier of $X_i$ or its neighborhood in $G-X_i$ must be in any vertex cover of $G$. Branching on these at most $2^k$ options allows us to reduce the instance to at most $2^k$ disconnected instances such that each connected component has either rank-width bounded by $k$ or is in $\HH$; these connected components can then be solved independently.
\qed
\end{proof}
}



\lv{We deal with the second problem below.}

\sv{\begin{lemma}}
\lv{\begin{lemma}}
\label{lem:clq}
If \textsc{MaxClq} is polynomial-time tractable on a graph class $\HH$ characterized by a finite obstruction set, then $\textsc{MaxClq}[\wsn^\HH]$ is \emph{FPT}.
\end{lemma}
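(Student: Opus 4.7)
The plan is to mirror the proof of Lemma~\ref{lem:vc}. If $\rw(G)\le k+2$ we apply the FPT algorithm for bounded rank-width~\cite{GanianHlineny10}; otherwise we invoke Theorem~\ref{thm:main-find} to obtain a $k$-\wsm{} $\vec{X}=\{X_1,\dots,X_k\}$, and for each $i\in[k]$ let $A_i$ denote the frontier of $X_i$ and $B_i=N(A_i)$ its outside neighborhood. The split-module axiom immediately yields three structural facts: $A_i$ and $B_i$ form a complete bipartite graph; every vertex of $X_i\setminus A_i$ has no neighbor outside $X_i$; and for any $i\ne j$ the sets $A_i$ and $A_j$ are either completely adjacent or completely non-adjacent in $G$ (since every vertex of $A_i$ shares the same neighborhood outside $X_i$).

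These facts give the following dichotomy for an arbitrary clique $C$ of $G$. Either (a) $C\subseteq X_i$ for some $i$, or (b) for every $i$ with $C\cap X_i\ne\emptyset$ one has $C\cap X_i\subseteq A_i$: the presence of any vertex of $C$ outside $X_i$ forces every vertex of $C\cap X_i$ into the frontier. Writing $I=\{i:C\cap X_i\ne\emptyset\}$, in case (b) every pair $A_i,A_j$ with $i,j\in I$ must be completely adjacent, and the external part satisfies $C\setminus\vec{X}\subseteq\bigcap_{i\in I}B_i$.

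The algorithm enumerates both cases. For case (a), we compute the maximum clique of each $G[X_i]$ using the FPT algorithm for bounded rank-width (valid since $\rw(G[X_i])\le k$). For case (b), we pre-compute the maximum clique of each $G[A_i]$ in the same manner, and then iterate over all $2^k$ signatures $I\subseteq[k]$; for each $I$ we first verify in $O(k^2)$ time that every pair $A_i,A_j$ with $i\ne j$ and $i,j\in I$ is completely adjacent, discarding any incompatible $I$. For each surviving $I$, let $W_I=(V\setminus\vec{X})\cap\bigcap_{i\in I}B_i$; every vertex of $\bigcup_{i\in I}A_i$ is adjacent to every vertex of $W_I$ and the sets $A_i$ for $i\in I$ are pairwise completely adjacent, so the maximum clique with signature $I$ equals $\sum_{i\in I}\mathrm{mc}(G[A_i])+\mathrm{mc}(G[W_I])$. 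Because $G-\vec{X}\in\HH$ and $\HH$ is hereditary, $G[W_I]\in\HH$ and $\mathrm{mc}(G[W_I])$ is computable in polynomial time by assumption. Returning the maximum value across case (a) and all signatures in case (b) yields the answer in FPT time. The principal obstacle is establishing the case-(b) dichotomy via the twin-like behavior of each frontier $A_i$ and then recognizing that every $W_I$ lies inside the hereditary class $\HH$, so that the assumed polynomial-time algorithm on $\HH$ is applicable.
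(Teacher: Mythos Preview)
Your proposal is correct and follows essentially the same approach as the paper's proof: branch over which split-modules the clique intersects, verify pairwise adjacency of the relevant frontiers, and combine independently computed maximum cliques in each $G[A_i]$ with the maximum clique in the induced subgraph of $G-\vec X$ restricted to the common outside neighborhood. Your organization into case~(a) (clique inside a single $X_i$) and case~(b) (signature $I\subseteq[k]$, with $I=\emptyset$ covering cliques entirely in $G-\vec X$) is a minor repackaging of the paper's branching over $s\subseteq\{0\}\cup[k]$, and your explicit justification that $G[W_I]\in\HH$ by heredity is exactly the point the paper uses implicitly when invoking the polynomial algorithm on $X'_0$.
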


\newcommand{\pflemclq}[0]{
\begin{proof}
We begin in the same way as for \textsc{MinVC}: let $G=(V,E)$ be a graph and let $k=\wsn^\HH(G)$. If $\rw(G)\leq k+2$, then we simply use known algorithms to solve the problem in FPT time~\cite{GanianHlineny10}. Otherwise, we proceed by using Theorem~\ref{thm:main-find} to compute a $k$-{\wsm} $\vec{X}=\{X_1,\dots,X_k\}$ in FPT time. For each $i\in [k]$, we let $A_i$ be the frontier of $X_i$ and we let $B_i=N(A_i)$.

Let $X_0=G-\vec{X}$ and let $s\subseteq \{0\}\cup [k]$. Then any clique $C$ in $G$ can be uniquely associated with a \emph{signature} $s$ by letting $i\in s$ if and only if $X_i\cap C\neq \emptyset$. The algorithm proceeds by branching over all of the at most $2^{k+1}$ possible non-empty signatures $s$. If $|s|=1$, then the algorithm simply computes a maximum-cardinality clique in $X_{s}$ (by calling the respective FPT or polynomial algorithm at most a linear number of times) and stores it as $Y_{s}$. 

If $|s|\geq 2$, then the algorithm makes two checks before proceeding. First, if $0\in s$ then it constructs the set $X'_0$ of all vertices $x\in X_0$ such that $x$ is adjacent to every $A_i$ for $i\in s\setminus \{0\}$. If $X'_0=\emptyset$ then the current choice of $s$ is discarded and the algorithm proceeds to the next choice of $s$. Second, for every $a\neq b$ such that $a,b\in s\setminus \{0\}$ it checks that $X'_a=A_a$ and $X'_b=A_b$ are adjacent; again, if this is not the case, then we discard this choice of $s$ and proceed to the next choice of $s$. Finally, if the current choice of $s$ passed both tests then for each $i\in s$ we compute a maximum clique in each $G[X'_i]$ and save their union as $Y_s$. In the end, we choose a maximum-cardinality set $Y_s$ and compare its cardinality to the value of $m$ provided in the input.

We again argue correctness in two steps. First, assume for a contradiction that $Y_s$ is not a clique, i.e., there exist distinct non-adjacent $a,b\in Y_s$. Since $Y_s$ consists of a union of cliques within subsets of $X'_{i\in s}$, it follows that there would have to exist distinct $c,d\in s$ such that $a\in X'_c$ and $b\in X'_d$. This can however be ruled out for $c$ or $d$ equal to $0$ by the construction of $X'_0$. Similarly, if $c$ and $d$ are both non-zero, then this is impossible by the second check which tests adjacency of every pair of $X'_c$ and $X'_d$ for every $c,d\in s$.

Second, assume for a contradiction that there exists a clique $Y'$ in $G$ which has a higher cardinality than the largest clique obtained by the above algorithm. Let $s$ be the signature of $Y'$. If $|s|=1$ then $|Y_s|\geq |Y'|$ by the correctness of the respective FPT or polynomial algorithm used for each $X_s$. If $|s|\geq 2$ then $Y'$ may only intersect the sets $X'$ constructed above for $s$. Moreover, if there exists $i\in [k]\cup \{0\}$ such that $|Y'\cap X'_i|> |Y_s\cap X'_i|$ then we again arrive at a contradiction with the correctness of the respective FPT or polynomial algorithms used for $X'_i$. Hence we conclude that no such $Y'$ can exist, and the algorithm is correct.
\qed
\end{proof}}
\lv{\pflemclq}

Finally, let us review some concrete graph classes for use in Theorem~\ref{thm:problems}.
\lv{We use $K_i$, $C_i$ and $P_i$ to denote the $i$-vertex complete graph, cycle, and path, respectively. $2K_2$ denotes the disjoint union of two $K_2$ graphs, and the \emph{fork} graph is depicted for instance in~\cite{Alekseev04}. The $K_{3,3}\text{-}e$, \emph{banner}, \emph{twin-house} and $T_{2,2,2}$ graphs are defined in~\cite{BrandstadtLozin01,GerberLozin03}.}

\sv{\begin{ourfact}}
\lv{\begin{ourfact}}
  \label{fact:vcistract}
\textsc{MinVC} is polynomial-time tractable on the following graph classes:
\begin{enumerate}
\item $(2K_2,C_4,C_5)$-free graphs (split graphs);
\item $P_5$-free graphs\sv{~{\normalfont\cite{LokshtanovVatshelleVillanger14}}};
\item fork-free graphs\sv{~{\normalfont\cite{Alekseev04}}};
\item $(\text{banner}, T_{2,2,2})$-free graphs and $(\text{banner}, K_{3,3}\text{-}e, \text{twin-house})$-free graphs\sv{~{\normalfont\cite{GerberLozin03,BrandstadtLozin01}}}.
\end{enumerate}
\end{ourfact}

\newcommand{\pffactvcistract}[0]{
\begin{proof}
\begin{enumerate}
\item Split graphs are graphs whose vertex set can be partitioned into one clique and one independent set, and this partitioning can be found in linear time. If each vertex in the clique is adjacent to at least one independent vertex, then the clique is a minimum vertex cover, otherwise the clique without a pendant-free vertex is a minimum vertex cover.
\item See \cite{LokshtanovVatshelleVillanger14}.
\item See \cite{Alekseev04}.
\item See \cite{GerberLozin03} and \cite{BrandstadtLozin01}. \qed
\end{enumerate}
\end{proof}}
\lv{\pffactvcistract}

\sv{\begin{ourfact}}
\lv{\begin{ourfact}}
\label{fact:clqistract}
\textsc{MaxClq} is polynomial-time tractable on the following graph classes:
\begin{enumerate}
\item Any complementary graph class to the classes listed in Fact \ref{fact:vcistract} (such as cofork-free graphs and split graphs);
\item Graphs of bounded degree.
\end{enumerate}
\end{ourfact}

\newcommand{\pffactclqistract}[0]{
\begin{proof}
\begin{enumerate}
\item It is well-known that each maximum clique corresponds to a maximum independent set (and vice-versa) in the complement graph.
\item The degree bounds the size of a maximum clique, again resulting in a simple folklore branching algorithm. The class of graphs of degree at most $d$ is exactly the class of $\FF$-free graphs for $\FF$ containing all $(d+1)$-vertex supergraphs of the star with $d$ leaves. \qed
\end{enumerate}
\end{proof}}
\lv{\pffactclqistract}

\section{$\MSO$ Model Checking with Well-Structured Modulators}
\label{sec:mso}

Here we show how well-structured modulators can be used to solve the MSO Model Checking problem, as formalized in Theorem~\ref{thm:main-use} below.
Note that our meta-theorem captures not only the generality of MSO model checking problems, but also applies to a potentially unbounded number of choices of the graph class $\HH$. Thus, the meta-theorem supports two dimensions of generality.

\begin{theorem}\label{thm:main-use}
For every $\MSO$ sentence $\phi$ and every graph class $\HH$ characterized by a finite obstruction set such that $\MSOMC{\phi}$ is \emph{FPT} parameterized by $\md^\HH(G)$, the problem $\MSOMC{\phi}$ is \emph{FPT} parameterized by $\wsn^\HH(G)$.
\end{theorem}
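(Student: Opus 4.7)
The plan is to reduce model checking on $G$ to model checking on a modified graph $G'$ that has a small modulator to $\HH$, and then invoke the hypothesized FPT algorithm for $\MSOMC{\phi}$ parameterized by $\md^\HH$. Let $k=\wsn^\HH(G)$ and let $q$ be the quantifier rank of $\phi$. If $\rw(G)\le k+1$, then Theorem~\ref{thm:msorankwidth} already solves $\MSOMC{\phi}$ in FPT time. Otherwise, I would invoke Theorem~\ref{thm:main-find} to compute a $k$-wsm $\vec{X}=\{X_1,\dots,X_k\}$, each with $\rw(G[X_i])\le k$; let $A_i$ denote the frontier of $X_i$.

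For each $i$ I would compute the $\MSO$ $q$-type $\tau_i = \mathit{type}_q(G[X_i],A_i)$ in $\MSO_{q,1}$ in FPT time, by combining Theorem~\ref{thm:msorankwidth} with Lemma~\ref{lem:typeformula}. By Fact~\ref{fact:representatives} there are only finitely many such types, so one can precompute as a function of $\phi$ a minimum-size representative $(H_\tau,B_\tau)$ for every realizable $q$-type $\tau$. I would then build $G'$ by replacing, for each $i$, the split-module $X_i$ by $H_{\tau_i}$ and making $B_{\tau_i}$ completely bipartite to $N(A_i)\setminus X_i$, so that $H_{\tau_i}$ becomes a split-module of $G'$ with frontier $B_{\tau_i}$ and the same external neighborhood as $X_i$ had in $G$. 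Since $G'-\bigcup_i V(H_{\tau_i}) = G-\vec{X}\in\HH$, the set $\bigcup_i V(H_{\tau_i})$ is a modulator to $\HH$ of cardinality at most $k\cdot c_\phi$, where $c_\phi=\max_\tau |H_\tau|$ depends only on $\phi$.

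The key technical ingredient, and the hardest part, is a composition lemma stating that this replacement preserves the $q$-type of the ambient graph: if $X$ is a split-module of $G$ with frontier $A$ and $(X',A')$ is a labeled graph with $\mathit{type}_q(G[X],A)=\mathit{type}_q(X',A')$, then the graph $\widetilde{G}$ obtained by replacing $(G[X],A)$ with $(X',A')$ satisfies $\mathit{type}_q(\widetilde{G})=\mathit{type}_q(G)$. I would prove this via the Ehrenfeucht--Fra\"iss\'e game characterization of Theorem~\ref{thm:msogames}. Given Duplicator winning strategies for $(G[X],A)\equiv^{\MSO}_q (X',A')$ and for the trivial identity game on $G-X=\widetilde{G}-X'$, a winning Duplicator strategy for the $q$-round game on $G$ versus $\widetilde{G}$ is built by splitting every Spoiler set move into its inside and outside parts and routing them to the corresponding subgame, while a point move is answered in the subgame where the move is played. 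Correctness hinges on the crucial structural property of split-modules: every vertex of $N(A)\setminus X$ sees either all of $A$ or none of it, and no vertex outside $X$ sees any vertex of $X\setminus A$; hence the only information the outside can extract about $X$ is membership in the frontier, which is exactly the unary predicate tracked by the $q$-type.

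Once the composition lemma is established, applying it $k$ times yields $G'\models\phi \iff G\models\phi$, and because $\md^\HH(G')\le k\cdot c_\phi$, the hypothesized FPT algorithm for $\MSOMC{\phi}$ parameterized by $\md^\HH$ decides $G'\models\phi$ in time $f(k\cdot c_\phi)\cdot|V(G')|^{O(1)}$, giving the overall FPT bound in $k=\wsn^\HH(G)$.
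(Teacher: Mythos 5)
Your proposal is correct and follows essentially the same route as the paper: compute the well-structured modulator via Theorem~\ref{thm:main-find}, replace each split-module by a constant-size representative of the same \MSO{} $q$-type of $(G[X_i],S_i)$ (the paper's Lemmas~\ref{lem:constantrep} and~\ref{lem:similar}), prove type preservation via the Ehrenfeucht--Fra\"iss\'e game by routing Spoiler's moves to the inside/outside subgames (the paper's $q$-similarity notion and Lemma~\ref{lem:partitiongame}, which does all $k$ replacements at once rather than iterating a single-module composition lemma), and finish with the hypothesized algorithm parameterized by $\md^\HH$. Your explicit handling of the low rank-width case, which Theorem~\ref{thm:main-find} requires, is a detail the paper's proof leaves implicit.
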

The condition that $\MSOMC{\phi}$ is FPT parameterized by $\md^\HH(G)$ is a necessary condition for the theorem to hold by Proposition~\ref{prop:better}. However, it is natural to ask whether it is possible to use a weaker necessary condition instead, specifically that $\MSOMC{\phi}$ is polynomial-time tractable in the class of $\FF$-free graphs (as was done for specific problems in Section~\ref{sec:examples}). Before proceeding towards a proof of Theorem~\ref{thm:main-use}, we make a digression and show that the weaker condition used in Theorem~\ref{thm:problems} is in fact not sufficient for the general case of MSO model checking. 

\lv{\begin{lemma}}
\sv{\begin{lemma}}
\label{lem:MSOhard}
There exists an MSO sentence $\phi$ and a graph class $\HH$ characterized by a finite obstruction set such that $\MSOMC{\phi}$ is polynomial-time tractable on $\HH$ but NP-hard on the class of graphs with $\wsn^\HH(G)\leq 2$ or even $\md^\HH(G)\leq 2$.
\end{lemma}

\newcommand{\pflemMSOhard}[0]{
\begin{proof}
Consider the sentence $\phi$ which describes the existence of a proper $5$-coloring of the vertices of $G$, and let $\HH$ be the class of graphs of degree at most $4$ (in other words, let $\FF$ contain all $6$-vertex supergraphs of the star with $5$ leaves). There exists a trivial greedy algorithm to obtain a proper $5$-coloring of any graph of degree at most $4$, hence $\MSOMC{\phi}$ is polynomial-time tractable on $\HH$. Now consider the class of graphs obtained from $\HH$ by adding, to any graph in $\HH$, two adjacent vertices $y,z$ which are both adjacent to every other vertex in the graph. By construction, any graph $G'$ from this new class satisfies $\md^\HH(G')\leq 2$ and hence also $\wsn^\HH(G')\leq 2$. However, $G'$ admits a proper $5$-coloring if and only if $G'-\{y,z\}$ admits a proper $3$-coloring. Testing $3$-colorability on graphs of degree at most $4$ is known to be NP-hard~\cite{KocholLozinRanderath03}, and hence the proof is complete.
\qed
\end{proof}}
\lv{\pflemMSOhard}

\sv{\begin{proof}[Sketch]
Let $\phi$ describe vertex $5$-colorability and let $\HH$ be the class of graphs of degree at most $4$. Now consider the class of graphs obtained from $\HH$ by adding two adjacent vertices $y,z$ which are adjacent to every other vertex. Hardness follows from hardness of $3$-colorability on graphs of degree at most $4$~\cite{KocholLozinRanderath03}.
\qed
\end{proof}
}

Our strategy for proving Theorem~\ref{thm:main-use} relies on a replacement technique, where each split-module in the \wsm{} is replaced by a small representative. We use the notion of \emph{similarity} defined below to prove that this procedure does not change the outcome of \MSOMC{\varphi}.

\begin{definition}[Similarity]\label{def:similarity}
  Let $q$ and $k$ be non\hy negative integers, $\HH$ be a graph class, and let $G$ and $G'$ be graphs with $k$-\wsm s $\vec{X}=\{X_1,\dots,X_k\}$ and $\vec{X'}=\{X_1',\dots,X_k'\}$ to $\HH$,
  respectively. For $1\leq i\leq k$, let $S_i$ contain the frontier of split module $X_i$ and similarly let $S'_i$ contain the frontier of split module $X'_i$.
We say that $(G, \vec{X})$ and $(G', \vec{X}')$ are \emph{$q$-similar}
  if all of the following conditions are met:
  \begin{enumerate}
  \item There exists an isomorphism $\tau$ between $G-\vec{X}$ and $G'-\vec{X}'$. \label{cond:sameH}
  \item For every $v\in V(G)\setminus \vec{X}$ and $i\in [k]$, it holds that $v$ is adjacent to $S_i$ if and only if $\tau(v)$ is adjacent to $S'_i$.\label{cond:Hcon}
  \item if $k\geq 2$, then for every $1\leq i<j\leq k$ it holds that $S_i$ and $S_j$ are adjacent if and only if $S'_i$ and $S'_j$ are adjacent. \label{cond:splitcon}
  \item For each $i\in [k]$, it holds that $\mathit{type}_q(G[X_i],S_i)=\mathit{type}_q(G'[X'_i],S'_i)$. \label{cond:sametypes}
\end{enumerate}
\end{definition}

\lv{\begin{lemma}}
\sv{\begin{lemma}}
\label{lem:partitiongame}
  Let $q$ and $k$ be non\hy negative integers, $\HH$ be a graph class, and let $G$ and $G'$ be graphs with $k$-\wsm s 
  $\vec{X}=\{X_1,\dots,X_k\}$ and $\vec{X'}=\{X_1',\dots,X_k'\}$ to $\HH$,
  respectively. If $(G, \vec{X})$ and $(G', \vec{X}')$ are \emph{$q$-similar}, then 
  $\mathit{type}_q(G, \emptyset) = \mathit{type}_q(G',
  \emptyset)$.
  \end{lemma}
  
\newcommand{\pflempartitiongame}[0]{
\begin{proof}
  For $i \in [k]$, we write $G_i = G[X_i]$ and $G_i' = G'[X_i']$. Let $X_0=V(G)\setminus \vec{X}$ and $X'_0=V(G')\setminus \vec{X'}$. By Theorem~\ref{thm:msogames}, Condition~\ref{cond:sametypes} of
  Definition~\ref{def:similarity} is equivalent to $(G_i, S_i)
  \equiv^{\MSO}_q (G'_i, S'_i)$. That is, for each $i \in [k]$,
  Duplicator has a winning strategy $\pi_i$ in the $q$\hy round \MSO game
  played on $G_i$ and $G_i'$ starting from $(S_i, S'_i)$. We
  construct a strategy witnessing $(G, \emptyset) \equiv^{\MSO}_q (G',
  \emptyset)$ in the following way:
    \begin{enumerate}
    \item Suppose Spoiler makes a set move $W$ and assume without loss of
      generality that $W \subseteq V(G)$. For $i \in [k]$, let $W_i = X_i \cap
      W$, and let $W_i'$ be Duplicator's response to $W_i$ according to
      $\pi_i$. Furthermore, let $W'_0=\SB \tau(v)\SM v\in W\cap X_0 \SE$. Then Duplicator responds with $W' = W'_0 \cup \bigcup_{i=1}^k W_i'$.
    \item Suppose Spoiler makes a point move $s$ and again assume without loss
      of generality that $s \in V(G)$. If $s \in X_i$ for some $i \in
      [k]$, then Duplicator responds with $s' \in X_i'$ according to $\pi_i$; otherwise, Duplicator responds with $\tau(s)$ as per Definition~\ref{def:similarity} point~\ref{cond:sameH}.
    \end{enumerate}
    Assume Duplicator plays according to this strategy and consider a play of
    the $q$\hy round \MSO game on $G$ and $G'$ starting from $(\emptyset, \emptyset)$. Let $\vec{v}=(v_1,\dots,v_m)$ and
    $\vec{u}=(u_1,\dots,u_m) $ be the point moves in $V(G)$ and $V(G')$ respectively, and let $\vec{V}=(V_1, \dots, V_l)$ and $\vec{U}=(U_1,\dots, U_l)$ be the set
    moves in $V(G)$ and $V(G')$ respectively, so that $l + m = q$ and the moves made in the same round have the same index. We claim that $(\vec{v},\vec{u})$
    defines a partial isomorphism between $(G,\vec{V})$ and $(G',\vec{U})$.
    \begin{itemize}
    \item Let $j_1,j_2 \in [m]$ and let $v_{j_1}, v_{j_2}\in X_0$. Since $\tau$ is an isomorphism as per Definition~\ref{def:similarity} point~\ref{cond:sameH}, it follows that $v_{j_1}=v_{j_2}$ if and only if $u_{j_1} = u_{j_2}$ and $v_{j_1}v_{j_2}\in E(G)$ if and only if $u_{j_1}u_{j_2}\in E(G')$.
    
    \item Let $j_1,j_2 \in [m]$ and let $i\in [k]$ be such that $v_{j_1}\in X_0$ and $v_{j_2}\in X_i$. Then clearly $v_{j_1}\neq v_{j_2}$ and $u_{j_1}\neq u_{j_2}$. Consider the case $v_{j_1}v_{j_2}\in E(G)$. Then $v_{j_2}$ must lie in the frontier of $X_i$, and hence $v_{j_2}\in S_i$. Since Duplicator's strategy $\pi_i$ is winning for $(G_i,S_i)$ and $(G'_i,S'_i)$, it must hold that $u_{j_2}\in S'_i$. By Definition~\ref{def:similarity} point~\ref{cond:Hcon}, it then follows that $\tau(v_{j_1})u_{j_2}\in E(G')$. So, consider the case $v_{j_1}v_{j_2}\not \in E(G)$. Then either $v_{j_2}\not \in S_i$, in which case it holds that $u_{j_2}\not \in S'_i$ because of the choice of $\pi_i$ and hence there cannot be an edge $u_{j_2}u_{j_1}$ in $G'$, or $v_{j_2}\in S_i$, in which case it holds once again that $u_{j_2}u_{j_1}\not \in E(G')$ by Definition~\ref{def:similarity} point~\ref{cond:Hcon}.
    
    \item Let $j_1,j_2 \in [m]$ and let $i \in [k]$ be such that $v_{j_1},v_{j_2}
      \in X_{i}$. Since
      Duplicator plays according to a winning strategy $\pi_i$ in the game on $G_i$
      and $G_i'$, the restriction $(\vec{v}|_i, \vec{u}|_i)$ defines a partial
      isomorphism between $(G_i, (\vec{V})|_i)$ and $(G_i',
      (\vec{U})|_i)$. It follows that $(v_{j_1},v_{j_2}) \in E(G)$ if
      and only if $(u_{j_1},u_{j_2}) \in E(G')$ and $v_{j_1} = v_{j_2}$ if and
      only if $u_{j_1} = u_{j_2}$.  
    
      \item Let $j_1,j_2 \in [m]$ and let $i_1,i_2 \in [k]$ be pairwise distinct numbers such that $v_{j_1}\in X_{i_1}$ and $v_{j_2}\in X_{i_2}$. Then $v_{j_1}
      \neq v_{j_2}$ and also $u_{j_1} \neq u_{j_2}$ since $u_{j_1} \in 
      X_{i_1}'$ and $u_{j_2} \in X_{i_2}'$ by the Duplicator's strategy. Suppose $v_{j_1}v_{j_2}\in E(G)$. Then $v_{j_1}\in S_{i_1}$, and $v_{j_2}\in S_{i_2}$, and $S_{i_1}$ and $S_{i_2}$ are adjacent in $G$. From the correctness of $\pi_{i_1}$ and $\pi_{i_2}$ it follows that $u_{j_1}\in S'_{i_1}$ and $u_{j_2}\in S'_{i_2}$, and from Definition~\ref{def:similarity} point~\ref{cond:splitcon} it follows that $S'_{i_1}$ and $S'_{i_2}$ are adjacent in $G'$, which together implies  $u_{j_1}u_{j_2}\in E(G')$. On the other hand, suppose $v_{j_1}v_{j_2}\not \in E(G)$. Then either $v_{j_1}\not \in S_{i_1}$, or $v_{j_2}\not \in S_{i_2}$, or $S_{i_1}$ and $S_{i_2}$ are not adjacent in $G$. In the first case we have $u_{j_1}\not \in S'_{i_1}$, in the second case we have $u_{j_2}\not \in S'_{i_2}$, and in the third case it holds that $S'_1$ and $S'_2$ are not adjacent in $G'$; any of these three cases imply $u_{j_1}u_{j_2}\not \in E(G')$.
    
    \item Let $j \in [m]$ such that $v_j\in X_0$. Then by the Duplicator's strategy on $X_0$ it follows that for any $V_q$ such that $v_j\in V_q$ it holds that $u_j\in U_q$ and for any $V_q$ such that $v_j\not \in V_q$ it holds that $u_j\not \in U_q$.
    
    \item Let $j\in [m]$ and $i\in [k]$ such that $v_j\in X_k$. Let $V_q$ be such that $v_j\in V_q$. Since $\pi_i$ is a winning strategy for Duplicator, it must be the case that $u_j\in U_q$. Similarly, if $v_j\not\in V_q$ then the correctness of $\pi_i$ guarantetes that $u_j\not\in U_q$.\qed
  \end{itemize}
\end{proof}}

\lv{\pflempartitiongame}

\sv{
\begin{proof}[Sketch]
The proof argument uses the $q$-round MSO game defined, e.g., in~\cite{Libkin04}. The notion of $q$-similarity ensures that the Duplicator has a winning strategy on $G'$, which translates to $G$ and $G'$ having the same $type_q$. If the Spoiler moves in $\vec{X}$, then the Duplicator can follow the winning strategies for each $(G[X_i],S_i)$. On the other hand, if the Spoiler moves in $G-\vec{X}$, then the Duplicator can copy this move in $G'$.
\qed
\end{proof}
}

\lv{Next, we show that small representatives can be computed efficiently.}

\newcommand{\lemconstantrep}[0]{
\begin{lemma}
\label{lem:constantrep} Let $q$ be a non\hy negative integer
  constant. Let $G$ be a graph of rank-width at most $k$ and $S\subseteq V(G)$. Then there exists a function $f$ such that one can in time $f(k)\cdot |V(G)|^{\bigoh(1)}$ compute a graph $G'$ and a set $S'\subseteq V(G')$ such that
  $\Card{V(G')}$ is bounded by a constant and $\mathit{type}_q(G,S) =
  \mathit{type}_q(G',S')$.
\end{lemma}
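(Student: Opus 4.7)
The plan is to combine the finite index of the equivalence $\equiv$ on $\MSO_{q,1}$ with the efficient model-checking result on bounded rank-width graphs. First, I will precompute (independently of the input) a lookup table of ``canonical small representatives'' of $q$-types, and then use the input's rank-width bound to identify which type $(G,S)$ realizes.

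For the precomputation, Fact~\ref{fact:representatives} gives a finite system of representatives $R = \{\varphi_1,\dots,\varphi_N\}$ of $\MSO_{q,1}/\mathord\equiv$, where $N$ depends only on $q$. Each $q$-type with one free set variable is uniquely determined by the subset of $R$ it contains, so there are at most $2^N$ many $q$-types with one free set variable. For each realizable type $\tau$, enumerate pairs $(H,T)$ with $T\subseteq V(H)$ in order of increasing $|V(H)|$ and compute $\mathit{type}_q(H,T)$ by direct model checking; since each realizable type must eventually appear, this process produces, for every realizable $\tau$, a smallest representative $(H_\tau, T_\tau)$. Taking $c(q) = \max_\tau |V(H_\tau)|$ over the finitely many realizable types yields a constant depending only on $q$, and the entire lookup table is built once and for all at constant cost (with respect to $n$).

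Given the input $(G,S)$ with $\rw(G)\le k$, I will then determine $\tau = \mathit{type}_q(G,S)$ by deciding, for each $\varphi_i\in R$, whether $G\models\varphi_i(S)$. By Theorem~\ref{thm:msorankwidth} each such check runs in time $g(k)\cdot |V(G)|^{\bigoh(1)}$ for some computable $g$, and since $|R|=N$ depends only on $q$, the total time to identify $\tau$ is $f(k)\cdot |V(G)|^{\bigoh(1)}$ for a suitable computable $f$. The algorithm then returns $(G', S') := (H_\tau, T_\tau)$ from the precomputed table; by construction $|V(G')|\le c(q)$ is a constant and $\mathit{type}_q(G',S') = \tau = \mathit{type}_q(G,S)$.

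The only mildly delicate point is justifying that every realizable type admits a representative of size bounded by a \emph{constant} rather than by some function of $|V(G)|$: this is immediate from the fact that the set of realizable $q$-types with one free set variable is finite (an independent fact about MSO, not about the input graph), so the maximum over that finite set of smallest-witness sizes is itself a constant depending only on $q$. Everything else is a direct invocation of Fact~\ref{fact:representatives} for enumeration and Theorem~\ref{thm:msorankwidth} for the efficient identification step, so no genuine obstacle arises beyond bookkeeping.
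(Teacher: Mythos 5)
Your proof is correct and follows essentially the same route as the paper's: identify $\mathit{type}_q(G,S)$ by model-checking the finitely many representatives of $\MSO_{q,1}/\mathord\equiv$ on the bounded-rank-width graph $G$ (via Theorem~\ref{thm:msorankwidth}), then obtain a constant-size witness of that type by brute-force enumeration of small graphs, the constant bound following from the finiteness of the set of $q$-types. The only cosmetic difference is that you phrase the enumeration as an input-independent precomputation of a full lookup table (which, as stated, lacks an effective termination criterion since one cannot decide in advance which of the $2^N$ candidate types are realizable), whereas the paper enumerates only after the type of $(G,S)$ is known and is therefore guaranteed a witness; this reordering is trivial and does not affect correctness.
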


\begin{proof}
  By Lemma~\ref{lem:typeformula} we can compute a formula $\Phi(Q)$ capturing the
  type $T$ of $(G,S)$ in time $f(k)\cdot |V(G)|^{\bigoh(1)}$. Given $\Phi(Q)$, a constant-size model $(G',S')$
  satisfying $\Phi(Q)$ can be computed as follows. We start enumerating all graphs (by brute force and in any order with a non-decreasing number of vertices),
  and check for each graph $G^*$ and every vertex-subset $S^*\subseteq V(G^*)$ whether $G^* \models \Phi(S^*)$. If this is
  the case, we stop and output $(G^*,S^*)$. Since $G \models \Phi(S)$ this procedure
  must terminate eventually. Fixing the order in which graphs are
  enumerated, the number of graphs we have to check depends only on $T$. By
  Fact~\ref{fact:representatives} the number of $q$\hy types is finite
  for each $q$, so we can think of the total number of checks and the size of each checked graph $G^*$ as bounded by a constant. Moreover the time spent on each check depends only on $T$ and the size of the graph $G^*$. Consequently, after we compute $\Phi(Q)$ it is possible to find a model for $\Phi(Q)$ in constant time.
  \qed
\end{proof}}

\lv{\lemconstantrep}

\lv{Finally, in Lemma~\ref{lem:similar} below we use Lemma \ref{lem:constantrep} to replace any \wsm{} by a small but ``equivalent'' modulator.}
\sv{The next lemma deals with actually computing small $q$-similar ``representatives'' for our \sm s.}

\lv{\begin{lemma}}
\sv{\begin{lemma}}
\label{lem:similar}
  Let $q$ be a non\hy negative integer constant and $\HH$ be a graph class. Then given a graph $G$ and a $k$-\wsm{} $\vec{X}=\{X_1,\dots X_k\}$ of $G$ into $\HH$, there exists a function $f$ such that one can in time $f(k)\cdot |V(G)|^{\bigoh(1)}$ compute a graph $G'$ with a $k$-\wsm{} $\vec{X'}=\{X'_1,\dots X'_k\}$ into $\HH$ such that $(G,\vec{X})$ and $(G',\vec{X'})$ are $q$-similar and for each $i \in [k]$ it holds that $|X_i'|$ is bounded by a constant.
\end{lemma}

\newcommand{\pflemsimilar}[0]{
\begin{proof}
  For $i\in [k]$, let $S_i\subseteq X_i$ be the frontier of split-module $X_i$, let $G_i=G[X_i]$ and let $G_0=G\setminus G[\vec{X}]$. We compute a graph $G_i'$ of constant size and a set $S_i'\subseteq V(G_i')$ with the same \MSO $q$-type as $(G_i,S_i)$. By Lemma~\ref{lem:constantrep}, this can be done in 
  time $f(k)\cdot |V(G)|^{\bigoh(1)}$ for some function $f$. Now let $G'$ be the graph obtained by the following procedure:
  
  \begin{enumerate}
  \item Perform a disjoint union of $G_0$ and $G'_i$ for each $i\in [k]$;
  \item If $k\geq 2$ then for each $1\leq i<j \leq k$ such that $S_i$ and $S_k$ are adjacent in $G$, we add edges between every $v\in S'_i$ and $w\in S'_j$.
  \item for every $v\in V(G_0)$ and $i\in [k]$ such that $S_i$ and $\{v\}$ are adjacent, we add edges between $v$ and every $w\in S'_i$.
  \end{enumerate}
  
   It is easy to verify that $(G,\vec{X})$ and $(G', \vec{X}')$, where $\vec{X}'=\{V(G_1'),\dots,V(G_k')\}$, are $q$\hy similar. \qed
\end{proof}
}

\lv{\pflemsimilar}

\sv{\begin{proof}[Sketch]
The idea here is to exploit the fact that each \sm{} has bounded rank-width. In particular, this allows us to determine the \MSO type of each $G[X_i]$ and its frontier $S_i$ in the specified time. The size of a minimum representative for each type does not depend on the actual size of $G$ or $k$.
\qed
\end{proof}
}

\begin{proof}[of Theorem~\ref{thm:main-use}]
Let $G$ be a graph, $k=\wsn^\HH(G)$ and $q$ be the nesting depth of quantifiers in $\phi$. By Theorem~\ref{thm:main-find} it is possible to find a $k$-\wsm{} to $\HH$ in time $f(k)\cdot |V|^{\bigoh(1)}$. We proceed by constructing $(G',\vec{X}')$ by Lemma~\ref{lem:similar}. Since each $X'_i\in\vec{X}'$ has size bounded by a constant and $|\vec{X}'|\leq k$, it follows that $\bigcup\vec{X}'$ is a modulator to the class of $\FF$-free graphs of cardinality $\bigoh(k)$. Hence $\MSOMC{\phi}$ can be decided in FPT time on $G'$. Finally, since $G$ and $G'$ are $q$-similar, it follows from Lemma~\ref{lem:partitiongame} that 
$G\models \phi$ if and only if $G'\models \phi$. \qed
\end{proof}

We conclude the section by showcasing an example application of Theorem~\ref{thm:main-use}.  \textsc{$c$-Coloring} asks whether the vertices of an input graph $G$ can be colored by $c$ colors so that each pair of neighbors have distinct colors. From the connection between \textsc{$c$-Coloring}, its generalization \textsc{List $c$-Coloring} and modulators~\cite[Theorem 3.3]{Cai03} and tractability results for \textsc{List-$c$-Coloring}~\cite[Page 5]{GolovachPaulusmaSong14}, we obtain the following.

\begin{corollary}
\label{cor:coloring}
\textsc{$c$-Coloring} parameterized by $\wsn^{P_5\text{-free}}$ is \emph{FPT} for each $c\in \Nat$.
\end{corollary}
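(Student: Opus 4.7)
The plan is to apply the meta-theorem Theorem~\ref{thm:main-use} with $\HH$ taken as the class of $P_5$-free graphs, which is hereditary and characterized by the finite obstruction set $\{P_5\}$. To invoke the theorem, two preconditions must be verified: that \textsc{$c$-Coloring} can be formulated as an \MSO model-checking problem, and that the resulting problem is \emph{FPT} when parameterized by $\md^\HH$.

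For \MSO-expressibility, I would use the standard sentence $\phi_c$ which existentially quantifies $c$ vertex sets $X_1,\dots,X_c$ and asserts that they partition $V(G)$ and each induce an independent set. Since $c$ is a fixed constant, $\phi_c$ has constant size, and $G \models \phi_c$ iff $G$ admits a proper $c$-coloring; thus the problem is exactly $\MSOMC{\phi_c}$.

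For the modulator parameterization, I would combine two existing results. Cai's Theorem~3.3 in~\cite{Cai03} shows that if \textsc{List $c$-Coloring} is polynomial-time tractable on a hereditary class $\HH$, then \textsc{$c$-Coloring} is \emph{FPT} when parameterized by $\md^\HH$. The underlying argument is a simple branching: given a modulator $X$ of size $k$, enumerate each of the at most $c^k$ colorings of $X$; each such partial coloring extends to a proper $c$-coloring of $G$ iff the list-coloring instance induced on $G-X$ (each vertex's list being the colors not used by its already-colored neighbors in $X$) is a yes-instance. Since $G-X$ lies in $\HH$, each such subproblem is solved in polynomial time using the \textsc{List $c$-Coloring} algorithm on $P_5$-free graphs reported in~\cite{GolovachPaulusmaSong14}. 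Hence \textsc{$c$-Coloring} is \emph{FPT} when parameterized by $\md^{P_5\text{-free}}$.

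With both preconditions of Theorem~\ref{thm:main-use} verified, the corollary follows immediately: the theorem transports \emph{FPT} status from $\md^{P_5\text{-free}}$ to $\wsn^{P_5\text{-free}}$. There is no serious technical obstacle here; the only subtlety is chaining Cai's reduction with the list-coloring tractability result, both of which are invoked from the literature and require no new combinatorial argument.
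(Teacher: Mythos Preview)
Your proposal is correct and follows exactly the route the paper takes: apply Theorem~\ref{thm:main-use} with $\HH$ the class of $P_5$-free graphs, using the standard \MSO\ sentence for $c$-colorability, and verify the $\md^\HH$ precondition by combining Cai's reduction~\cite[Theorem~3.3]{Cai03} to \textsc{List $c$-Coloring} with its polynomial-time tractability on $P_5$-free graphs~\cite{GolovachPaulusmaSong14}. The paper's justification is just the sentence preceding the corollary, and your write-up spells out precisely those two cited ingredients.
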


\section{Conclusion}
\label{sec:hardness}

We have introduced a family of structural parameters which push the frontiers of fixed parameter tractability beyond rank-width and modulator size for a wide range of problems. In particular, the well-structure number can be computed efficiently (Theorem~\ref{thm:main-find}) and used to design FPT algorithms for \textsc{Minimum Vertex Cover}, \textsc{Maximum Clique} (Theorem~\ref{thm:problems}) as well as any problem which can be described by a sentence in MSO logic (Theorem~\ref{thm:main-use}).

In the wake of Theorem~\ref{thm:main-use} and the positive results for
the two problems in Section~\ref{sec:examples}, one would expect that
it should be possible to strengthen Theorem~\ref{thm:main-use} to also
cover LinEMSO
problems~\cite{CourcelleMakowskyRotics00,GanianHlineny10} (which
extend MSO Model Checking by allowing the minimization/maximization of
linear expressions over free set variables). Surprisingly, as our last
result we will show that this is in fact not possible if we wish to retain
the same conditions. For our hardness proof, it suffices to consider
a simplified variant of LinEMSO, defined below. Let $\varphi$ be an
MSO formula with one free set variable.
\begin{quote}
  $\MSOOPT{\leq}{\varphi}$\\
  \nopagebreak \emph{Instance}: A graph $G$ and an integer~$r\in
  \Nat$. \\ \nopagebreak \nopagebreak \emph{Question}: Is there a set
  $S \subseteq V(G)$ such that $G \models \varphi(S)$ and $\Card{S} \leq
  r$?
\end{quote}

\lv{
The following lemma will be useful later on. We say that $S\subseteq V(G)$ is a \emph{dominating set} if every vertex in $G$ either is in $S$ or has a neighbor in $S$.}

\newcommand{\lemdomfpt}[0]{
\begin{lemma}
\label{lem:domfpt}
The problem of finding a $p$-cardinality dominating set in a graph $G$ with a $k$-cardinality modulator $X\subseteq V(G)$ to the class of graphs of degree at most $3$ is FPT when parameterized by $p+k$.
\end{lemma}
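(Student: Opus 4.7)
The plan is a three-phase branching algorithm. In Phase~1 we branch on the intersection $S_X := D \cap X$ of a sought dominating set $D$ with the modulator, yielding at most $2^k$ subproblems. For each such $S_X$, let $Y^* := V(G-X) \setminus N[S_X]$ and $X^* := X \setminus N[S_X]$ be the vertices not yet dominated by $S_X$; the remaining task is to cover $Y^* \cup X^*$ using vertices of $V(G-X)$ within the residual budget $p - |S_X|$.

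In Phase~2 we cover $Y^*$ via the standard bounded-degree branching for dominating set. Starting from $S_Y^{(1)} := \emptyset$, while some $v \in Y^*$ is not dominated by $S_Y^{(1)}$, pick any such $v$ and branch into the at most four options in $\{v\} \cup (N(v) \cap V(G-X))$ for a vertex to add to $S_Y^{(1)}$; note that no vertex of $X \setminus S_X$ can newly dominate $v$ beyond what $S_X$ already does. Each branching step adds exactly one vertex and the overall budget is $p$, so the tree has depth at most $p$ and produces at most $4^p$ leaves, each yielding a candidate $S_Y^{(1)} \subseteq V(G-X)$ dominating all of $Y^*$.

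In Phase~3, for each leaf, set $U := X^* \setminus N(S_Y^{(1)})$ and observe $|U| \le |X| = k$. We then decide whether there is $S_Y^{(2)} \subseteq V(G-X)$ of size at most $p - |S_X| - |S_Y^{(1)}|$ such that every $v \in U$ has a neighbor in $S_Y^{(2)}$. This is a \textsc{Set Cover} instance with universe $U$ of size $\le k$ and set family $\{\, N(w) \cap U : w \in V(G-X)\,\}$; it is solvable in $2^k \cdot n^{\bigoh(1)}$ time by the textbook subset dynamic program that computes, for each $T \subseteq U$, the minimum number of sets needed to cover $T$. The overall running time is $2^k \cdot 4^p \cdot 2^k \cdot n^{\bigoh(1)} = 4^{k+p} \cdot n^{\bigoh(1)}$, which is \emph{FPT} in $p+k$.

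Correctness: for any optimal $D$ of size at most $p$, the branch taking $S_X = D \cap X$ and choosing, at each Phase~2 step, a witness from $D \cap V(G-X)$ that dominates the current vertex produces $S_Y^{(1)} \subseteq D \cap V(G-X)$, so $(D \cap V(G-X)) \setminus S_Y^{(1)}$ is a feasible candidate for the Phase~3 \textsc{Set Cover} that stays within budget. Conversely, any output is a valid dominating set by construction. The main obstacle to address is that vertices of $X^*$ may have arbitrarily large neighborhoods inside $V(G-X)$, so naively branching over their possible dominators is not FPT; deferring those constraints to a \textsc{Set Cover} step whose universe has size bounded by $k$ is precisely what keeps the whole procedure within FPT time in $p + k$.
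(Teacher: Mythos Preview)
Your proof is correct and takes a genuinely different route from the paper. The paper's algorithm does not guess $D\cap X$ upfront; instead it picks an undominated vertex $v\in L=V(G)\setminus X$ and branches over all of $\{v\}\cup N_G(v)$, which has size at most $k+4$ (three neighbours in $L$ plus up to $k$ in $X$), giving $(k+4)^p$ branches. When only vertices of $X$ remain undominated, the paper groups vertices of $L$ into at most $2^k$ equivalence classes by their neighbourhood in $X$, picks one representative per class, and branches over all small subsets of representatives together with $X$, contributing a factor of roughly $2^{pk}$.

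Your approach of first fixing $S_X=D\cap X$ (a $2^k$ guess) buys you a cheaper Phase~2: the branching at each undominated $v\in L$ is only $4$-way rather than $(k+4)$-way, since $X$-dominators are already accounted for. Deferring the remaining $X$-constraints to a \textsc{Set Cover} instance over a universe of size at most $k$ and solving it by subset DP in $2^k\cdot n^{\bigoh(1)}$ is more efficient than the paper's representative-enumeration step. Overall you obtain $4^{k+p}\cdot n^{\bigoh(1)}$ versus the paper's $(k+4)^p\cdot 2^{\bigoh(pk)}\cdot n^{\bigoh(1)}$, so your argument is both conceptually clean and asymptotically sharper in the parameter dependence.
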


\begin{proof}
Let $L=V(G)\setminus X$ and consider the following algorithm. We begin with $D=\emptyset$, and choose an arbitrary vertex $v\in L$ which is not yet dominated by $D$. We branch over the at most $k+4$ vertices $q$ in $\{v\}\cup N(v)$, and add $q$ to $D$. If $|D|=p$ and there still exists an undominated vertex in $G$, we discard the current branch; hence this procedure produces a total of at most $(k+4)^p$ branches.

Now consider a branch where $|D|<p$ but the only vertices left to dominate lie in $X$. For $a,b\in L$, we let $a\equiv b$ if and only if $N(a)\cap X=N(b)\cap X$. Notice that $\equiv$ has at most $2^k$ equivalence classes and that these may be computed in polynomial time. For each non-empty equivalence class of $\equiv$, we choose an arbitrary representative and construct the set $P$ of all such chosen representatives. We then branch over all subsets $Q$ of $P\cup X$ of cardinality at most $p-|D|$, and add $Q$ into $D$. Since $|P\cup X|\leq 2^k+k$, this can be done in time bounded by $\bigoh(2^{p\cdot k})$. Finally, we test whether this $D$ is a dominating set, and output the minimum dominating set obtained in this manner.

It is easily observed from the description that the running time is FPT. For correctness, from the final check it follows that any set outputed by the algorithm will be a dominating set. It remains to show that if there exists a dominating set of cardinality $p$, then the algorithm will find such a set. So, assume there exists a $p$-cardinality dominating set $D'$ in $G$. Consider the branch arising from the first branching rule obtained as follows. Let $v_1$ be the first undominated vertex in $L$ chosen by the algorithm, and consider the branch where an arbitrary $q\in D'\cap N(v_1)$ is placed into $D$. Hence, after the first branching, there is a branch where $D\subseteq D'$. Similarly, there exists a branch where $D\subseteq D'$ for each $v_i$ chosen in the $i$-th step of the first branching. If $D'=D$ after the first branching, then we are done; so, let $D'_1=D'\setminus D$ be non-empty. Let $D_1$ be obtained from $D'_1$ by replacing each $w\in D'_1$ by the representative of $[w]_\equiv$ chosen to lie in $P$. Since $D'$ dominates all vertices in $L$ and $D_1$ dominates the same vertices in $X$ as $D'_1$, it follows that $D^*=(D'\setminus D'_1)\cup D_1$ is also a dominating set of $G$. Furthermore, $|D^*|=|D'|$. However, since $D_1\subseteq P$ and $|D_1|\leq p-|D|$, there must exist a branch in the second branching which sets $Q=D_1$. Hence there exists a branch in the algorithm which obtains and outputs the set $D^*=D\cup D_1$.
\qed
\end{proof}}

\lv{\lemdomfpt}

\lv{\begin{theorem}}
\sv{\begin{theorem}}
\label{thm:opt-hard}
There exists an MSO formula $\varphi$ and a graph class $\HH$ characterized by a finite obstruction set such that $\MSOOPT{\leq}{\varphi}$ is \emph{FPT} parameterized by $\md^\HH$ but \emph{paraNP}-hard parameterized by $\wsn^\HH$.
\end{theorem}
\newcommand{\clopta}[0]{
\begin{claim}
$\MSOOPT{\leq}{\varphi}$ is FPT parameterized by the cardinality of a modulator to $\HH$.
\end{claim}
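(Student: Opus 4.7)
The plan is to apply a standard modulator-branching technique: given a modulator $Y$ of size $k = \md^\HH(G)$, branch over the $2^k$ possible subsets $S_Y \subseteq Y$ that could equal $S \cap Y$ in an optimal solution, and for each branch solve a polynomial-time residual problem on $G - Y \in \HH$. The whole argument rests on the fact that $\varphi$ and $\HH$ are constructed in the theorem precisely so that $\MSOOPT{\leq}{\varphi}$ is polynomial-time tractable on $\HH$ in a form robust to $\bigoh(1)$-sized vertex annotations; consequently the total running time will be $c^k \cdot 2^k \cdot n^{\bigoh(1)}$, which is FPT in $k$ alone, with no dependence on the input threshold $r$.

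First I would compute the modulator $Y$. Because $\HH$ is characterized by a finite obstruction set $\FF$, one can find $Y$ in time $c^k \cdot n^{\bigoh(1)}$ (with $c$ depending only on $\FF$) by exhaustive obstruction hitting: while $G$ contains some $F \in \FF$ as an induced subgraph, branch on which of the $|V(F)| = \bigoh(1)$ vertices of $F$ to delete. This is essentially Algorithm~\ref{alg:findingmodulator} specialized to the trivial equivalence in which every class is a singleton, and its correctness and running-time analysis is the same as in that algorithm.

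Next I would, for every candidate $S_Y \subseteq Y$ (there are $2^k$ of these), solve the following \emph{extension problem}: find the minimum $|S'|$, with $S' \subseteq V(G)\setminus Y$, such that $G \models \varphi(S_Y \cup S')$. The overall answer is YES iff the minimum of $|S_Y| + |S'|$ over all branches is at most $r$. To solve one extension problem, I would annotate each $v \in V(G)\setminus Y$ by the pair $\bigl(N(v)\cap S_Y,\; N(v)\cap (Y\setminus S_Y)\bigr)$; since $|Y| = k$ this yields an alphabet of at most $4^k$ annotation types, and the extension problem becomes an annotated variant of $\MSOOPT{\leq}{\varphi}$ on the $\HH$-graph $G - Y$. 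The specific $\varphi$ and $\HH$ exhibited in the theorem are engineered so that this annotated variant is polynomial-time solvable on $\HH$; this is the property that makes FPT-in-$\md^\HH$ possible while leaving room for paraNP-hardness in the weaker parameter $\wsn^\HH$.

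The main obstacle will be verifying this last property. Merely knowing that $\MSOOPT{\leq}{\varphi}$ is polynomial on unannotated graphs in $\HH$ is not in itself sufficient; the argument must check that the boundary information encoded in the annotations, whose alphabet is of size $\bigoh(4^k)$ and hence FPT-bounded, can be handled by a moderately adapted polynomial-time algorithm on $\HH$. For the concrete $\varphi$ and $\HH$ chosen by the authors, this should reduce to a standard lifting from a polynomial-time algorithm on $\HH$ (in the spirit of the list/precolored extensions used for Corollary~\ref{cor:coloring}) to its annotated counterpart. Once that step is checked, the modulator branching combined with per-branch polynomial-time solvability immediately yields the claimed FPT algorithm in parameter $\md^\HH$.
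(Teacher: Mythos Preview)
Your approach has a genuine gap at exactly the step you flag as ``the main obstacle.'' You assert that ``the specific $\varphi$ and $\HH$ exhibited in the theorem are engineered so that this annotated variant is polynomial-time solvable on $\HH$,'' but in fact the opposite is true. Here $\HH$ is the class of $C_4$-free graphs of maximum degree~$3$ and $\varphi(S)=\mathit{dom}(S)\vee\mathit{cyc}(S)$. The unannotated problem is trivially polynomial on $\HH$ only because every graph in $\HH$ is $C_4$-free, so $\mathit{cyc}(\emptyset)$ holds vacuously and $S=\emptyset$ is always a solution. As soon as you fix a partial choice $S_Y$ and ask for a minimum extension $S'\subseteq V\setminus Y$, the $\mathit{cyc}$ disjunct may be blocked (e.g.\ by a $C_4$ contained in $Y\setminus S_Y$), forcing you to solve the $\mathit{dom}$ disjunct. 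That amounts to Dominating Set on graphs with a bounded-size modulator to $\HH$, and Dominating Set is NP-hard on $\HH$ itself---this is precisely the fact used to establish paraNP-hardness in the companion claim. So the residual extension problem in your $2^k$ branches is \emph{not} polynomial-time solvable, and no ``standard lifting'' in the spirit of list colorings will rescue it.

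The paper's proof works by a completely different mechanism that your proposal does not capture: it compares the input threshold $r$ with the modulator size $k$. If $r\geq k$ then the modulator $X$ itself hits every $C_4$ (since $G-X$ is $C_4$-free), so $S=X$ satisfies $\mathit{cyc}(S)$ and the answer is YES. If $r<k$ then $r$ is bounded by the parameter, and one can afford FPT-in-$r$ branching for the $C_4$-hitting disjunct and an FPT-in-$(r+k)$ algorithm for the dominating-set disjunct (this is the content of the auxiliary Lemma on dominating sets with a small modulator to degree-$3$ graphs). The crux is that tractability comes from the disjunctive structure of $\varphi$ together with the $r$-versus-$k$ case split, not from any polynomial-time solvability of an annotated variant on $\HH$.
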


\begin{proof}[of Claim]
Let $(G=(V,E),r)$ be the input of $\MSOOPT{\leq}{\varphi}$ and $k$ be the cardinality of a modulator in $G$ to $\HH$. We begin by computing some modulator $X\subseteq V$ of cardinality $k$ in $G$ to $\HH$; this can be done in FPT time by a simple branching algorithm on any of the obstruction from $\FF$ located in $G$. Let $L=V\setminus X$.
Next, we compare $r$ and $k$, and if $r\geq k$ then we output YES. This is correct, since each $C_4$ in $G$ must intersect $X$ and hence setting $S=X$ satisfies $\varphi(S)$.

So, assume $r<k$. Then we check whether there exists a set $A$ of cardinality at most $r$ which intersects every $C_4$; this can be done in time $O^*(4^r)$ by a simple FPT branching algorithm. Next, we check whether there exists a dominating set $B$ in $G$ of cardinality at most $r$; this can also be done in FPT time by Lemma~\ref{lem:domfpt}.

Finally, if $A$ or $B$ exists, then we output YES and otherwise we output NO.
\lv{\hfill $\blacksquare$}
\sv{\qed}
\end{proof}}
\newcommand{\cloptb}[0]{
\begin{claim}
$\MSOOPT{\leq}{\varphi}$ is paraNP-hard parameterized by $\wsn^\HH(G)$.
\end{claim}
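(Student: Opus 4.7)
The plan is to reduce from \textsc{Dominating Set} on the class $\HH$ (that is, on $C_4$-free graphs of maximum degree $3$), which is NP-hard. Given an instance $(H, p_0)$ of this problem, I will build in polynomial time a graph $G$ and an integer $r$ such that $\wsn^\HH(G) \leq 1$ and such that $(G, r)$ is a yes-instance of $\MSOOPT{\leq}{\varphi}$ if and only if $H$ admits a dominating set of size at most $p_0$. Since $\varphi(S)$ asserts that $S$ hits every induced $C_4$ of $G$ or that $S$ dominates $G$, the design goal is to supply a gadget that makes the $C_4$-hitting disjunct unavailable to any small $S$ while keeping the dominating-set disjunct tightly coupled to $H$.

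Concretely I would set $r := p_0 + 2$ and $n := r + 2$, and take $G$ to be the disjoint union of $H$ with a fresh copy of the complete bipartite graph $K_{n,n}$, whose vertex set I denote by $Q$. Three properties then have to be verified. First, $\wsn^\HH(G) \leq 1$: the set $Q$ is an entire connected component of $G$ and is therefore a split-module, $K_{n,n}$ is distance-hereditary so $\rw(G[Q]) = 1$, and $G - Q = H \in \HH$; hence $\{Q\}$ is a $1$-\wsm{} to $\HH$. Second, every induced $C_4$ of $G$ lies in $Q$, and a short combinatorial argument shows that any $C_4$-transversal of $K_{n,n}$ must leave at most one vertex uncovered on each side and therefore has size at least $n - 1 > r$; hence no set of size at most $r$ can hit every $C_4$ of $G$, which forces $S$ to be a dominating set of $G$. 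Third, $K_{n,n}$ has domination number $2$, and since $G = H \sqcup Q$ is a disjoint union, $G$ admits a dominating set of size at most $r = p_0 + 2$ if and only if $H$ admits one of size at most $p_0$.

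The main obstacle I anticipate is confirming that \textsc{Dominating Set} is NP-hard on the precise class $\HH$ used in the first claim. If a direct reference for NP-hardness on $C_4$-free graphs of maximum degree $3$ is not at hand, I will first subdivide every edge of a classical NP-hard \textsc{Dominating Set} instance on cubic graphs: subdivision destroys all $C_4$'s, preserves maximum degree $3$, and shifts the minimum dominating set size by a controlled additive function of the number of subdivided edges, which still yields an NP-hard \textsc{Dominating Set} instance in $\HH$. Composing this preliminary transformation with the gadget construction above establishes NP-hardness of $\MSOOPT{\leq}{\varphi}$ on instances with $\wsn^\HH \leq 1$, and therefore paraNP-hardness parameterized by $\wsn^\HH$. \qed
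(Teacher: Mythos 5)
Your reduction is correct and reaches the same conclusion as the paper, but via a genuinely different gadget. The paper keeps the construction connected: it takes a $C_4$-free subcubic \textsc{Dominating Set} instance $G$, adds $|G|+2$ disjoint copies of $C_4$, an apex vertex $q$ adjacent to all of them, and a vertex $q'$ linking $q$ to $G$, yielding $\wsn^\HH \leq 2$; the large number of disjoint $C_4$'s kills the $cyc$-disjunct, and $q$ dominates the gadget at cost $1$. Your disjoint $K_{n,n}$ plays both roles at once -- its $C_4$'s cannot be hit by fewer than $n-1$ vertices, and it is dominated at cost $2$ -- and gives the slightly stronger bound $\wsn^\HH \leq 1$. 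Both reductions start from the same NP-hardness result on $C_4$-free graphs of maximum degree $3$ (the paper cites Alekseev, Korobitsyn, and Lozin), so your construction is a clean and arguably simpler alternative. One wording slip: a $C_4$-transversal of $K_{n,n}$ must leave at most one vertex uncovered on \emph{at least one} side (not on each side); the bound $|S|\geq n-1 > r$ is unaffected.

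Two caveats. First, your fallback argument for the NP-hardness of \textsc{Dominating Set} on $\HH$ -- subdividing every edge of a cubic instance -- does not work as stated: the domination number of the subdivision graph is not a controlled additive shift of the original (already $\gamma(K_3)=1$ while $\gamma(S(K_3))=\gamma(C_6)=2$, and no uniform formula in $|E|$ holds). You should simply rely on the cited hardness result, as the paper does. Second, you invoke the fact that an entire connected component $Q$ is a {\sm}; under the paper's definition this corresponds to the bipartition $\{Q,\emptyset\}$ of that component, which is a degenerate case (the paper only explicitly grants $V$ and $\emptyset$ the status of \sm s). If one objects, replace $Q$ by $Q\setminus\{v\}$ for an arbitrary $v\in Q$: this is an honest split of the component, $G-(Q\setminus\{v\})$ is $H$ plus an isolated vertex and hence still in $\HH$, and $\rw(K_{n,n}-v)=1$, so the bound $\wsn^\HH(G)\leq 1$ survives.
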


\begin{proof}[of Claim]
It is known that the \textsc{Dominating Set} problem, which takes as input a graph $G$ and an integer $j$ and asks to find a dominating set of size at most $j$, is NP-hard on $C_4$-free graphs of degree at most $3$~\cite{AlekseevKorobitsynLozin04}. We use this fact as the basis of our reduction. Let $(G,j)$ be a $C_4$-free instance of \textsc{Dominating Set} with degree at most $3$. Then we construct $G'$ from $G$ by adding $(|G|+2)$-many copies of $C_4$, a single vertex $q$ adjacent to every vertex of every such $C_4$, and a single vertex $q'$ adjacent to $q$ and an arbitrary vertex of $G$. It is easy to check that $\wsn^\HH(G')\leq 2$.

We claim that $(G,j)$ is a YES-instance of \textsc{Dominating Set} if and only if $(G',j+1)$ is a yes-instance of $\MSOOPT{\leq}{\varphi}$. Indeed, assume there exists a dominating set $D$ in $G$ of cardinality $j$. Then the set $D\cup \{q\}$ is a dominating set in $G'$, and hence satisfies $\varphi$. 

On the other hand, assume there exists a set $D'$ of cardinality at most $j+1$ which satisfies $\varphi$. If $j+1\geq |G|+2$ then clearly $(G,j)$ is a YES-instance of \textsc{Dominating Set}, so assume this is not the case. But then $D'$ cannot intersect every $C_4$, and hence $D'$ must be a dominating set of $G'$ of cardinality at most $j+1$. But this is only possible if $q\in D'$.  Furthermore, if $q'\in D'$, then replacing $q'$ with the neighbor of $q'$ in $G$ is also a dominating set of $G'$. Hence we may assume, w.l.o.g., that $D'\cap V(G)$ is a dominating set of cardinality at most $j$ in $V(G)$. Consequently, $(G,j)$ is a YES-instance of \textsc{Dominating Set} and the proof is complete.
\lv{\hfill $\blacksquare$}
\sv{\qed}
\end{proof}}
\lv{\begin{proof}}
To prove Theorem~\ref{thm:opt-hard}, we let $dom(S)$ express that $S$ is a dominating set in $G$, and let $cyc(S)$ express that $S$ intersects every $C_4$ (cycle of length $4$). Then we set $\varphi(S)=dom(S)\vee cyc(S)$
and let $\HH$ be the class of $C_4$-free graphs of degree at most $3$ (obtained by letting the obstrucion set $\FF$ contain $C_4$ and all $5$-vertex supergraphs of $K_{1,4}$).
\lv{\clopta}
\lv{\cloptb}
\lv{\qed}
\lv{\end{proof}}


We conclude with two remarks on Theorem~\ref{thm:opt-hard}. On one hand, the fixed parameter tractability of LinEMSO traditionally follows from the methods used for FPT MSO model checking, and in this respect the theorem is surprising. But on the other hand, our parameters are strictly more general than rank-width and hence one should expect that some results simply cannot be lifted to this more general setting.

\bibliographystyle{abbrv} \bibliography{literature}

\end{document}